\newtheorem{scheme}{Scheme}
\newtheorem{lemma}{Lemma}
\newtheorem{theorem}{Theorem}
\newtheorem{definition}{Definition}
\newtheorem{corollary}{Corollary}[theorem]
\newtheorem{obs}{Observation} 
\newtheorem{clm}{Claim} 
\title{Nucleation-free $3D$ rigidity}
\author{
Jialong~Cheng $^\star$,
Meera~Sitharam \thanks{CISE, University of Florida, Gainesville, FL, USA. ({\tt {jicheng,sitharam}@cise.ufl.edu})},
Ileana~Streinu \thanks{Computer Science Department, Smith College, Northampton, MA, USA.({\tt istreinu@smith.edu,streinu@cs.smith.edu})}
}
\begin{document}

\maketitle


\def\R{\mathscr{R}_k}
\def\P{\mathscr{P}}
\def\QED{\ensuremath{{\square}}}
\def\markatright#1{\leavevmode\unskip\nobreak\quad\hspace*{\fill}{#1}}
\def\a {\textbf{a}}
\def\b {\textbf{b}}
\def\A {\textbf{u}}
\def\B {\textbf{v}}
\def\c {\textbf{c}}
\def\p {\mathbf{p}}
\def\q {\mathbf{q}}
\def\I {\mathscr{I}}
\def\X {\mathcal{X}}





\begin{abstract}
When all non-edge distances of a graph realized in $\mathbb{R}^{d}$ as a {\em bar-and-joint framework} are generically {\em implied} by the bar (edge) lengths, the graph is said to be {\em rigid} in $\mathbb{R}^{d}$. For $d=3$, characterizing rigid graphs, determining implied non-edges and {\em dependent} edge sets remains an elusive, long-standing open problem. 

One obstacle is to determine when implied non-edges can exist without non-trivial rigid induced subgraphs, i.e., {\em nucleations}, and how to deal with them.

In this paper, we give general inductive construction schemes and proof techniques to generate {\em nucleation-free graphs} (i.e., graphs without any nucleation) with implied non-edges. As a consequence, we obtain (a) dependent graphs in $3D$ that have no nucleation; and (b) $3D$ nucleation-free {\em rigidity circuits}, i.e., minimally dependent edge sets in $d=3$. It additionally follows that true rigidity is strictly stronger than a tractable approximation to rigidity given by Sitharam and Zhou \cite{sitharam:zhou:tractableADG:2004}, based on an inductive combinatorial characterization. 

As an independently interesting byproduct, we obtain a new inductive construction for independent graphs in $3D$. Currently, very few such inductive constructions are known, in contrast to $2D$.

\end{abstract}

\pagestyle{myheadings}
\thispagestyle{plain}
\markboth{J Cheng, M Sitharam and I Streinu}{Nucleation-free $3D$ rigidity}


\section{Introduction}
\label{sec:introduction}
{\bf Combinatorial rigidity in $3D$: obstacles.}
A {\em bar-and-joint framework}, or {\em framework} $G(\p)$ in $\mathbb{R}^{d}$ is a graph $G=(V, E)$ together with a mapping of its vertices to a set of points $\p$ in $\mathbb{R}^{d}$. Intuitively, we say a framework is {\em rigid} if in a small enough neighborhood of $G(\p)$, every framework $G(\q)$ with the same edge lengths as $G(\p)$ is {\em congruent} to $G(\p)$, i.e., it additionally has the same non-edge lengths as $G(\p)$.
The {\em rigidity matrix} of a framework $G(\p)$ in $\mathbb{R}^{d}$ where $G$ has $n$ vertices and $m$ edges is a matrix with $m$ rows and $nd$ columns. Each row corresponds to an edge and each column corresponds to a coordinate of a vertex. A framework $G(\p)$ is {\em generic} in $\mathbb{R}^{d}$ if its rigidity matrix has maximum rank over all frameworks of $G$ in $\mathbb{R}^{d}$ in which case, we refer to it as a {\em generic rigidity matrix} for the graph $G$. A set of edges $E^\prime \subseteq E$ is {\em independent} if their corresponding rows in the rigidity matrix are linearly independent in a generic framework. A graph is {\em rigid} (resp. {\em flexible}) if it has a generic framework that is rigid (resp. not rigid). A graph is {\em minimally rigid} if it is rigid, and the removal of {\em any} edge makes it flexible. 

\medskip\noindent
Finding combinatorial characterizations for generic rigidity and independence of for $d\ge 3$ is an elusive, long-standing open problem. James Clerk Maxwell's work from the 19th century gives a necessary condition for independence, resp. minimal rigidity:

\medskip
\noindent
{\bf Maxwell's Counting Condition \cite{maxwell:equilibrium:1864}} 
{\em 
A graph $G$ satisfies Maxwell's counts in $\mathbb{R}^{d}$ if
all of its subgraphs $G^\prime = (V', E')$ contain at most $|E^\prime|\leq d|V^\prime|-{d+1 \choose 2}$ edges. In addition, {\em minimally rigid} graphs in $\mathbb{R}^{d}$ must have exactly $|E|=d|V|-{d+1 \choose 2}$ edges. Here ${d+1 \choose 2}$ is the number of degrees-of-freedom ({\em dof} for short, which is the minimum number of edges needed to be rigid) of a rigid body in $\mathbb{R}^{d}$.
}

\medskip
\noindent
In $2D$, Laman's Theorem shows that Maxwell's counting condition is sufficient, with several other equivalent characterizations \cite{recski:networkRigidity:1984,lovasz:yemini:rigidity:1982,tay:proofLaman:1993} leading to efficient algorithms.

\medskip
\noindent
{\bf Laman's Theorem \cite{laman:rigidity:1970}}
{\em
A graph $G=(V,E)$ is independent in $2D$ if and only if
every subgraph $G^\prime=(V^\prime, E^\prime)$ of $G$ satisfies the edge-sparsity counts $|E^\prime| \leq 2|V^\prime|-3$. If, in addition, $|E| = 2|V|-3$, then $G$ is minimally rigid.}

\medskip
\noindent 
However, a first obstacle to combinatorial characterization of rigidity in dimensions $d\ge 3$ is that Maxwell's counting condition is not
sufficient, i.e., there are graphs that satisfy Maxwell's counting condition but are dependent.

\medskip
\noindent 
{\bf Note:} the remainder of the paper deals with $d=3$.

\medskip
\noindent 
Graphs that satisfy Maxwell's condition in $3D$ are called {\em $(3, 6)$-sparse} (sometimes called {\em Maxwell-independent}).

\medskip
\noindent
A classical example illustrating insufficiency of Maxwell's counting condition in $3D$ is the so-called {\it double-banana
graph} in Fig.~\ref{fig:doubleBanana}. It satisfies
Maxwell's counting condition, but the graph is clearly flexible, and
dependent. The reason is that since each {\em banana} (a $K_5$ with 1 edge missing) is rigid as an induced subgraph, the distance along the non-edge $\{a, b\}$ shared by the $2$ bananas is determined by each banana. When the distance along a non-edge $\{a, b\}$ is determined by a graph $G$, i.e., linearly dependent on the rows of $G$'s generic rigidity matrix, then $\{a, b\}$ is called an {\em implied non-edge} \footnote{ The concept of {\em implied non-edge} is similar to but weaker than the notion of {\em globally linked
pair} from \cite{jackson:jordan:szabadka:GloballyLinked:2006}, which refers to a pair of vertices whose distance is generically fixed by the graph. The distance associated with an implied non-edge is in contrast
generically restricted to finitely many values. An alternative name
for an implied non-edge could be a {\em linked pair}.} in $G$. In this case, since $\{a, b\}$ is implied by both bananas, the graphs $G$ is {\em dependent}.

\begin{center}
\begin{figure}
\begin{center}
\includegraphics[width=0.5\textwidth]{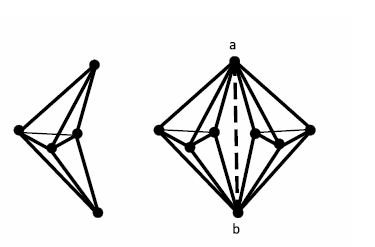} 
\end{center}
\caption{On the left is the ``banana'' graph, which is a $K_5$ (complete graph on five vertices) with one edge missing; on the right is a double-banana, which consists of two bananas gluing together along their respective non-edge and the non-edge is implied.}\label{fig:banana}\label{fig:doubleBanana}\label{fig:tripleBanana}
\end{figure}
\end{center}

\medskip\noindent
Another example of a dependent graph illustrating insufficiency of Maxwell's counting condition, i.e., being dependent while satisfying Maxwell's counting condition, is due to Crapo (Fig.~\ref{fig:hinge}). The graph contains a so-called {\em hinge}, i.e., a pair of vertices common to at least two {\em rigid components}, where a rigid component is a maximal subset $S$ of vertices of a graph $G$ such that the non-edges in $S$ are implied by the edges in $G$, possibly outside the graph induced by $S$. For example, the $\{a, b\}$ pair in Fig.~\ref{fig:hinge} is a hinge. Again, here hinges are implied non-edges that are the causes of the insufficiency of Maxwell's condition. 

\begin{center}
\begin{figure}
\begin{center}
\scalebox{0.4}[0.4]{\includegraphics{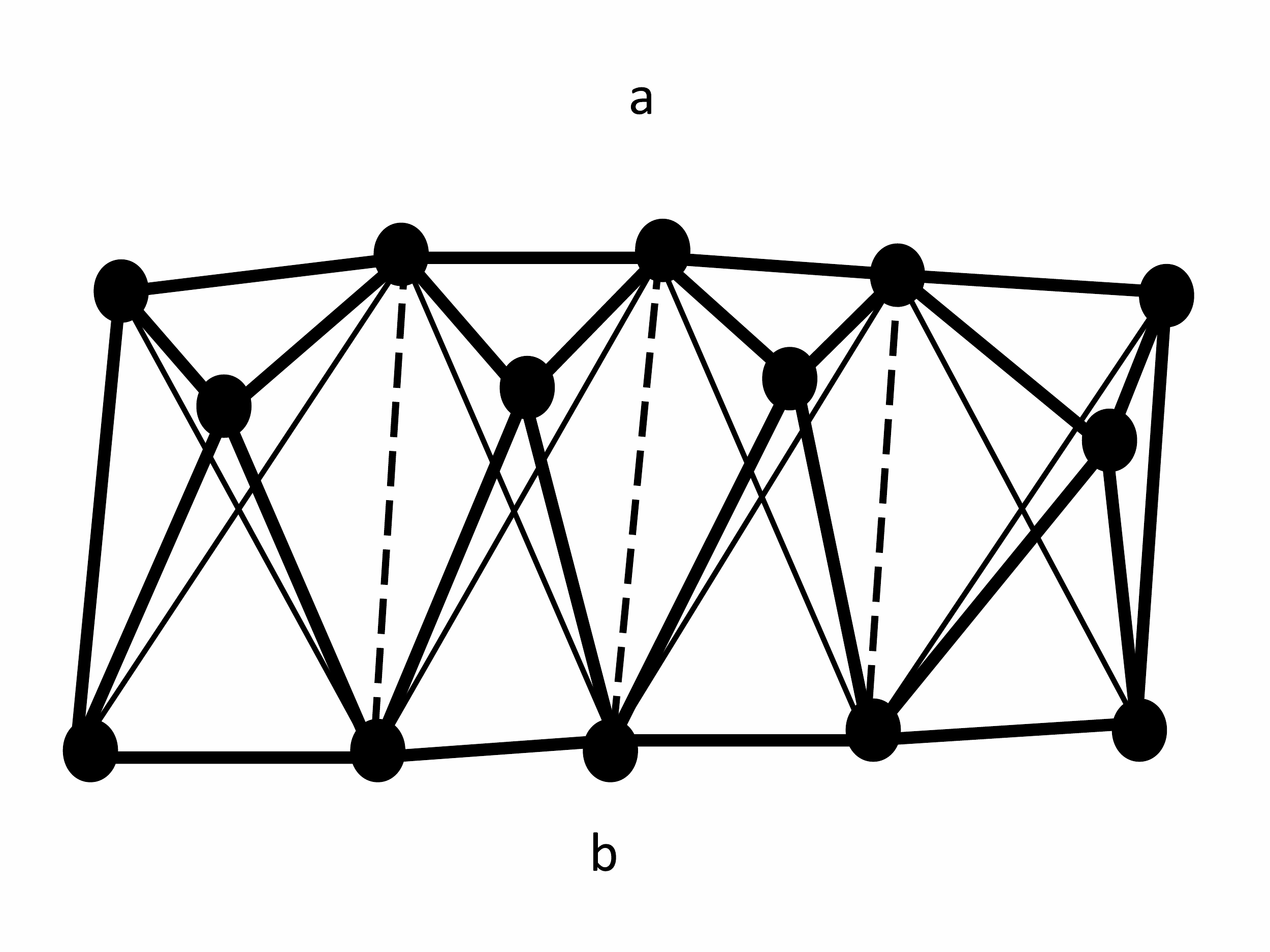}}
\end{center}
\caption{Crapo's graph with a ``hinge" structure: $\{a, b\}$ is a hinge, since it is shared by two rigid components.}\label{fig:hinge}
\end{figure}
\end{center}

\medskip \noindent
All known examples of insufficiency of Maxwell's counting condition have implied non-edges
and implied non-edges play an important role in the insufficiency of Maxwell's counting condition.

\medskip \noindent
However, an important observation in the above examples is that although some implied non-edges lie inside rigid components as opposed to rigid induced subgraphs, these ``troublesome'' double-implied non-edges exist due to the presence of a rigid induced subgraph somewhere in the graph. I.e., the above examples satisfy the following property:

\medskip
\noindent
{\bf Nucleation property.} A graph $G$ has the {\em nucleation
property} if it contains a non-trivial rigid subgraph, i.e., a rigid subgraph in isolation, which we call a {\em rigid nucleus}. Here, we use ``trivial'' to refer to graphs with $4$ or fewer vertices. If a graph does not have any nucleus, we call it {\em nucleation-free}.

\medskip
\noindent
Note that in $2D$, every implied non-edge in fact lies inside a nucleation, as a straightforward consequence of Laman's Theorem.

\medskip
\noindent
In $3D$, provided a graph has nucleation property, there is a potential method of overcoming the obstacle of implied non-edges using $(3, 6)$-sparsity for detecting dependence \cite{sitharam:zhou:tractableADG:2004}: recursively identify nucleations, add non-edges in those graphs to complete them, and then check $(3, 6)$-sparsity in other parts of the graph.

\medskip
\noindent
However, for nucleation-free graphs, the approach in \cite{sitharam:zhou:tractableADG:2004} collapses to simple $(3, 6)$-sparsity check, leading to the second obstacle to the problem of combinatorial characterizations of $3D$ rigidity. In particular, in a nucleation-free {\em dependent} graph, the approach in \cite{sitharam:zhou:tractableADG:2004} would fail in that it cannot detect the implied non-edges, since it relies on a nucleation as a starting point. The existence of nucleation-free dependent graphs indicates that a gap exists between {\em module-rigidity} proposed in \cite{sitharam:zhou:tractableADG:2004} and true rigidity. Thus, to better understand rigidity, we need to understand the obstacle posed by nucleation-free graphs with implied non-edges. As a first step towards this goal, it is natural to ask: 

\medskip\noindent
{\bf Problem ($\star$) : General inductive construction schemes for nucleation-free  graphs with implied non-edges.} {\em How do we construct general families of nucleation-free graphs that have implied non-edges?}

\medskip
\noindent
{\bf Main Contributions.}
We provide general inductive construction schemes and proof techniques answering Problem ($\star$). We give systematic classifications of proof ingredients needed for our proof techniques, and those ingredients can be generalized, mixed and matched to generate and validate construction schemes. In addition, we give several example graphs that satisfy the starting graph requirements for our general inductive construction schemes for Problem ($\star$).

As a byproduct of one of our schemes, we find an inductive method to construct nucleation-free, independent graphs. It should be noted that there are very few general inductive construction for independent graphs in $3D$. The only known ones are vertex split and Henneberg constructions(\cite{bib:TayWhiteley85}, \cite{WhiteleyVertexSplitting1990}), which are distinctly different and cannot generally mimic our inductive construction.

\medskip
\noindent
{\bf Further consequences.} 
As mentioned earlier, we show a gap exists between {\em module-rigidity} proposed in \cite{sitharam:zhou:tractableADG:2004} and true rigidity. 
The next step for a better understanding of rigidity should be to find notions that detect implied non-edges and dependence in nucleation-free graphs.
As another consequence of our work, we show the first general families of examples of flexible $3D$ rigidity circuits with no nucleations. In contrast, all rigidity circuits in $2D$ are rigid. Until now, flexible rigidity circuits without nucleation were only available in $4D$, but not in $3D$ (see discussions earlier and in \cite{graver:servatius:rigidityBook:1993}): {\it The only known non-rigid circuits in the $3D$ rigidity matroid arise from amalgamations of circuits forced by Maxwell's counting condition.} 
 Our result implies, in addition, that Lovasz' characterization \cite{lovasz:yemini:rigidity:1982} of $2D$ rigidity via {\em coverings} cannot be extended to $3D$.

\medskip
\noindent
{\bf Note:} Nucleation-free rigidity circuits with implied non-edges have been conjectured and written down by many (\cite{taybar:1993}, \cite{JacksonJordanrank:2006}). However, to the best of our knowledge we are the first to give proofs. In particular, in \cite{taybar:1993}, Tay claimed a class of flexible rigidity circuits without any nuclei. One of his examples, {\em $n$-butterflies}, in which he claimed existence of implied non-edges, is the same as our warm-up example graphs, {\em ring of roofs}. But his proof attempt has a serious gap which we will describe in detail in Appendix \ref{sec:Taycounter}.

\medskip
\noindent
{\bf Organization of the paper.} In Section \ref{sec:first}, we give a warm-up example of graphs with no nucleation and give two different proof techniques to show the existence of implied non-edges. In Section \ref{sec:natural}, we give {\em straightforward extensions} of our warm-up construction and proof techniques.

In Section \ref{sec:indep}, we give a significantly more powerful inductive construction scheme, {\em roof-addition}, for nucleation-free (independent) graphs with implied non-edges. The independence is shown in Theorem \ref{thm:inductive} and is a stand-alone result for inductive construction of independent graphs. The other two properties (nucleation-free and presence of implied non-edges) are shown in Theorem \ref{thm:secondInductiveApproach}. In particular, in Observation \ref{obs:other} we list and show the properties of several example {\em starting graphs} for the roof-addition scheme. In Theorem \ref{thm:starting} how we can inductively use starting graphs to generate nucleation-free, independent graphs with implied non-edges.

In Section \ref{sec:dependent} (Theorem \ref{thm:joinIndep}), we show how to obtain nucleation-free (minimally) dependent graphs.
Then we exhibit a family of graphs in Corollary \ref{cor2} to show that the algorithm of \cite{sitharam:zhou:tractableADG:2004} i.e., characterization of module-rigidity is distinct from true rigidity. In Section \ref{sec:conclusions} we give open problems.



\section{Warm-up Example: Rings of Roofs}\label{sec:first}
In this section, we give a warm-up family of examples to motivate construction schemes and proof techniques for Problem ($\star$). The first proof technique, which we call {\em flex-sign}, is to {\em directly} show the existence of implied non-edges; the second proof technique, which we call {\em rank-sandwich} technique, needs the independence of the graph. There are two ingredients in the rank-sandwich proof technique that
generate stand-alone results: (1) show the independence of the graph (2) show the rank upper bound of the graph with implied non-edges added (which in turn can be shown using two different methods). 

As noted earlier, \cite{taybar:1993} claims the existence of implied non-edges in this family essentially without proof. The {\em basic building block} for this warm-up family of nucleation-free graphs is a {\em roof} (called a ``butterfly'' in \cite{taybar:1993}).

\medskip
\noindent
{\bf Roof.} A \emph{roof} is a graph obtained from $K_5$, the complete graph on five vertices, by deleting two non-adjacent edges. These are called the {\em hinge non-edges} of a roof. A 3D realization of a roof is illustrated in Fig.~\ref{fig:roof}. In the terminology of \cite{streinu:whiteley:origami:2005}, this is a {\em single-vertex origami} over a $4$-gon.

\medskip
\noindent
{\bf Example (Rings of roofs)}: A {\em ring} of roofs is constructed as follows. Exactly two roofs share a hinge non-edge, as in Fig.~\ref{fig:twoRoofs}. Each roof then shares hinge non-edges with at most two others. Such a chain of seven or
more roofs is closed back into a ring, as depicted 
in Fig.~\ref{fig:ring}.

We will denote by $\mathscr{R}_k$ a ring of $k$ roofs. $\mathscr{R}_k$ is rigid for $k\le 6$. It is easy to see that there is no nucleus in $\mathscr{R}_k$ for $k\ge 7$. We will show that implied non-edges exist in $\mathscr{R}_k$ for any $k\ge 7$, using two different proof techniques.

\begin{center}
\begin{figure}[!htbp]
\centering
\scalebox{0.2}[0.2]{\includegraphics{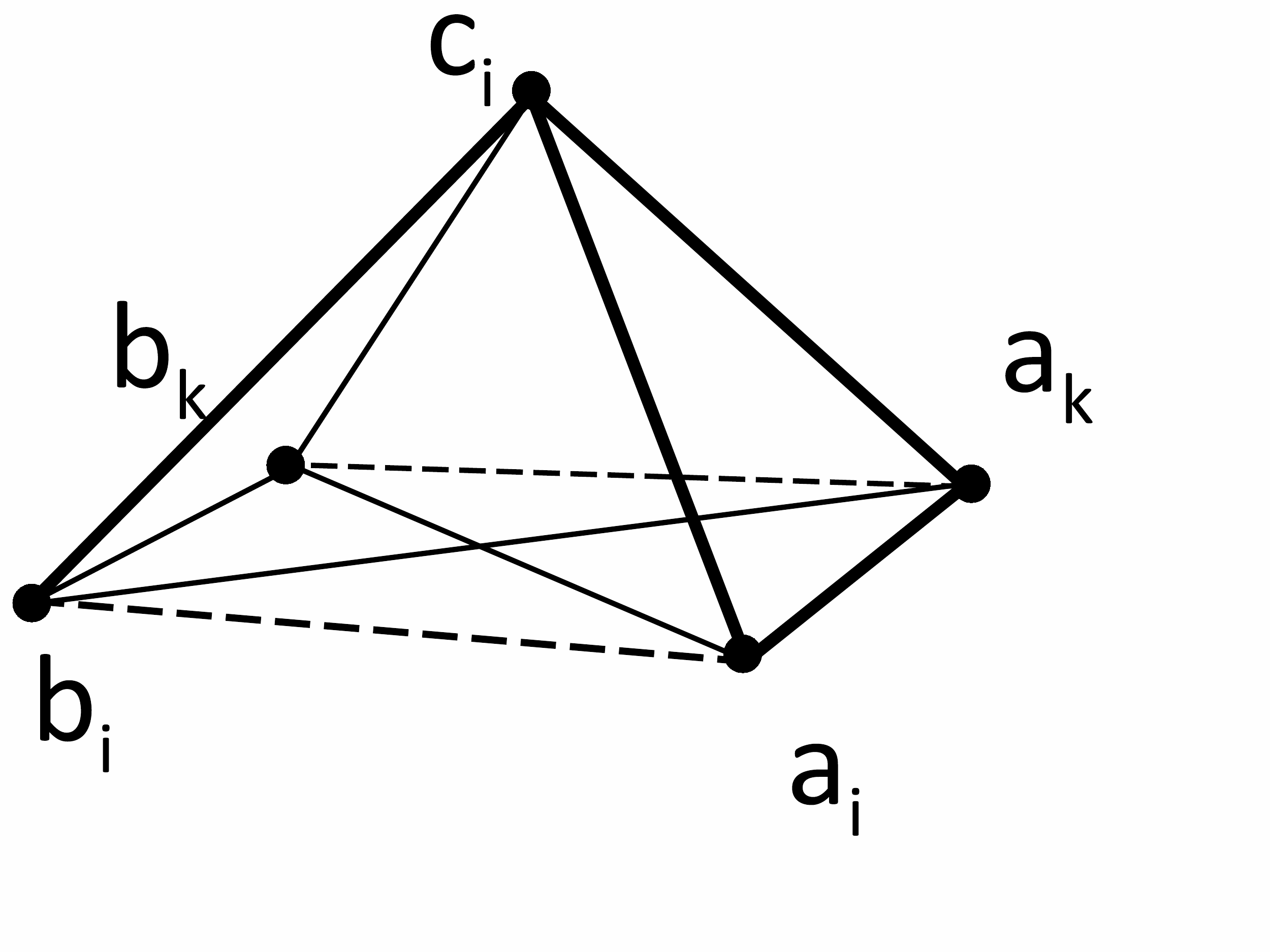}\hspace{-2cm}\includegraphics{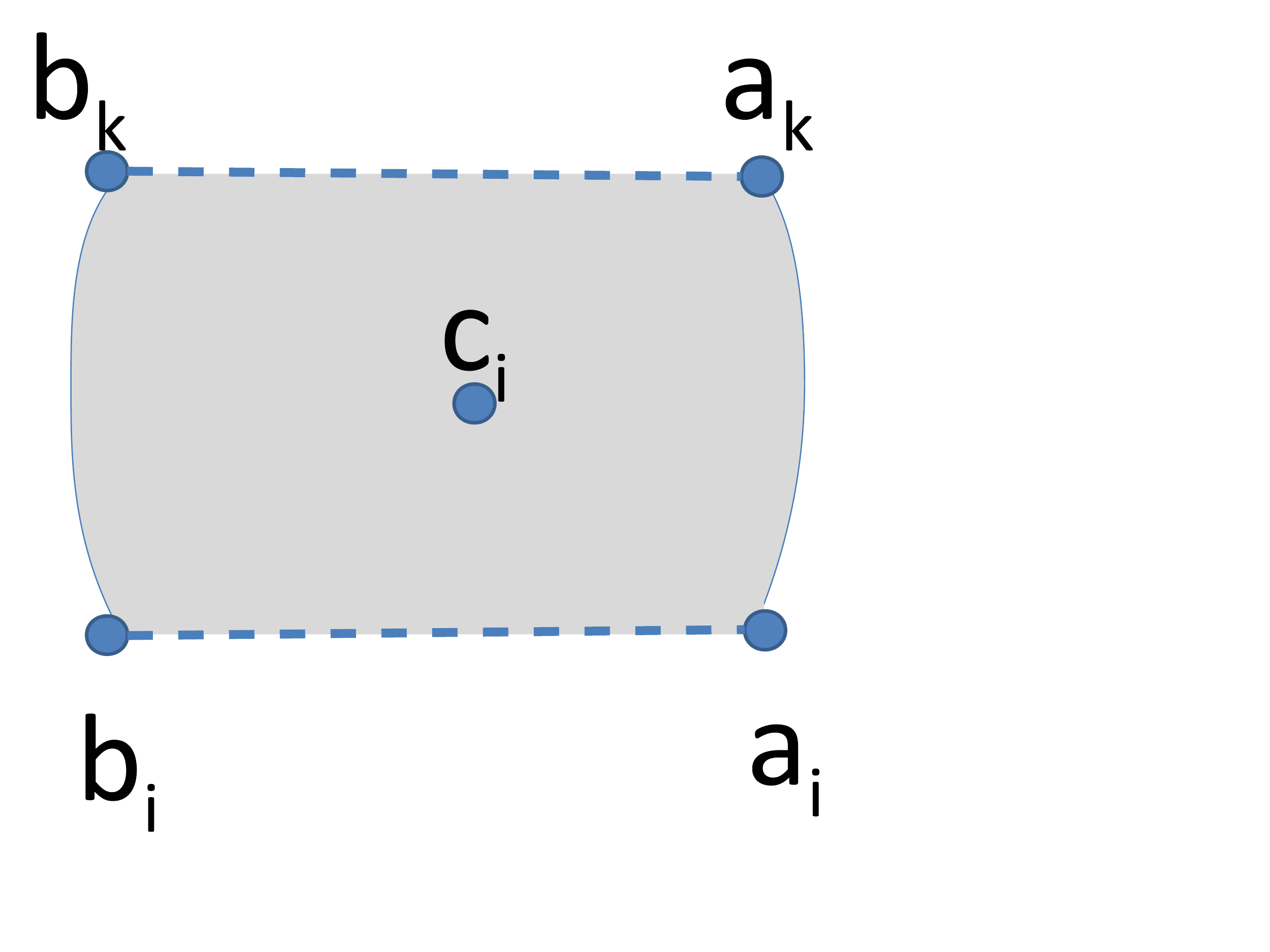}}\caption{A roof: a $K_5$ (complete graph on $5$ vertices) with two non-adjacent edges missing. On the left we give the geometric structure of the roof in space. On the right we give a schematic of a roof: the bar-and-joint structure is not shown for clarity, but the positions hinges are schematically depicted with $a_i$, $b_i$, $a_{k}$, $b_{k}$. The vertex $c_i$ may or may not be shown in a schematic use of the roof later.}\label{fig:roof}
\end{figure}
\end{center}

\begin{center}
\begin{figure}[!htbp]
\begin{center}
\scalebox{0.2}[0.2]{\hspace{-2cm}\includegraphics{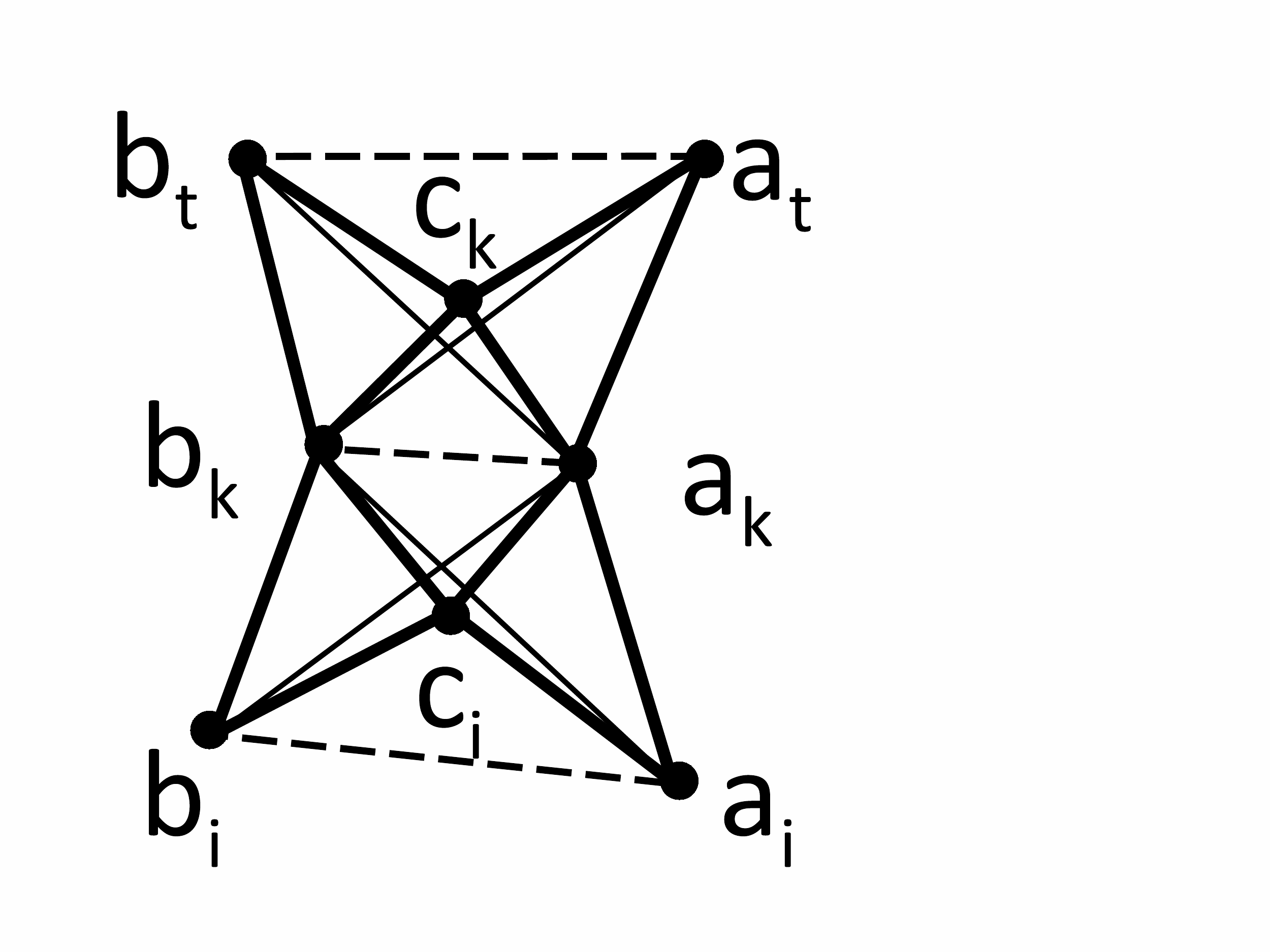}\hspace{-6cm}\includegraphics{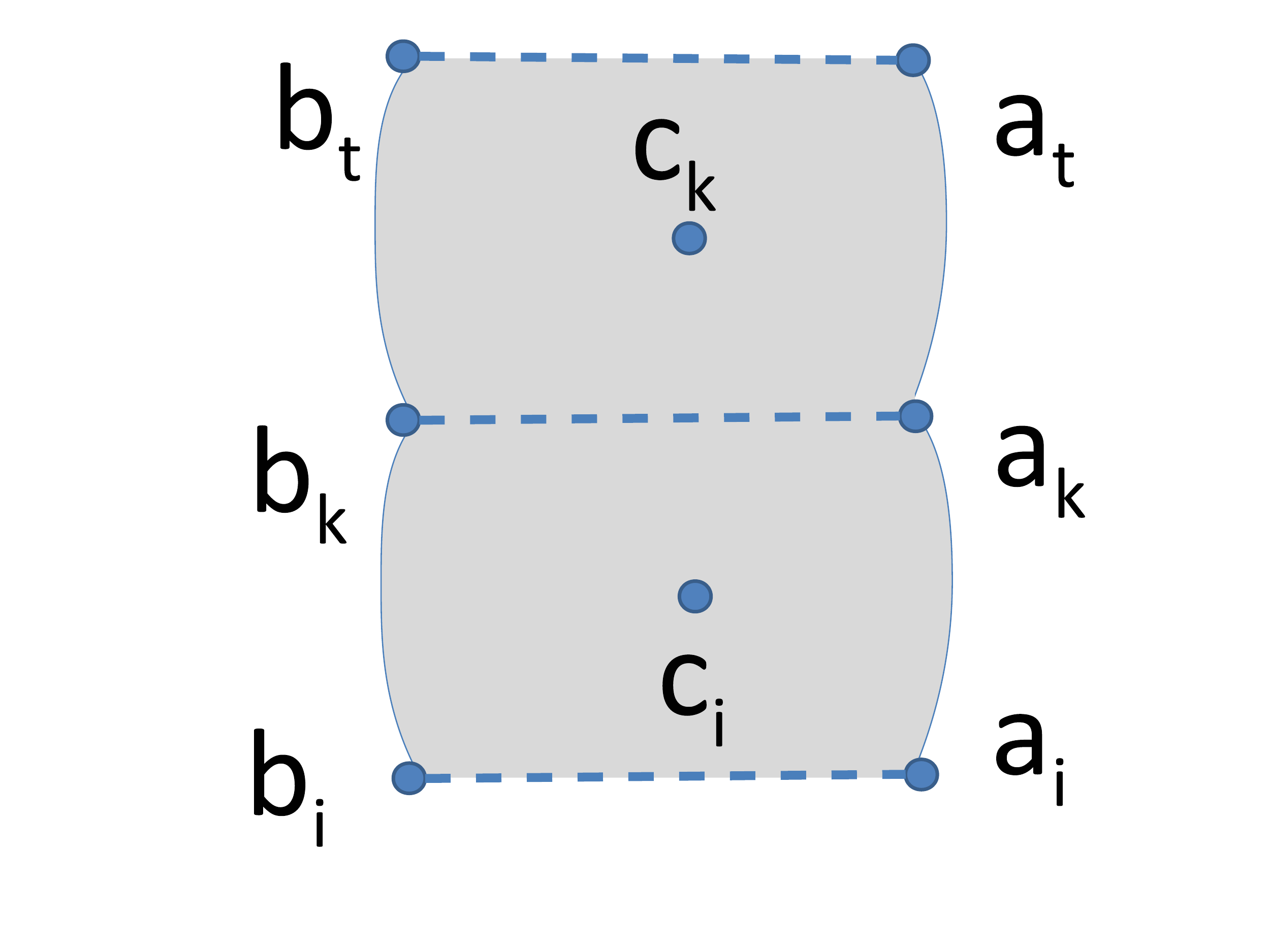}}
\end{center}
\caption{Connecting two roofs. On the left is the real geometric connection, while on the right is a schematic showing how the two roofs are connected via a hinge}\label{fig:twoRoofs}
\end{figure}
\end{center}

\begin{center}
\begin{figure}[!htbp]
\begin{center}
\scalebox{0.4}[0.4]{
\begin{tabular}{c}
\includegraphics[width=1.2\textwidth]{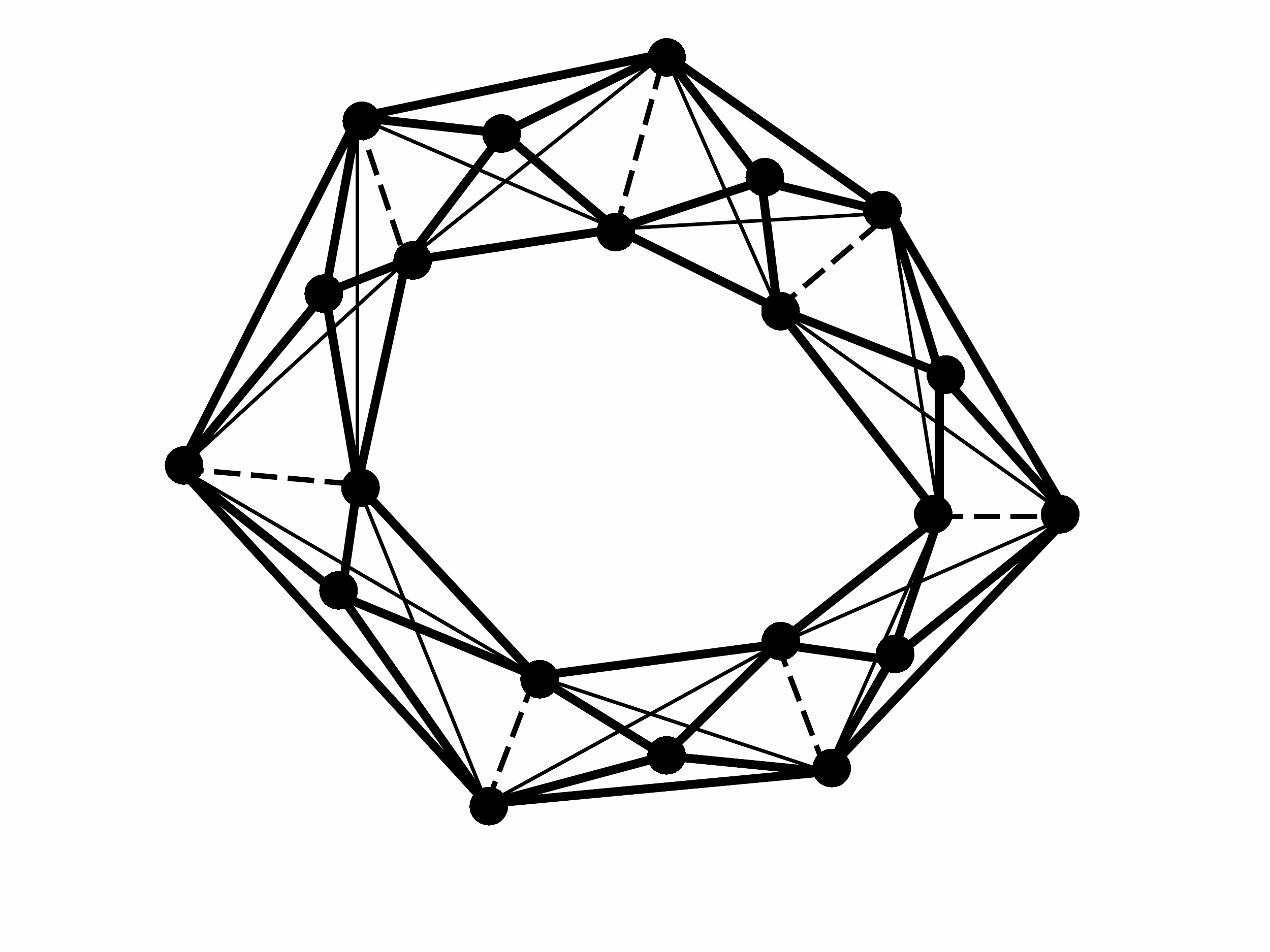}
\includegraphics[width=1.2\textwidth]{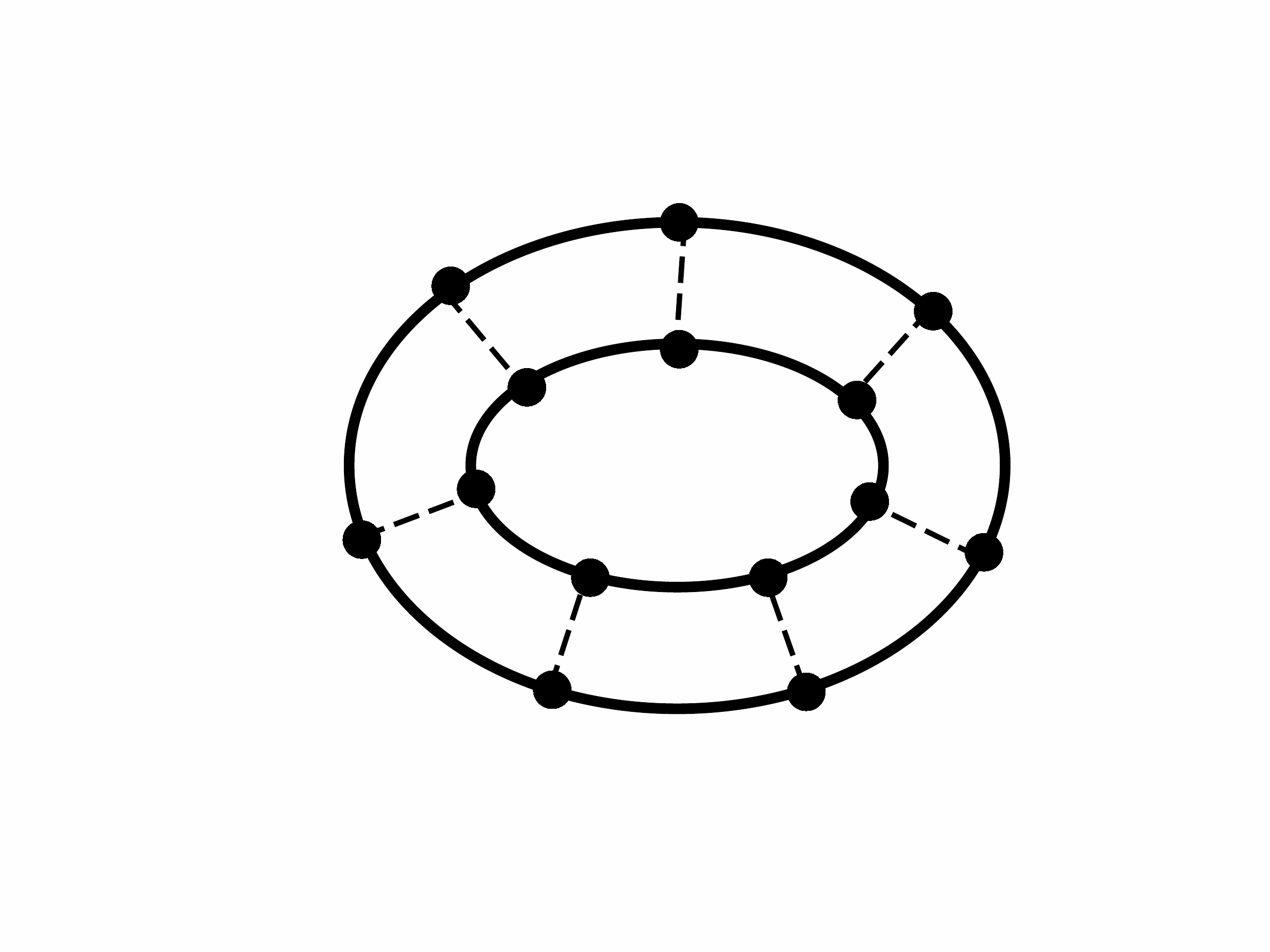}
\end{tabular}}
\end{center}
\caption{A ring of $7$ roofs: connecting $7$ roofs in the manner shown in Fig. ~\ref{fig:twoRoofs} and we can see each roof can
be thus connected to at most two others. Such a chain of seven roofs is closed back into the ring shown here. On the left is the geometric structure of the ring and on the right is the schematic of the ring.}\label{fig:ring}
\end{figure}
\end{center}

\subsection{Flex-sign technique for existence of implied non-edges in rings of roofs}\label{sec:direct}
In this section, we give our first proof technique, called {\em flex-sign technique}, for the existence of implied non-edges in rings of roofs. This proof technique relies on the infinitesimal properties of single-vertex origamis from \cite{streinu:whiteley:origami:2005}, together with expansion/contraction properties of convex polygons \cite{connelly:rigidityAndEnergy:1982} and pointed pseudo-triangulations \cite{streinu:pseudoTriang:dcg:2005}, applied to the simplest possible case of a 4-gon. These results show that the roof realizations in the {\em ring framework $\mathscr{R}_k(\p)$} from Fig.~\ref{convex} have the expansion/contraction properties stated in the caption. We show that the hinge non-edges are implied in any ring framework $\mathscr{R}_k(\p)$ of $k\geq 7$ roofs consisting of $1$ convex and $k-1$ pointed pseudo-triangular roofs. This expansion/contraction property is similar to the {\em squeeze} in \cite{taybar:1993} (see Appendix \ref{sec:Taycounter}), i.e., a non-trivial flex or motion along the hinge non-edge.

\begin{center}
\begin{figure}
\begin{center}
\scalebox{0.2}[0.2]{\includegraphics{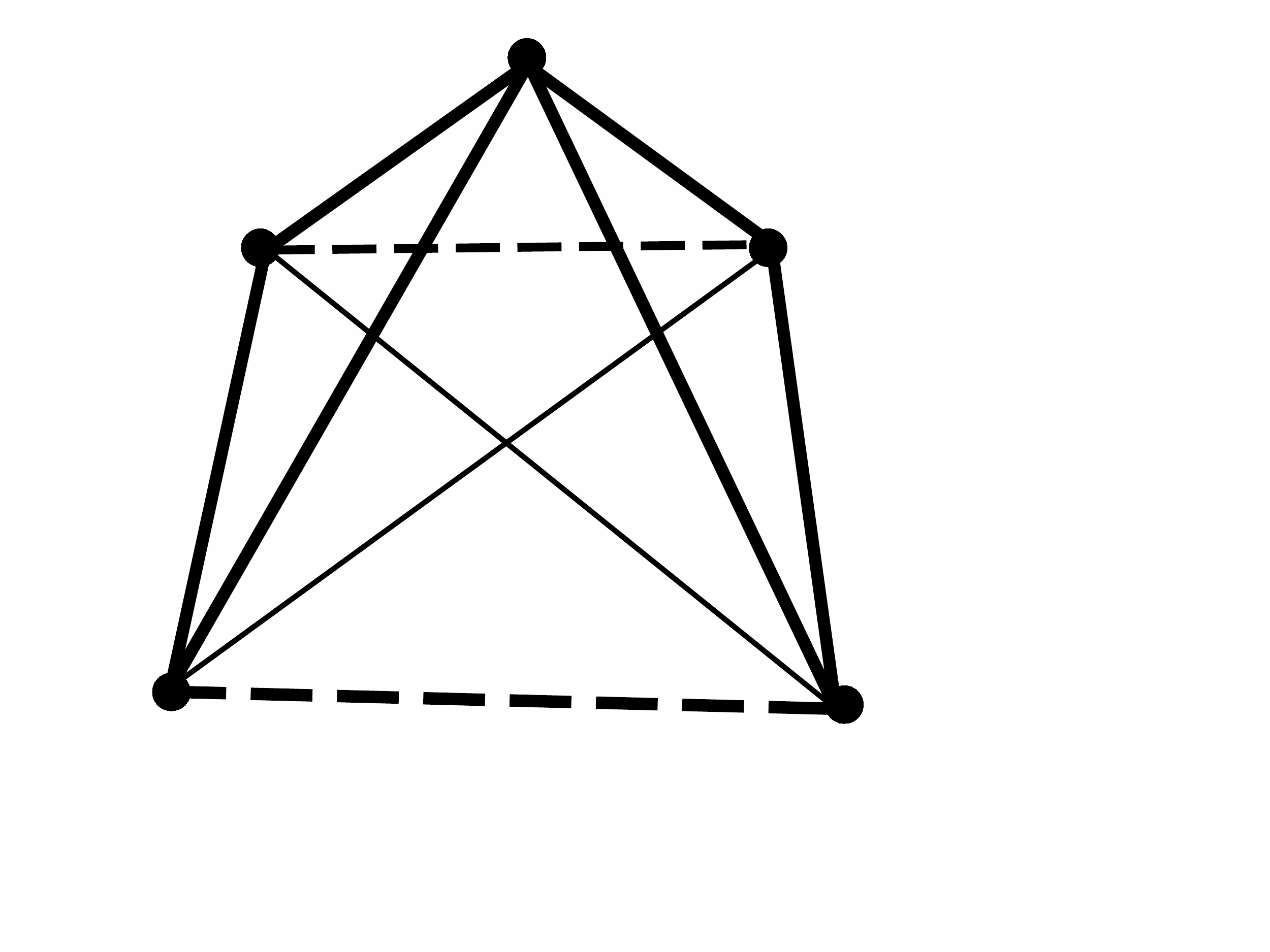}\hspace{-6cm}\includegraphics{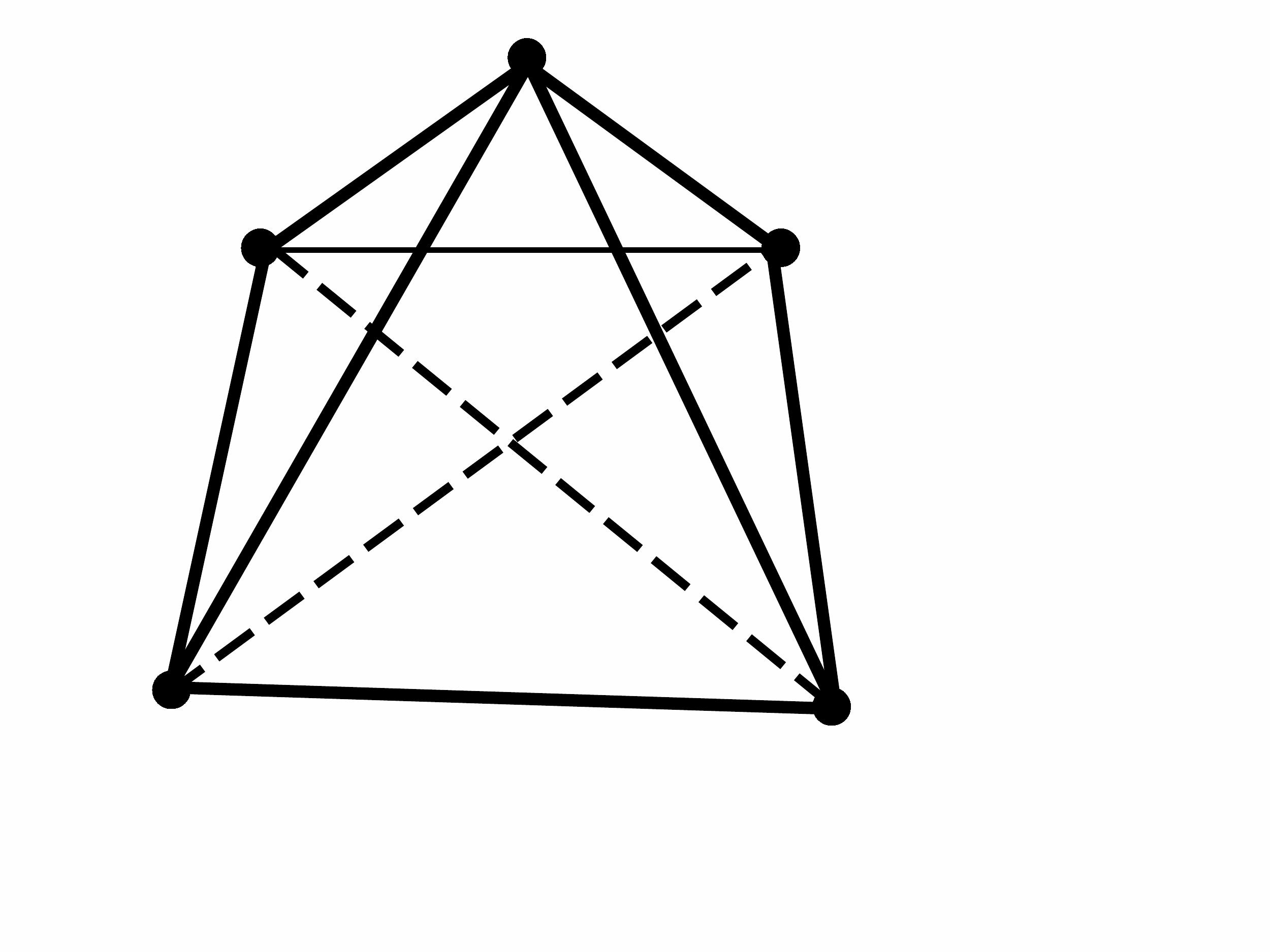}\hspace{-6cm}\includegraphics{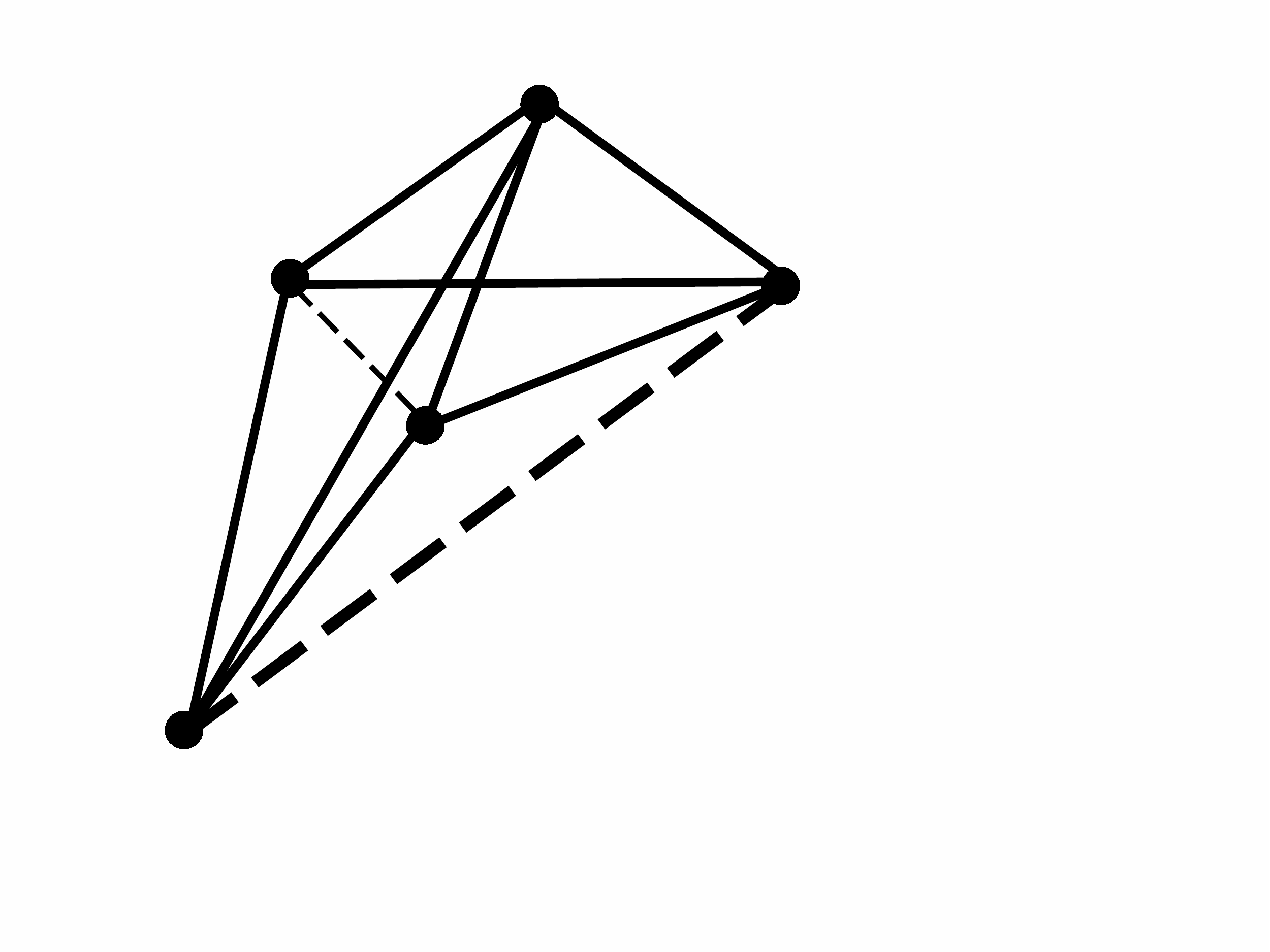}}
\end{center}
\caption{All 3 types of roofs could occur in the warm-up construction, and the proof technique in Section \ref{ringupper3} can be applied to all types. However, the figure on the left is a roof whose base is ``crossing'' in the shape of a butterfly. The proof technique in Section \ref{sec:direct} does not apply to this type of roof. The figure in the middle is a convex or expansive-contractive roof, i.e., if one of its hinge non-edges has a contractive motion, then the other is forced to have an expansive motion. The figure on the right is a pointed pseudo-triangular or expansive-expansive roof. Its
two hinge non-edges move in either both expansive or both
contractive fashion.} \label{convex}\label{pseudoPT}
\end{figure}
\end{center}

\begin{lemma}\label{lem:step2third}
For all ring frameworks $\mathscr{R}_k(\p)$ of $k-1$, pointed pseudo-triangular roofs and one convex roof, the hinge non-edges are implied.
\end{lemma}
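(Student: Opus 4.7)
The plan is to track the signs of infinitesimal changes of the hinge non-edge lengths around the ring and derive a parity contradiction from the expansion/contraction dichotomy of the two roof types. Label the hinges cyclically as $h_1,\ldots,h_k$ (indices mod $k$) so that the $i$-th roof $R_i$ contains exactly the two hinges $h_i$ and $h_{i+1}$. Let $m$ be any infinitesimal motion of $\R(\p)$ that preserves all bar lengths, and let $\dot h_i$ denote the first-order rate of change of the length of hinge $h_i$ under $m$. The goal is to show that $\dot h_i = 0$ for every $i$.

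A roof is $K_5$ minus two non-adjacent edges and has exactly one infinitesimal flex modulo rigid motions of $\mathbb{R}^3$. I parametrize the internal flex of $R_i$ by a scalar $c_i$; since rigid motions of a roof do not change any of its internal distances, I may write $\dot h_i = a_i\,c_i$ and $\dot h_{i+1} = b_i\,c_i$ for real constants $a_i, b_i$ depending only on the specific configuration of $R_i$. Invoking the infinitesimal analysis of single-vertex origamis over a $4$-gon \cite{streinu:whiteley:origami:2005}, together with the expansion/contraction results for convex polygons \cite{connelly:rigidityAndEnergy:1982} and for pointed pseudo-triangulations \cite{streinu:pseudoTriang:dcg:2005}, I extract the key local dichotomy: at a pointed pseudo-triangular roof both $a_i$ and $b_i$ are nonzero with the same sign, while at the convex roof they are nonzero with opposite signs.

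If $\dot h_i = 0$ for some $i$, then since $a_i \neq 0$ we must have $c_i = 0$, hence $\dot h_{i+1} = b_i\,c_i = 0$, and propagating around the ring yields $\dot h_j = 0$ for every $j$, as desired. Otherwise $\dot h_i \neq 0$ for every $i$, so $c_i \neq 0$ and $\dot h_{i+1}/\dot h_i = b_i/a_i$ for each $i$. Taking signs and multiplying around the ring yields
$$
\prod_{i=1}^{k} \operatorname{sign}\!\bigl(b_i/a_i\bigr) \;=\; \operatorname{sign}\!\left(\prod_{i=1}^{k} \frac{\dot h_{i+1}}{\dot h_i}\right) \;=\; \operatorname{sign}(1) \;=\; +1,
$$
but the left-hand side has exactly one negative factor (contributed by the unique convex roof) and $k-1$ positive factors (the pseudo-triangular roofs), forcing it to equal $-1$, a contradiction.

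The geometric heart of the argument is the local sign/non-vanishing dichotomy stated in the second paragraph: at the specific convex and pointed pseudo-triangular configurations, the unique internal flex of a roof must simultaneously perturb both hinge lengths (so both $a_i$ and $b_i$ are nonzero) and do so with the claimed relative sign. Establishing this cleanly is where the geometric work lies, and it requires the single-vertex origami infinitesimal flex analysis for a $4$-gon base together with the Cauchy/Connelly--Streinu style expansion/contraction dichotomy for convex and pointed pseudo-triangulated $4$-gons; once this local input is in hand, the global parity count above closes the lemma essentially in one line.
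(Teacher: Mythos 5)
Your proof takes essentially the same approach as the paper: assume a nontrivial infinitesimal flex, track the sign of the hinge-length rate of change around the ring, and derive a contradiction from the fact that the pointed pseudo-triangular roofs preserve sign while the single convex roof flips it. The paper's version is only two sentences and leaves the propagation argument informal; your writeup is a more careful rendering of the same idea, isolating the local sign/non-vanishing dichotomy explicitly, handling the degenerate case $\dot h_i = 0$ (which the paper's wording glosses over), and making the parity count precise via the telescoping product — a worthwhile tightening rather than a different route.
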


\begin{proof}
Assume that there exists an infinitesimal motion along one hinge non-edge $\{a, b\}$, and without loss of generality, assume that motion is
expansive. Then the increase/decrease
patterns of the two hinge non-edges of pointed pseudo-triangular roofs, i.e., expansive-expansive and the convex roof, i.e., expansive-contractive, when followed along
the ring back to the starting hinge non-edge, imply that the motion of $\{a, b\}$ is contractive, a contradiction.
\end{proof} 

\noindent
Next we show that in fact, the hinge non-edges are implied generically.

\begin{lemma}\label{lem:genericrk}
There are generic frameworks $\R(\p)$ as in Lemma
\ref{lem:step2third}.
\end{lemma}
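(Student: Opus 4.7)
The plan is to combine an explicit geometric construction with a standard openness-plus-genericity argument. First, I would exhibit one specific realization $\p_0$ of $\R$ in $\mathbb{R}^{3}$ in which one chosen roof has a convex base and the remaining $k-1$ roofs have pointed pseudo-triangular bases. Since $\R$ is built from roofs glued only at the two endpoints of each hinge non-edge, such a realization can be built sequentially around the ring: once the two hinge endpoints of a new roof are fixed by its predecessor, the remaining base vertex and the apex still have enough freedom in $\mathbb{R}^3$ to achieve the required base shape. Closing the ring is possible because one can first lay down all the hinge pairs in a consistent cyclic pattern in $\mathbb{R}^3$ and then place each apex independently, since each apex contributes three degrees of freedom that are not constrained by any other roof.

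Next, I would invoke openness of the geometric conditions. Being convex or being pointed pseudo-triangular for a quadrilateral base is defined by a finite conjunction of strict sign inequalities on orientation determinants built from the vertex coordinates. Hence the subset $U \subseteq \mathbb{R}^{3|V|}$ of configurations whose roofs realize exactly the prescribed convex/pointed pseudo-triangular pattern is an open subset of configuration space, and by the first step it is non-empty.

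Finally, the set $W$ of non-generic configurations of $\R$ — those whose rigidity matrix does not attain its maximum rank $r$ — is the common zero set of all $r \times r$ minors of the rigidity matrix, hence a proper real algebraic subvariety of $\mathbb{R}^{3|V|}$ with empty interior. Therefore $U \setminus W$ is non-empty (in fact dense in $U$), and any $\p \in U \setminus W$ furnishes a generic framework $\R(\p)$ of the type required by the hypothesis of Lemma~\ref{lem:step2third}.

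The main obstacle is the first step: producing an explicit $\p_0$ that simultaneously realizes the prescribed pattern on every roof and closes up consistently around the ring. This is precisely where the $3D$ freedom in placing apex vertices is used, and it is the reason the construction does not collapse. Once such a $\p_0$ exists, the openness of the pattern and the thinness of the non-generic locus do the rest.
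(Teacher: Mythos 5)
Your proposal is correct and takes essentially the same route as the paper: both rely on the observation that the strict convex/pointed pseudo-triangular conditions cut out an open subset of configuration space $\mathbb{R}^{3\cdot 3k}$, which, being non-empty, must contain a generic configuration since the non-generic locus is a proper algebraic subvariety. Your write-up is slightly more complete in that it explicitly addresses why a base realization $\p_0$ with the prescribed convex/pointed pattern exists and closes up around the ring, a point the paper leaves implicit.
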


\begin{proof}
Any framework $\R(\p)$  where the first roof is strictly convex
and the remainder are strictly pointed pseudo-triangular as in Lemma
\ref{lem:step2third},  can be viewed as a point in $\mathbb{R}^{3*3k}$.
There is an open neighborhood around $\R(\p)$ consisting of frameworks that continue to have the same
convex/pointed pseudo-triangular property.
This shows that the hinge non-edges are implied generically.
\end{proof}

\noindent
Combining Lemmas \ref{lem:step2third} and \ref{lem:genericrk}, it
follows that the hinge non-edges are implied in the ring graph $\R$.

\subsection{Rank-sandwich technique for the existence of implied non-edges in ring of roofs }\label{sec:impliednonedges}\label{ringupper3}
In this section, we give our second proof technique, called {\em rank-sandwich technique}, to show the existence of implied non-edges in ring of roofs $\R$. As mentioned earlier, this proof technique has two ingredients that are of interest as stand-alone techniques: 
\begin{itemize}
\item showing the independence of the graph $G$ (i.e., number of edges is the rank);
\item showing that a simple rank upper bound after adding potential implied non-edges $F$ as edges to $G$ is equal to the number of edges in $G$.
\end{itemize}  
Together this proves that the non-edges in $F$ are implied.

More specifically, we show in Section \ref{sec:indepring} that rings of roofs $\R$ are independent. Then we give two different arguments to show the rank upper bound on a ring of roofs $\R$ with hinge non-edges added. Those two arguments are {\em 2-thin cover argument} in Section \ref{ringupper1} and {\em body-hinge argument} in Section \ref{ringupper2}.

\subsubsection{Independence of ring of roofs}\label{sec:indepring}
We first consider a ring of tetrahedra, i.e., ring of $K_4$'s where neighboring $K_4$'s share an edge.
\begin{obs}\label{obs:ringK4}
A ring of at least six tetrahedra is independent.
\end{obs}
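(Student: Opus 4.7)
The plan is to interpret a ring of $k$ tetrahedra as the bar-joint realization of a body-hinge framework in $\mathbb{R}^3$ whose underlying body-hinge graph is the cycle $C_k$, and then invoke Tay's classical body-hinge rigidity theorem. Each tetrahedron ($K_4$) serves as a minimally rigid body in $3D$ and each edge shared by two adjacent tetrahedra plays the role of a hinge (revolute joint); under this correspondence the $2k$ vertices and $5k$ edges of the bar-joint graph match the $k$ bodies and $k$ hinges of the body-hinge structure.

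Next I would appeal to Tay's theorem: a body-hinge framework in $\mathbb{R}^3$ is generically independent as a bar-joint framework (with bodies realized as $K_4$'s) iff the multi-graph $\hat{C}_k$ obtained by replacing each hinge of $C_k$ with $5$ parallel edges satisfies the Maxwell sparsity condition $|E'| \leq 6|V'| - 6$ on every subgraph. The factor $5$ reflects the fact that a $3D$ hinge removes $5$ of the $6$ relative degrees of freedom between two rigid bodies.

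Then I would verify the sparsity condition directly on $\hat{C}_k$: any proper subgraph is a union of paths, and a path of $j$ bodies with $j-1$ hinges contributes $5(j-1) \leq 6(j-1) = 6j - 6$ multi-edges, so the inequality holds strictly. For the full cycle we obtain $5k$ multi-edges and the condition $5k \leq 6k - 6$, equivalently $k \geq 6$. Hence the sparsity condition is met precisely when $k \geq 6$, and the claimed independence follows.

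The main obstacle is the precise use of the body-hinge-to-bar-joint correspondence, since the paper's notion of independence is the bar-joint generic rigidity matroid rather than the body-hinge matroid. However, this correspondence is classical when each body is realized as a generic $K_{d+1}$ (a $K_4$ in $3D$), which is exactly our setting, so the translation is routine. An alternative, slightly more elementary route would be to build up a chain of $k$ tetrahedra by repeated Henneberg-style $2$-extensions (each adding a new tetrahedron along a fresh hinge edge, with $2$ new vertices and $5$ generically independent new rows) and then check that closing the chain into a ring adds the one remaining closing hinge without forcing a dependency precisely when $k \geq 6$; but this ad hoc construction ultimately recovers the same body-hinge count.
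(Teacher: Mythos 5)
The paper states Observation~\ref{obs:ringK4} without supplying a proof of its own, so there is nothing to compare your argument against directly; however, the paper's own body-hinge rank-upper-bound argument (Observation~\ref{prop:step2second} together with Theorem~\ref{fact:tay}) signals that the authors had exactly this body-hinge picture in mind, and your route is sound.

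Two points are worth tightening. First, what you call ``Tay's theorem'' in your second paragraph is really two separate results packaged into one statement: (i) Tay's counting characterization of the \emph{abstract} body-hinge matroid, which says that the cycle $C_k$ with each hinge replaced by $5$ parallel bars is independent in the $3D$ body-bar matroid iff the resulting multigraph is $(6,6)$-sparse; and (ii) the isostatic-body realization equivalence (Tay, White--Whiteley), which says that when each body is realized as a generically-placed isostatic bar-joint graph (here $K_4$) sharing exactly two vertices at each hinge, the infinitesimal motion space of the resulting bar-joint framework is isomorphic to that of the abstract body-hinge framework. The passage from (ii) to ``the bar-joint ring is independent iff the body-hinge framework is independent'' is an easy but nontrivial dimension count: with $k$ isostatic $K_4$-bodies, $k$ hinges, and hence $|V|=2k$ vertices and $|E|=5k$ edges, the isomorphism of motion spaces gives $3|V|-r_{\mathrm{bj}} = 6k - r_{\mathrm{bh}}$, so $r_{\mathrm{bj}}=r_{\mathrm{bh}}$ exactly, and bar-joint independence $r_{\mathrm{bj}}=5k$ is equivalent to body-hinge independence $r_{\mathrm{bh}}=5k$. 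This little computation is the precise content of the ``translation is routine'' step and should be written out, since in general (e.g., for the double banana, where the hinge pair is a non-edge) the numbers don't line up so cleanly and the naive identification fails. Second, your Henneberg-style alternative is indeed fine for building the open chain (two Henneberg-I moves plus one edge removal per new tetrahedron), but closing the chain into a ring adds $4$ new edges on $0$ new vertices, which is not a Henneberg or vertex-split move; so, as you note, this alternate route does not stand on its own and still reduces to a count of body-hinge type for the closing step. With the dimension count made explicit, your main proof is correct.
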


A {\em Henneberg-II construction} on a graph $G$ in $\mathbb{R}^{3}$ is to first
choose a vertex set $W=\{w_1, w_2$, $w_3, w_4\}$ of $4$ vertices of $G$ such that there is at least $1$ edge induced by $W$, deleting $1$ edge between the vertices of $W$, and then adding a new vertex $v$ and four edges $(v,w_1)$, $(v,w_2)$, $(v,w_3)$, $(v,w_4)$.
From \cite{bib:TayWhiteley85}, we know the following:
\begin{theorem}[\cite{bib:TayWhiteley85}]\label{thm:Henneberg}
Henneberg-II constructions preserve independence of a graph in 3D.
\end{theorem}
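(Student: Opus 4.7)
The plan is to prove this via a rank bound on the rigidity matrix, using the classical ``place the new vertex on the deleted edge'' degeneration. Write $e = \{w_1, w_2\}$ for the deleted edge, so $G' = (G \setminus e) \cup \{v\} \cup \{(v, w_i) : i = 1, \ldots, 4\}$ and $|E(G')| = |E(G)| + 3$. Since the generic rank of $R(G')$ cannot exceed $|E(G')|$, and rank is upper semicontinuous under specialization, it suffices to exhibit a single (possibly non-generic) realization $p'$ of $G'$ with $\operatorname{rank} R(G', p') = |E(G)| + 3$.

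First I would fix a generic realization $p$ of $G$, so that $\operatorname{rank} R(G, p) = |E(G)|$, and extend $p$ to $G'$ by placing $v$ on the open segment between $p(w_1)$ and $p(w_2)$; that is, $p'(v) = \alpha\, p(w_1) + (1 - \alpha)\, p(w_2)$ for some $\alpha \in (0, 1)$. A short computation on the two rigidity-matrix rows incident to $v$ yields the identity
\[
\alpha \cdot \operatorname{row}(v, w_1) + (1 - \alpha) \cdot \operatorname{row}(v, w_2) = -\alpha(1-\alpha) \cdot \operatorname{row}(w_1, w_2),
\]
which shows that the row of the deleted edge $e$ lies in the span of the two new rows in realization $p'$. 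Hence the row space of $R(G', p')$ coincides with that of the augmented matrix $R(G' \cup \{e\}, p')$, whose rows are the $|E(G)|$ rows of $R(G, p)$ (padded with zeros in the three columns indexed by $v$) together with the four new rows $\operatorname{row}(v, w_i)$.

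Next I would read off the rank of this augmented matrix by a block-triangular decomposition. The ``top'' block consisting of the rows of $R(G, p)$ contributes rank $|E(G)|$ in the $V(G)$-columns by the independence hypothesis and genericity of $p$, and vanishes in the three $v$-columns. Modulo this block, the four new rows project onto the $v$-columns as the vectors $p'(v) - p'(w_i)$ for $i = 1, \ldots, 4$. These four vectors span $\mathbb{R}^3$ if and only if $p(w_1), \ldots, p(w_4)$ are not coplanar in $\mathbb{R}^3$, which is a non-trivial Zariski-open condition and hence holds generically. Combining the two contributions yields $\operatorname{rank} R(G', p') = |E(G)| + 3 = |E(G')|$, so by the semicontinuity remark $G'$ is independent.

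The only place where anything subtle could break is the non-coplanarity step, but this is essentially free: coplanarity of four points is cut out by the vanishing of an affine $4\times 4$ determinant in their twelve coordinates, which is a nontrivial polynomial and is therefore avoided by any generic realization of a graph with at least four vertices. Everything else reduces to the displayed row identity and elementary linear algebra, making this one of the cleaner preservation results in combinatorial rigidity; the geometric content is entirely captured by the observation that collinearity of $v, w_1, w_2$ forces the three rows $\operatorname{row}(v, w_1), \operatorname{row}(v, w_2), \operatorname{row}(w_1, w_2)$ into a two-dimensional subspace.
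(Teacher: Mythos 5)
The paper states this as a cited result from Tay--Whiteley and gives no proof of its own, so there is no internal argument to compare against; your proof is precisely the classical collinear-degeneration argument from that reference (place $v$ on the open segment $[p(w_1), p(w_2)]$, note the deleted edge's row falls into the span of the two new rows, then count rank by block-triangularity plus non-coplanarity of $w_1,\ldots,w_4$ and invoke upper semicontinuity), and it is correct. One small sign slip: with the convention $\operatorname{row}(i,j) = (\ldots, p_i-p_j, \ldots, p_j-p_i, \ldots)$, the coefficient on the right-hand side of your displayed identity should be $+\alpha(1-\alpha)$, not $-\alpha(1-\alpha)$; checking the $w_1$-columns gives $\alpha\bigl(p(w_1)-p'(v)\bigr) = \alpha(1-\alpha)\bigl(p(w_1)-p(w_2)\bigr)$, matching $+\alpha(1-\alpha)\operatorname{row}(w_1,w_2)$. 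This is harmless, since all you need is that $\operatorname{row}(w_1,w_2)$ lies in the span of $\operatorname{row}(v,w_1)$ and $\operatorname{row}(v,w_2)$.
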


\noindent
Combining Observation \ref{obs:ringK4} and Theorem \ref{thm:Henneberg}, we obtain the following:

\begin{theorem}\label{r7}
A ring of roofs $\R$  is independent for $k\ge 6$.
\end{theorem}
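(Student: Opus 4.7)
My plan is to derive $\mathscr{R}_k$ from a ring of $k$ tetrahedra, which is independent by Observation \ref{obs:ringK4}, via a sequence of $k$ Henneberg-II moves, invoking Theorem \ref{thm:Henneberg} at each step to preserve independence. A dimension check supports feasibility: the ring of $k$ tetrahedra has $2k$ vertices and $5k$ edges, whereas $\mathscr{R}_k$ has $3k$ vertices and $8k$ edges, and each Henneberg-II move adds exactly one vertex and three edges ($-1+4=3$), so $k$ moves exactly bridge the gap.

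Concretely, I would label the $i$-th tetrahedron so that its vertex set is $\{a_i, b_i, a_{i+1}, b_{i+1}\}$ (indices mod $k$); with this labeling, the edge $\{a_i, b_i\}$ is precisely the one shared between the $(i-1)$-th and $i$-th tetrahedra in the ring. For $i = 1, 2, \ldots, k$ in order, perform a Henneberg-II move on the four-vertex set $W_i = \{a_i, b_i, a_{i+1}, b_{i+1}\}$: delete the edge $\{a_i, b_i\}$ and add a new apex vertex $c_i$ adjacent to all four vertices of $W_i$.

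Two things need checking. First, each move is legal: the edge $\{a_i, b_i\}$ must be present in $W_i$ at the moment move $i$ is performed, which holds because earlier moves $j < i$ only delete $\{a_j, b_j\}$ with $j \neq i$, leaving the rest of the $W_i$ intact. Second, the final graph is exactly $\mathscr{R}_k$: the subgraph induced on $\{a_i, b_i, a_{i+1}, b_{i+1}, c_i\}$ is a $K_5$ minus the two non-adjacent edges $\{a_i, b_i\}$ and $\{a_{i+1}, b_{i+1}\}$, i.e., a roof whose hinge non-edges are those two pairs, and neighboring roofs share exactly the two vertices forming the common hinge non-edge, matching the definition of the ring. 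The only real obstacle is this bookkeeping, namely ensuring that the order of deletions and additions is consistent; there is no geometric subtlety, since both the base independence for $k \geq 6$ and the preservation of independence under Henneberg-II moves are imported as black boxes from Observation \ref{obs:ringK4} and Theorem \ref{thm:Henneberg}, and together these immediately yield independence of $\mathscr{R}_k$ for all $k \geq 6$.
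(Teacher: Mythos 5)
Your argument is exactly the one the paper intends: the paper's proof of Theorem \ref{r7} is the single sentence ``Combining Observation \ref{obs:ringK4} and Theorem \ref{thm:Henneberg}, we obtain the following,'' and your proposal simply spells out the implicit sequence of $k$ Henneberg-II moves from the ring of $k$ tetrahedra, with correct bookkeeping of which edges are deleted when. This is a faithful (and welcome) elaboration of the paper's own route, not a different one.
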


\subsubsection{Rank upper bound using 2-thin cover argument}\label{ringupper1}
Here, we give our first argument of the rank upper bound of rings of roofs $\mathscr{R}_k$ with implied non-edges added. First, we need the following concepts. For any graph $G$, independent edge sets of $G$ define a {\em matroid} of $G$ in $\mathbb{R}^d$, which is called the {\em generic rigidity matroid} of $G$. The {\em rank} of $G$ in $\mathbb{R}^d$ is the rank of its generic rigidity matroid in $\mathbb{R}^d$. 

We have the following lemma.

\begin{lemma}
\label{lem:step2first} The rank of the 3D rigidity matroid of a ring of roofs $\mathscr{R}_k$ does not change if we add all hinge non-edges.
\end{lemma}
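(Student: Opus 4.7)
The plan is to sandwich the rank of $\R$ between matching lower and upper bounds. Write $F$ for the set of $k$ hinge non-edges, so that $\R+F$ is the graph obtained from $\R$ by turning every hinge pair into an edge. Since $\R\subseteq \R+F$, the nontrivial direction is to show $r_3(\R+F)\le r_3(\R)$. The lower side is handed to us: Theorem \ref{r7} gives independence of $\R$, so $r_3(\R)=|E(\R)|=8k$ (each of the $k$ roofs contributes $8$ edges, and neighboring roofs share only vertices, not edges). It therefore suffices to prove the upper bound $r_3(\R+F)\le 8k$.

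For the upper bound I would use a $2$-thin cover by complete graphs on five vertices. Adding the two hinge non-edges of a roof $R_i$ turns it into a $K_5$, which I denote $B_i$; thus $\R+F=\bigcup_{i=1}^k B_i$. The family $\{V(B_i)\}_{i=1}^k$ covers every edge of $\R+F$, is $2$-thin (each apex lies in exactly one $B_i$, each hinge vertex in exactly two consecutive $B_i$'s), and each $B_i$ is rigid with $r_3(B_i)=9$. Setting $U_j=B_1\cup\cdots\cup B_j$ and iterating submodularity of the rank function of the $3D$ rigidity matroid around the ring, one obtains: for $2\le j\le k-1$, the intersection $U_{j-1}\cap B_j$ is the single hinge edge shared by $B_{j-1}$ and $B_j$, which has rank $1$, giving $r_3(U_j)\le r_3(U_{j-1})+8$; and for $j=k$, the intersection $U_{k-1}\cap B_k$ is the disjoint union of the two hinge edges shared with $B_{k-1}$ and with $B_1$ on four distinct vertices, which has rank $2$, giving $r_3(U_k)\le r_3(U_{k-1})+7$. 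Telescoping yields $r_3(\R+F)=r_3(U_k)\le 9+8(k-2)+7=8k$, and combined with $r_3(\R+F)\ge r_3(\R)=8k$ this delivers the equality the lemma asserts.

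The step requiring genuine care is the closure of the ring at $j=k$: one must verify that the graph $U_{k-1}\cap B_k$ consists of exactly the two hinge edges, with no extra ``diagonal'' edges contributing to its rank. Because $k\ge 7$ (indeed $k\ge 4$ already suffices), the roofs $B_1$ and $B_{k-1}$ are vertex-disjoint, so the two hinges they share with $B_k$ involve four distinct vertices; any further edge of $\R+F$ among these four vertices would have to lie in some $B_i$ with $i\le k-1$, which is impossible since non-neighboring roofs share no vertices. The analogous check at intermediate steps $2\le j<k$ is simpler still. Apart from this bookkeeping, the only substantive ingredient in the whole argument is the independence result of Theorem \ref{r7} supplying the lower bound; everything else is pure submodularity.
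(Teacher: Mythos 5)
Your proof is correct and follows the same rank-sandwich strategy as the paper: take the lower bound $r_3(\R)=8k$ from Theorem \ref{r7}, then establish the matching upper bound $r_3(\R+F)\le 8k$ via the $2$-thin cover of $\R+F$ by the $k$ copies of $K_5$. The one place you depart from the paper is in how you justify the upper bound: the paper cites the Jackson--Jord\'an bound for $2$-thin covers (Theorem \ref{IPindep}), plugging in $\operatorname{rank}=9$ for each $K_5$ and $d(a,b)=2$ for each hinge pair to get $9k-k=8k$, whereas you re-derive exactly that count by iterating submodularity of the rigidity matroid's rank function around the ring ($r_3(U_j)\le r_3(U_{j-1})+9-r_3(U_{j-1}\cap B_j)$, with the intersection worth $1$ at each intermediate step and $2$ at the closure). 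This is essentially inlining the proof of the special case of the Jackson--Jord\'an inequality that is needed here; it is a bit more self-contained, and your care about the $j=k$ closure step (checking there are no stray diagonal edges among the four hinge vertices) is exactly the bookkeeping that, in the paper's route, is absorbed into verifying that the cover is $2$-thin and that $(V,S(\mathcal{X}))$ is a disjoint union of edges. Either phrasing delivers the same count; the Jackson--Jord\'an formulation is what the paper later reuses as a starting-graph condition in Theorem \ref{thm:secondInductiveApproach}, which is why the paper stays in that language.
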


\medskip \noindent
To prove this, we need to introduce a few more concepts.

\medskip \noindent
A \emph{cover} of a graph $G=(V, E)$ is a collection $\mathcal{X}$
of pairwise incomparable subsets of $V$, each of size at least two,
such that $\bigcup\limits_{X\in \mathcal{X}} E(X) = E$, where $E(X)$ is the edge set induced by $X$. A cover $\mathcal{X} =
\{X_1, X_2, \ldots, X_n\}$ of $G$ is {\em $2$-thin} if $|X_i \cap X_j |
\leq 2$ for all $1 \leq i < j \leq n$. Let the {\em shared part} $S(\mathcal{X})$ be the
set of all pairs of vertices $a, b$ such that $X_i \cap X_j = \{a,
b\}$ for some $1 \leq i < j \leq n$. For each
$\{a, b\} \in S(\mathcal{X})$, let $d(a, b)$ be the number of sets $X_i$ in $\mathcal{X}$ such that $\{a, b\} \subseteq X_i$. Let $G[X_i]$ denote the subgraph of $G$ induced by $X_i$. Sometimes induced subgraphs$\{G[X_1], G[X_2]$, $\ldots, G[X_n]\}$ are called {\em covering subgraphs} and these additionally serve as the basic building blocks of many of our constructions.

\medskip \noindent
We will rely on the following theorem by Jackson and Jord\'an
\cite{JacksonJordanrank:2006}.
\begin{theorem}\label{IPindep}(Jackson and Jord\'an \cite{JacksonJordanrank:2006})
If $\mathcal{X} = \{X_1, X_2, \ldots, X_m\}$ is a $2$-thin cover of graph
$G=(V, E)$ and subgraph $(V, S(\mathcal{X}))$ is independent, then
in 3D, the rank of the
rigidity matroid of $G^\star:= G\cup S(\mathcal{X})$, denoted as $rank(G^\star)$, satisfies the following
\begin{equation}\label{eq:IE}
rank(G^\star) \leq \sum_{X_i\in \mathcal{X}} rank(G^\star[X_i]) - \sum_{\{a, b\}\in
S(\mathcal{X})} (d(a, b) - 1),
\end{equation}
where $G^\star[X_i]$ denotes the subgraph of $G^\star$ induced by $X_i$.

\end{theorem}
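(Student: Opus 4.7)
The plan is to deduce the inequality by applying submodularity of the rigidity matroid rank function iteratively as we accumulate the covering subgraphs, and then telescope the corrections over the shared pairs.

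Concretely, let $r$ denote the rank function of the generic 3D rigidity matroid on the edges of $K_n$. Define $E_i = \bigcup_{j\le i} E(G^\star[X_j])$ for $i=1,\ldots,m$, so $E_m = E(G^\star)$. Apply the standard matroid submodularity inequality $r(A\cup B) \le r(A) + r(B) - r(A\cap B)$ with $A = E_i$ and $B = E(G^\star[X_{i+1}])$ to get
\begin{equation*}
r(E_{i+1}) \;\le\; r(E_i) + r(G^\star[X_{i+1}]) - r\bigl(E_i\cap E(G^\star[X_{i+1}])\bigr).
\end{equation*}
By 2-thinness, any edge in the intersection $E_i\cap E(G^\star[X_{i+1}])$ has both endpoints in $X_{i+1}\cap X_j$ for some $j\le i$, a set of size exactly two; hence the intersection is a set of edges whose endpoint pairs all lie in $S(\mathcal{X})$. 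The independence hypothesis on $(V,S(\mathcal{X}))$ then forces the rank of this intersection to equal its cardinality.

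Unrolling the recursion over $i=1,\ldots,m-1$ and summing yields $r(G^\star) \le \sum_i r(G^\star[X_i]) - T$, where $T$ is the total number of shared-pair incidences with previously seen subsets. For a fixed $\{a,b\}\in S(\mathcal{X})$ occurring in $X_{i_1},\ldots,X_{i_{d(a,b)}}$ with $i_1<\cdots<i_{d(a,b)}$, the pair contributes $+1$ to $T$ at each step $i_2,\ldots,i_{d(a,b)}$ (it is not yet a repeat at step $i_1$), which is exactly $d(a,b)-1$ contributions. Summing over $S(\mathcal{X})$ gives the announced correction.

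I expect the main obstacle to be certifying that at every submodularity step the intersection really is spanned by shared-pair edges and has rank equal to its size. One must guard against an edge of the original $G$ lying on a shared pair producing a second copy that collapses in the intersection, and one must ensure the independence hypothesis on $(V,S(\mathcal{X}))$ is applied to the right subcollection (the accumulated shared-pair edges form a subset of $S(\mathcal{X})$ and hence remain independent, which is the key use of the hypothesis). Should these bookkeeping technicalities become unwieldy, an alternative route is to bypass the iteration by a single application of a matroid union / covering identity for the 3D rigidity matroid, with the shared pairs serving as the identification sites; the independence assumption then plays the same role of excluding hidden dependences among the hinge constraints.
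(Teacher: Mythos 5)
The paper cites this theorem from Jackson and Jord\'an \cite{JacksonJordanrank:2006} as an external result and does not supply a proof, so there is no paper-internal argument to compare against. Your proof is correct and, to my knowledge, reproduces the standard route to such cover bounds in matroid theory. A few small points to tighten:
\begin{itemize}
\item The justification that the intersection $E_i\cap E(G^\star[X_{i+1}])$ consists only of shared-pair edges should make explicit that any edge $\{a,b\}$ in this intersection lies in some $E(G^\star[X_j])$ with $j\le i$, hence $\{a,b\}\subseteq X_j\cap X_{i+1}$; $2$-thinness then forces $\{a,b\}=X_j\cap X_{i+1}$, which is precisely the membership criterion for $S(\mathcal{X})$. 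You say ``a set of size exactly two'' without remarking that the size must be exactly two because both endpoints of an edge lie in it.
\item The two worries you flag at the end are in fact benign: $G^\star$ is a simple graph (not a multigraph), so no ``second copy'' of an edge can arise; and the accumulated intersection edges are always a subset of $S(\mathcal{X})$, so the independence hypothesis applies to the right subcollection verbatim --- a subset of an independent set is independent, yielding rank equals cardinality at every step.
\item It is worth stating that $\bigcup_i E(G^\star[X_i]) = E(G^\star)$, i.e.\ that $\mathcal{X}$ covers $G^\star$ and not merely $G$: this holds because $\mathcal{X}$ covers $G$ and every $\{a,b\}\in S(\mathcal{X})$ is contained in at least two sets $X_i$. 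Without this, $E_m$ would not equal $E(G^\star)$ and the final inequality would not bound $rank(G^\star)$.
\end{itemize}
The telescoping count $\sum_{\{a,b\}\in S(\mathcal{X})}(d(a,b)-1)$ is correct: a pair contributes at every occurrence after its first. This is a clean, self-contained derivation of the cited inequality.
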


\medskip
\noindent
{\bf Remark:} Both 2-thin covers and variants of the right hand side of Equation \eqref{eq:IE} have appeared in other contexts,
for example in the context of the Dress' 3D
rigidity conjectures \cite{bib:counter1, bib:counter2,
bib:counter3}, including one counterexample.  They also appear using different terminology, for example, in the context of algorithms for  geometric constraint
decomposition based on the notion of module-rigidity mentioned
earlier \cite{sitharam:zhou:tractableADG:2004},
\cite{bib:SitharamFrontier}, as well as algorithms for isolating and
efficiently solving the so-called {\em well-formed} system incidence
constraints between \emph{standard collections} of rigid bodies
\cite{bib:wellformed, bib:opt}.

\begin{proof}
(of Lemma \ref{lem:step2first}) After adding hinge non-edges into a ring $\mathscr{R}_k$ of $k$ roofs, we will get a ring $\mathscr{C}_k$ of $k$ $K_5$'s. Denote those $K_5$'s by $\{C_1, C_2, \ldots, C_k\}$ and let $\mathcal{X} = \{V(C_1), V(C_2),
\ldots, V(C_k)\}$ be a cover of $\mathscr{C}_k$. 
Note that $\mathcal{X}$ is a $2$-thin cover. Since all edges in the shared part $(V, S(\mathcal{X}))$ are disjoint, $(V, S(\mathcal{X}))$ must be independent. Applying
Theorem \ref{IPindep}, we have:
\begin{equation*}
rank(\mathscr{C}_k)\leq\sum_i^n {rank(\mathscr{C}_k[V(C_i)])}  - \sum_{\{a, b\}\in
S(\mathcal{X})} (d(a, b) - 1) = 9*k - k =8k. 
\end{equation*}
By Theorem \ref{r7}, we know a ring $\mathscr{R}_k$ of $k$ roofs is independent. Thus the rank $\mathscr{R}_k$ is equal its number of edges, which is exactly $8k$. Hence after adding all hinges to $\mathscr{R}_k$, the rank does not change.
\end{proof} 

\medskip\noindent
The proof of Lemma \ref{lem:step2first} shows how the 2-thin cover argument is used to complete the second ingredient needed to show the existence of implied non-edges: the rank upper bound for $\mathscr{R}_k$ after the hinge non-edges are added is equal to the number of edges in $\mathscr{R}_k$, which, by the independence of $\mathscr{R}_k$ shown in Theorem \ref{r7}, is equal to the rank of $\mathscr{R}_k$. Generally, if a graph $G$ satisfies the following conditions:
\begin{itemize}
\item there is a 2-thin cover $\mathcal{X}$ $=\{X_1, X_2, \ldots, X_n\}$ of $G$ such that shared part $(V, S(\mathcal{X}))$ is independent;
\item $rank(G) = \sum_{X_i\in \mathcal{X}} rank(G^\star[X_i]) - \sum_{\{a, b\}\in
S(\mathcal{X})} (d(a, b) - 1)$, where $G^\star= G\cup S(\mathcal{X})$;
\end{itemize}
then the non-edges in $S(\mathcal{X})$ are implied. These will later be used as starting requirements on graphs for inductive constructions in Scheme \ref{scheme:roof} later.

\subsubsection{Rank upper bound using body-hinge argument}\label{ringupper2}
In this section, we give a second argument of the rank upper bound of rings of roofs $\mathscr{R}_k$ with implied non-edges added. For completeness, we first observe the following:

\begin{obs}
\label{prop:step2second} Let the ring framework $\R(\p)$ be generic, then for all $i$,
the rigidity matrices of each of the banana frameworks $B_i( \p_i)$ are independent, and thus the $B_i( \p_i)$'s are rigid, where $\p_i$ is the
restriction of $p$ to the vertices in the $i^{th}$ roof $R_i$.
\end{obs}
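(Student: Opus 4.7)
The plan is to combine two standard facts: first, that a banana is generically minimally rigid in $3D$ as an abstract graph, and second, that any restriction of a generic framework to a subset of its vertices is again generic.

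For the first ingredient, I would note that each banana $B_i$, obtained from the roof $R_i$ by adjoining one of its two hinge non-edges, has $5$ vertices and $9$ edges, matching the Maxwell count $3\cdot 5 - 6 = 9$ in dimension three. Since $K_5$ has generic rigidity rank $9$ in $\mathbb{R}^3$, removing any single edge from $K_5$ leaves a minimally rigid graph, so $B_i$ is generically independent. Equivalently, $B_i$ can be built from a tetrahedron $K_4$ (which is rigid in $3D$) by one Henneberg-I step, i.e., adding a vertex of degree three, which is known to preserve independence in $\mathbb{R}^3$.

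For the second ingredient, since $\R(\p)$ is generic in $\mathbb{R}^{3|V(\R)|}$, the coordinates of any subset of its vertices are also algebraically independent, so the restriction $\p_i$ to the five vertices of roof $R_i$ is itself generic in $\mathbb{R}^{15}$. Consequently the rigidity matrix of $B_i(\p_i)$ attains the maximum rank of the generic rigidity matroid of $B_i$, which by the first ingredient equals $9$, its number of rows. The rows are therefore linearly independent, and rigidity of $B_i(\p_i)$ follows immediately, completing the argument.

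I do not anticipate a substantive obstacle: the observation essentially packages the well-known generic rigidity of the banana together with the trivial inheritance of genericity under vertex restriction. The only care needed is to make the choice of the added hinge non-edge explicit, but either choice yields a graph isomorphic to a banana, so the conclusion is independent of this choice.
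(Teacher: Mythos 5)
The paper states this as an unproved observation (it is introduced with ``For completeness, we first observe the following''), so there is no paper proof to compare against; your proposal supplies the missing details and does so correctly. Both of your ingredients are right: a banana is $K_5$ minus one edge, has $9 = 3\cdot 5 - 6$ edges, and is minimally rigid in $3D$ (for instance as a Henneberg-I extension of $K_4$, which is the cleanest justification); and genericity is inherited by restricting to a subset of vertices. The one point worth flagging is that the paper's stated definition of ``generic'' (rigidity matrix of $\R$ has maximum rank) does not literally imply that the restriction $\p_i$ is generic for the banana graph $B_i$, since the hinge non-edge row is not a row of $\R$'s rigidity matrix. Your argument silently switches to the stronger and standard notion of genericity as algebraic independence of the coordinates over $\mathbb{Q}$, under which restriction to any vertex subset manifestly remains generic for any graph on those vertices. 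This is the correct reading — the two notions agree on a dense set, and that is all the observation needs — but making the switch explicit would tighten the write-up.

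Your remark that the choice of which hinge non-edge to adjoin is immaterial, since both yield isomorphic graphs, is also correct and worth stating, as the observation itself leaves this choice ambiguous.
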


\noindent
Together with the following result by Tay
\cite{tay:rigidityMultigraphs-I:1984} and White and Whiteley \cite{white:whiteley:algebraicGeometryFrameworks:1987} on \emph{body-hinge} structures, we can complete the proof.

\begin{theorem} \label{fact:tay}  If $\forall i\leq k$, the $i^{th}$ banana
$B_i(\p_i)$ is rigid, then the framework $\mathscr{B}_k(\p)$ is
equivalent to a body-hinge framework and is guaranteed to have at
least $k-6$ independent infinitesimal motions.
\end{theorem}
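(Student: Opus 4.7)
The plan is to reduce the claim to the classical body-hinge rigidity theorem of Tay \cite{tay:rigidityMultigraphs-I:1984} and White--Whiteley \cite{white:whiteley:algebraicGeometryFrameworks:1987}. The rigidity hypothesis on each banana $B_i(\p_i)$ is precisely what unlocks this reduction: once each banana is rigid, it may be treated, as far as the infinitesimal motion space is concerned, as a single rigid 3D body, and the only remaining flexibility sits between neighboring bananas.

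First, I would establish that $\mathscr{B}_k(\p)$ is infinitesimally equivalent to a body-hinge framework on $k$ bodies cyclically connected by $k$ hinges. Since $B_i(\p_i)$ is rigid, every infinitesimal motion of $\mathscr{B}_k(\p)$ restricts on the vertex set of $B_i$ to an Euclidean infinitesimal motion, so $B_i$ contributes exactly $6$ degrees of freedom. Two consecutive bananas $B_i$ and $B_{i+1}$ share exactly the two vertices of a hinge non-edge; the line through these two points lies inside each rigid body, and therefore the only relative infinitesimal motion between $B_i$ and $B_{i+1}$ permitted by distance-preservation is rotation about this line. This is exactly the definition of a hinge between two rigid bodies in 3D, so the kernel of the rigidity matrix of $\mathscr{B}_k(\p)$ is canonically identified with the space of infinitesimal motions of a body-hinge framework with $b = k$ bodies and $h = k$ hinges arranged in a cycle.

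Second, I would apply the body-hinge theorem. In $\mathbb{R}^{3}$ each rigid body contributes $6$ dof and each hinge imposes $5$ independent linear constraints on the relative motion of the two bodies it connects (one rotational dof about the hinge axis is always left free). Hence a body-hinge framework with $b$ bodies and $h$ hinges generically has an infinitesimal motion space of dimension at least $6b - 5h$. For our ring, $b = h = k$, giving at least $6k - 5k = k$ independent infinitesimal motions in total, of which $6$ are trivial Euclidean motions of the ambient space, leaving at least $k - 6$ independent non-trivial infinitesimal motions.

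The main obstacle is the equivalence step: one must check that no rigidity-matrix constraints beyond those captured by the hinge abstraction leak in from the banana structure. Since neighboring bananas interact only through their two shared vertices and each banana is internally rigid, this reduces to the clean observation that a pair of shared joints between two rigid bodies in $\mathbb{R}^{3}$ is precisely a hinge. Once this reduction is in place, the motion count follows from the Tay/White--Whiteley theorem by a direct substitution, so no further work is required.
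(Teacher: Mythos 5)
The paper does not actually prove Theorem~\ref{fact:tay}: it is stated as a result imported from Tay~\cite{tay:rigidityMultigraphs-I:1984} and White--Whiteley~\cite{white:whiteley:algebraicGeometryFrameworks:1987} on body-hinge structures, and the paper merely applies it. Your sketch supplies a derivation that the paper omits, and it is the right one. Two remarks for precision. First, your reduction step is exactly the crux and you have it correct: once each banana framework $B_i(\p_i)$ is rigid, any infinitesimal motion of $\mathscr{B}_k(\p)$ restricts on each banana to a Euclidean screw, and the compatibility condition at the two shared joints of $B_i$, $B_{i+1}$ forces the relative screw to be a rotation about the line through those joints, which is precisely a hinge constraint; this identifies the kernel of the bar-joint rigidity matrix with the motion space of the associated body-hinge framework on $k$ bodies cyclically joined by $k$ hinges. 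Second, you phrase the count $6b-5h$ as a ``generic'' lower bound, but it is in fact unconditional: the body-hinge rigidity matrix has $6b$ columns and at most $5h$ rows, so its null space has dimension at least $6b-5h$ for every configuration, which is exactly what is needed here since the statement concerns the particular framework $\mathscr{B}_k(\p)$ and not only a generic one. With $b=h=k$ this gives at least $k$ infinitesimal motions, and after quotienting out the $6$-dimensional space of trivial Euclidean motions one obtains at least $k-6$ independent non-trivial infinitesimal motions, as claimed.
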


\medskip
\noindent
Observation \ref{prop:step2second} and Theorem \ref{fact:tay} show that the rank upper bound for a ring with hinge non-edges added is equal to the number of edges in the ring, thereby showing that the hinge non-edges are implied.

\section{Natural Extensions of Warm-up Example}\label{sec:natural}
In this section, we extend the warm-up example to give general constructions of nucleation-free graphs with implied non-edges. We will list some construction schemes whose correctness follows from the flex-sign or the rank-sandwich proof technique.

First, we can extend the building block to other graphs that satisfy the conditions in Lemma \ref{lem:step2third} and obtain the following construction scheme: 

\begin{scheme}[Flex-sign Ring]\label{schme:specialmotion}\hfill

{\bf Input graphs:} $G_1$, $G_2$, $\ldots$, $G_n$

\smallskip
{\bf Output graph:} a {\em ring graph} consisting of $G_1$, $G_2$, $\ldots$, $G_n$ such that (1) neighboring subgraphs $G_i$ and $G_{i+1}$ share a hinge non-edge $\{a_i, b_i\}$ with $G_n$ and $G_1$ sharing $\{a_n, b_n\}$ 
(2) each hinge non-edge is shared by exactly 2 graphs $G_i$ and $G_{i+1}$ or $G_1$ and $G_n$.

\end{scheme}

\begin{theorem}\label{thm:specialmotion}
If the input graphs $G_1$, $G_2$, $\ldots$, $G_n$ in the Flex-sign Ring scheme (Scheme \ref{schme:specialmotion}) have the following property: (1) each of the $G_i$'s is nucleation-free and (2) one of the $G_i$'s can be realized as expansive-contractive structure and the remaining $G_i$'s can be realized as expansive-expansive structure, then  the output graph is a nucleation-free graph with implied non-edges.
\end{theorem}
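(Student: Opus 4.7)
The plan is to carry over the two ingredients from the warm-up argument (Section \ref{sec:direct}) essentially verbatim for the implied non-edge claim, and to supply a separate structural argument for nucleation-freeness that uses only the hypothesis that each input piece is nucleation-free together with the body-hinge flexibility of $3D$ structures joined along shared vertex pairs with no connecting edge.

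For the implied non-edges I would argue as follows. Pick realizations $G_i(\p_i)$ witnessing the expansive-expansive / expansive-contractive property for each input, and glue them along the shared hinge pairs to obtain a realization of the output ring. Suppose, for contradiction, that some hinge non-edge $\{a_i,b_i\}$ admits a nontrivial infinitesimal flex; WLOG this flex is expansive. The hypothesis says that in each expansive-expansive piece, an expansive flex at one hinge non-edge forces an expansive flex at its other hinge non-edge, while the unique expansive-contractive piece flips the sign. Propagating the sign around the ring, the single expansive-contractive piece produces exactly one sign flip before we return to $\{a_i,b_i\}$, forcing a contractive flex there, a contradiction. This shows the hinge non-edges are implied at this particular realization; to promote the statement to a generic one I would invoke exactly the open-neighborhood argument of Lemma \ref{lem:genericrk}, since the expansive/contractive behavior is an open condition on the coordinates of each $G_i$, hence an open condition on the ring realization in the ambient space of all realizations.

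For nucleation-freeness, let $H = G[S]$ be an induced subgraph of the output graph on at least five vertices, and write $S_i := S \cap V(G_i)$. Because all edges of the output graph lie inside some individual $G_i$, we have $E(H) = \bigcup_i E(G_i[S_i])$, and consecutive pieces $G_i[S_i]$, $G_{i+1}[S_{i+1}]$ meet in at most the two hinge vertices $\{a_i,b_i\}$, with no edge joining them in $H$ (that edge is a non-edge of the ring by construction). If $S$ is contained in a single $V(G_j)$, then $H$ is an induced subgraph of $G_j$, which is nucleation-free, so $H$ is flexible. Otherwise $H$ is an honest body-hinge-like structure in $3D$: even if each nonempty $G_i[S_i]$ were rigid, consecutive pieces are joined through either $0$ or $1$ or $2$ vertices with no connecting edge, and each such junction supports at least one rotational degree of freedom (trivially for $0$ or $1$ shared vertices; by the standard $3D$ body-hinge flex for $2$ shared vertices without a bar). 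Thus $H$ cannot be rigid and is not a nucleus.

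The main obstacle I expect is the nucleation-freeness step, and specifically making rigorous the claim that the body-hinge-style junctions always produce a flex of the induced subgraph $H$, even when some $G_i[S_i]$ are themselves flexible or small. The cleanest route is to reduce to Theorem \ref{fact:tay}: replace each $G_i[S_i]$ by a rigid extension (in the abstract, not inside the ring) so that $H$ embeds into a body-hinge graph whose underlying multigraph is a path or ring of bodies connected by hinges, and invoke the $k-6$ independent infinitesimal motions guaranteed there, which survive restriction back to $H$. The flex-sign step itself is a direct and essentially mechanical generalization of the warm-up proof and should require no new ideas.
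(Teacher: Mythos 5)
Your flex-sign half — propagating the expansion/contraction sign around the ring to reach a contradiction, then promoting to a generic statement by the open-neighborhood argument of Lemma~\ref{lem:genericrk} — is exactly the paper's argument; the paper's proof of this theorem is essentially the one-liner ``follows from the flex-sign technique,'' and you reproduce it faithfully.

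The nucleation-freeness half, which the paper leaves implicit, is where your argument has a genuine gap, and it is precisely the one you flagged but did not close. The step ``each such junction supports at least one rotational degree of freedom \ldots\ thus $H$ cannot be rigid'' is a non-sequitur: local slack at each joint does not preclude global rigidity of a \emph{cyclic} body-hinge structure. A ring of six rigid bodies joined by six hinges in $3D$ is generically rigid (the body-hinge dof count gives $6k-5k=k$, which is exactly $6$ when $k=6$), and the paper itself records that $\mathscr{R}_k$ is rigid for $k\le 6$. Your proposed repair via Theorem~\ref{fact:tay} inherits the same problem — it certifies $k-6$ independent motions, which is vacuous unless $k\ge 7$. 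Concretely, your case analysis handles induced subgraphs $H$ that lie in a single piece (by nucleation-freeness of $G_j$) and those forming a proper chain of pieces (a chain of $k\ge 2$ bodies with at most two-vertex joints and no joint edge always has $\ge k+5>6$ dof), but it does not handle the case where $H$ wraps around the entire ring and contains every hinge pair; there rigidity is genuinely possible if $n$ is small. To close the gap you need either an explicit lower bound $n\ge 7$ (which the warm-up example imposes but the theorem statement does not), or a sharper argument that uses the flexibility of the individual $G_i[S_i]$'s rather than treating them as rigid bodies, e.g.\ a sparsity count exploiting that each nucleation-free $G_i$ on $\ge 5$ vertices has strictly fewer than $3|V(G_i)|-6$ independent edges.
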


The proof for Theorem \ref{thm:specialmotion} follows directly from the flex-sign proof technique in Section \ref{sec:direct}. Rings of roofs in Section \ref{sec:first} are examples of Theorem \ref{thm:specialmotion}. However, the serious disadvantage of the flex-sign technique is that there is no other known example.

Second, we can apply Henneberg-II constructions on existing graphs to obtain the following scheme:

\begin{scheme}[Henneberg Extender Ring]\label{scheme:HennGeneral}\hfill

{\bf Input graph:} A ring graph $H$ consisting of $G_1$, $G_2$, $\ldots$, $G_n$ such that (1) neighboring subgraphs $G_i$ and $G_{i+1}$ share a hinge $(a_i, b_i)$ with $G_n$ and $G_1$ sharing $(a_n, b_n)$ 
(2) each hinge edge is shared by exactly 2 graphs $G_i$ and $G_{i+1}$ or $G_1$ and $G_n$.

\medskip
{\bf Output graph:} A ring graph $G$ by applying Henneberg-II constructions on $H$ as follows: for $G_i$, add a vertex $v_i$ and four edges $(v_i, a_i)$, $(v_i, b_i)$, $(v_i, a_{i+1})$ and $(v_i, b_{i+1})$. Then remove hinge edges $(a_i, b_i)$.
\end{scheme}

\begin{theorem}\label{thm:HennGeneral}
For the Henneberg Extender Ring scheme, if the following conditions hold:
(1) the input ring graph in independent; (2) each $G_i$ is rigid; and (3) each $G_i$ is nucleation-free after removing its two hinge edges $(a_i, b_i)$ and $(a_{i+1}, b_{i+1)}$, then Scheme \ref{scheme:HennGeneral} outputs a nucleation-free, independent graph with implied non-edges.
\end{theorem}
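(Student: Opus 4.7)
The plan is to verify, in order, three properties of the output $G$: independence, then that every hinge $(a_i,b_i)$ is an implied non-edge, and finally nucleation-freeness. The second two exploit the first. Independence is immediate: each construction step in Scheme~\ref{scheme:HennGeneral} is a valid Henneberg-II move with $W=\{a_i,b_i,a_{i+1},b_{i+1}\}$, so iterating Theorem~\ref{thm:Henneberg} on the independent input $H$ yields that $G$ is independent.

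For the implied hinges, let $G^\star = G \cup \{(a_1,b_1),\ldots,(a_n,b_n)\}$. Then $G^\star$ coincides with $H$ enlarged by the $n$ Henneberg vertices and their $4n$ incident edges, with no edge of $H$ removed. Each $v_i$ contributes at most $3$ to the rigidity rank, because its four new rows have their only new-column entries in the three columns indexed by $v_i$'s coordinates; hence
\begin{equation*}
\operatorname{rank}(G^\star) \;\leq\; \operatorname{rank}(H) + 3n \;=\; |E(H)| + 3n \;=\; |E(G)| \;=\; \operatorname{rank}(G),
\end{equation*}
using independence of $H$ and of $G$. Since $G \subseteq G^\star$, equality holds and every hinge is implied in $G$.

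For nucleation-freeness, suppose toward contradiction that $R \subseteq G$ is rigid with $|V(R)| \geq 5$. Work with the $2$-thin cover $\mathcal{X}=\{X_i = V(G_i)\cup\{v_i\}\}$ of $G$, whose shared part is the pairwise disjoint (hence independent) family of hinge pairs. If $R \subseteq X_i$ and $v_i \notin V(R)$, then $R \subseteq G_i\setminus\{(a_i,b_i),(a_{i+1},b_{i+1})\}$ and rigidity of $R$ directly contradicts~(3). If $R \subseteq X_i$ and $v_i \in V(R)$, rigidity of $R$ forces $\deg_R(v_i)\geq 3$; a generic degree-$\geq 3$ vertex in $3$D contributes exactly $3$ to the rank, so $\operatorname{rank}(R-v_i) = \operatorname{rank}(R) - 3 = 3(|V(R)|-1)-6$, i.e., $R-v_i$ is rigid inside $G_i\setminus\{\text{hinges}\}$. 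For $|V(R)|\geq 6$ this contradicts~(3); for $|V(R)|=5$, $R-v_i$ is a rigid $4$-vertex graph, hence $K_4$, whose four vertices include at least three of $v_i$'s neighbors $\{a_i,b_i,a_{i+1},b_{i+1}\}$, and by pigeonhole on the two hinge pairs any such triple contains both endpoints of one hinge, so the $K_4$ would require a deleted hinge edge — impossible in $G_i\setminus\{\text{hinges}\}$. If instead $R$ straddles multiple pieces, let $R^\star = R\cup\{\text{hinges with both endpoints in }V(R)\}$; then $R^\star$ is rigid with the same rank. Applying Theorem~\ref{IPindep} to $R^\star$ and writing $n_i=|V(R)\cap X_i|$, $|I|=|\{i:n_i>0\}|$, and $s_p$ for the active shared hinge pairs, the bound $\operatorname{rank}(R^\star\cap X_i)\leq 3n_i-6$ together with $\sum_i n_i \geq |V(R)|+2s_p$ simplifies to $6|I|\leq 5s_p+6$, forcing $|I|=n$ and $n\leq 6$. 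But the independence of $H$ and the rigidity of the $G_i$'s force $n\geq 7$, a contradiction.

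The main obstacle is the single-piece $|V(R)|=5$, $v_i\in V(R)$ sub-case — the pigeonhole on $v_i$'s four neighbors, grouped into two hinge pairs, is precisely why hypothesis~(3) concerns \emph{both} hinges of $G_i$ rather than one. The multi-piece reduction is routine counting via Theorem~\ref{IPindep}, but one must handle over-rigid $G_i$'s carefully (fortunately this only increases the slack) and verify that $n\geq 7$ really follows from hypothesis~(1); otherwise at $n=6$ the input $H$ would be Maxwell-tight and rigid, making $G$ rigid and hence not nucleation-free.
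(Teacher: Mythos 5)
Your overall architecture follows the paper's rank-sandwich technique (independence via Henneberg-II, a rank upper bound to force implication, then nucleation-freeness), but you replace the paper's 2-thin-cover/body-hinge argument for the rank bound with a per-vertex bound, and you flesh out the nucleation-freeness argument, which the paper leaves extremely terse. Two of your steps have genuine gaps.

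First, the justification for ``each $v_i$ contributes at most $3$ to the rigidity rank'' is insufficient as stated. The block structure of the rigidity matrix (``four new rows, only three new columns'') does \emph{not} by itself cap the rank increase at $3$: adding a degree-$4$ vertex to an arbitrary graph can raise the rank by $4$ (e.g., attaching a degree-$4$ vertex to the empty graph on $4$ vertices gives $K_{1,4}$, rank $4$; attaching it to $K_4$ with one edge removed gives $K_5-e$, where the rank rises from $5$ to $9$). What actually saves you here is hypothesis~(2): the four neighbors $a_i,b_i,a_{i+1},b_{i+1}$ of $v_i$ lie inside the rigid subgraph $G_i$ of $G^\star$, so any infinitesimal flex of $H$ restricts to a trivial motion of $G_i$ and hence extends (by the same rigid motion) to $v_i$. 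Thus $\dim(\mathrm{flex}(G^\star)) \geq \dim(\mathrm{flex}(H))$, which gives $\operatorname{rank}(G^\star) \leq \operatorname{rank}(H) + 3n$. Alternatively, submodularity applied to $E(H)$ and $E(G_i)\cup\{\text{edges at } v_i\}$ does the same job. The conclusion is right but the reason you give is not, and the reason you would need is exactly the hypothesis you did not invoke.

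Second, the single-piece nucleation-freeness argument has a real gap when $\deg_R(v_i)=4$. The claim that a generic degree-$\geq 3$ vertex contributes \emph{exactly} $3$ to the rank is false for degree $4$: if $R$ is minimally rigid and $v_i$ has degree $4$ in $R$, then $\operatorname{rank}(R-v_i)=\operatorname{rank}(R)-4=3(|V(R)|-1)-7$, so $R-v_i$ has one internal degree of freedom and is \emph{not} rigid. Your pigeonhole step then never triggers. For $|V(R)|=5$ the sub-case can be rescued by a direct edge count (after deleting the two hinges, at most four edges survive among $\{a_i,b_i,a_{i+1},b_{i+1}\}$, but a $4$-vertex graph with one dof needs five edges), but for $|V(R)|\geq 6$ with $\deg_R(v_i)=4$ the argument as written concludes nothing, since hypothesis~(3) only forbids \emph{rigid} subgraphs of $G_i\setminus\{\text{hinges}\}$, not $1$-dof ones. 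This sub-case needs a separate argument; the paper's own one-line justification (``the five vertices do not form a tetrahedron'') does not obviously cover it either, and the issue would be worth flagging.

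On the positive side, your observation that hypotheses~(1)--(3) implicitly force $n\geq 7$ (else $H$, hence $G$, is rigid and therefore trivially a nucleation) is a correct and useful precision that the paper elides.
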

\begin{proof}
The proof is simple and follows the rank-sandwich proof technique: 
(1) $G$ is nucleation-free: since (a) after removing all hinge edges, $G_i$'s are nucleation-free, and (b) if there is a nucleation, it must include one of the $v_i$'s, but each $v_i$ is incident only to $a_i$, $b_i$, $a_{i+1}$ and $b_{i+1}$, which is not part of any nucleation, since the five vertices do not form a tetrahedron;
(2) Henneberg-II construction always keeps independence of the graph, so $G$ is independent;
(3) a thin-cover argument or a body-hinge argument as in Section \ref{sec:first} can easily show the $(a_i, b_i)$'s are implied.
\end{proof}

\medskip\noindent
{\bf Example of Henneberg extender ring scheme:} The ring of roofs is again an example. As another example, we can use {\em modified octahedral graph}, which is a $K_6$ with $3$ edges missing, as the building blocks of the ring. Fig.~\ref{fig:octahedron} shows one building block of the output of the Henneberg extender ring scheme from a ring of roofs. Note that there are several non-isomorphic $K_6-3$'s and only the one in Fig.~\ref{fig:octahedron} can be obtained using Henneberg extender ring scheme from a ring of roofs.
\begin{center}
\begin{figure}[!htbp]
\begin{center}
\includegraphics[width=.5\textwidth, height=.5\textwidth]{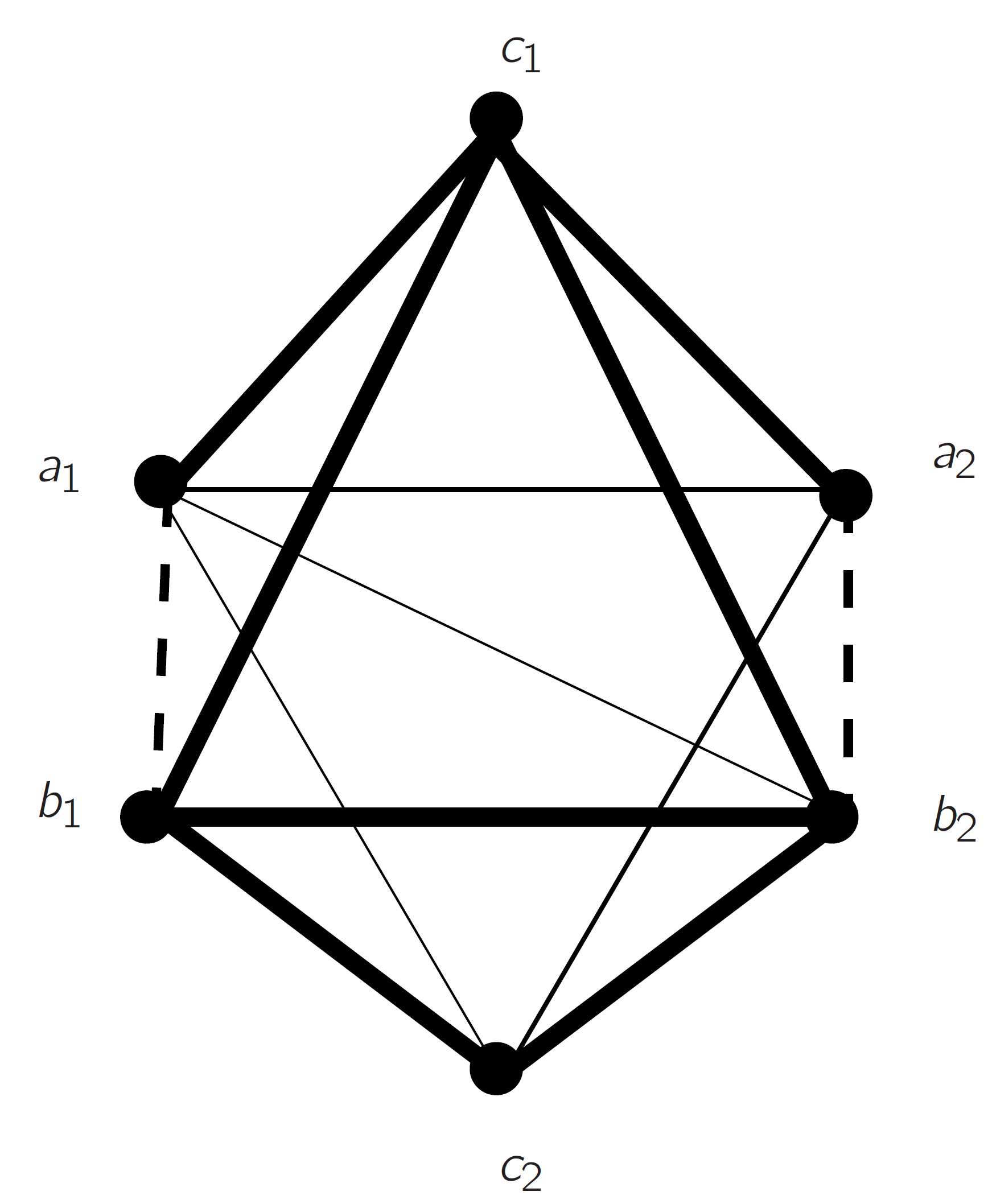}
\end{center}
\caption{A modified octahedral graph obtained from the Henneberg extender ring scheme from a ring of roofs. The dashed lines are the two hinge non-edges.}
\label{fig:octahedron}
\end{figure}
\end{center}

Next we describe another collections of standard schemes, which use $1-$, $2-$ and $3-$sums and standard inductive constructions to build on the existing nucleation-free graphs with implied non-edges. 

\begin{definition}[$k$-sum]
Let $G_1$ and $G_2$ be two graphs that each contains a complete graph on $k$ vertices, $K_k$, as a proper subgraph. For any matching of the vertices of the two $K_k$ ’s by identifying the matched pairs, we can get a new graph $G_3$. We call this procedure a {\em $k$-sum} of $G_1$ and $G_2$ \cite{BelkConnelly07}\cite{Sitharam:2010:CGC}.
\end{definition}

\begin{definition}[Henneberg-I construction]
A {\em Henneberg-I construction} on a graph $G$ in $\mathbb{R}^{3}$ is to first
choose a vertex set $W=\{w_1, w_2$, $w_3\}$ of $3$ vertices of $G$ and then add a new vertex $v$ and three edges $(v,w_1)$, $(v,w_2)$, $(v,w_3)$.
\end{definition}
From \cite{bib:TayWhiteley85}, we know that Henneberg-I construction in $\mathbb{R}^{3}$  preserves independence.

\begin{definition}[vertex split]
Given a graph $G$ and a vertex $u$, incident to vertices $w_1$, $w_2$, $\ldots$, $w_n$, then a {\em vertex split} of $u$ on $i$ edges is a new graph obtained by (1) adding a new vertex $v$, (2) choose $k$ edges $(u, w_1)$, $\ldots$, $(u, w_k)$ incident to $u$ and remove them, then connect $v$ to $w_1$, $w_2$, $\ldots$, $w_k$, and (3) add a new edge $(u, v)$ and $i$ edges from $v$ to $i$ neighbors of $u$. 
\end{definition}

From \cite{WhiteleyVertexSplitting1990}, we know that in $\mathbb{R}^{3}$, vertex split on 0, 1, or 2 edges preserves independence of the graph.

\medskip\noindent
We can use the above concepts to build on existing graphs to obtain another scheme as follows:

\begin{scheme}[Standard-scheme]\label{scheme:standard}\hfill

{\bf Input graphs:} Graphs $G_1$ and $G_2$;

\medskip
{\bf Output graph:} There are four types of output graphs as the following:
\begin{description}
\item[Type I.] A graph $G$ after applying $1-$, $2-$ or $3$-sum on $G_1$ and $G_2$;
\item[Type II.] A graph $G$ after applying Henneberg-I constructions on $G_1$;
\item[Type III.] A graph $G$ after applying Henneberg-II constructions on $G_1$;
\item[Type IV.] A graph $G$ after applying vertex split on 0, 1 or 2 edges on $G_1$.
\end{description}
\end{scheme}

\medskip\noindent
The next theorem proves the correctness of the Standard-scheme.
\begin{theorem}\label{thm:sums}
For the standard-scheme (Scheme \ref{scheme:standard}), we have the following:
\begin{description}
\item[Type I.] If graphs $G_1$ and $G_2$ are both nucleation-free, independent with implied non-edges, then their 1-sum and 2-sum are nucleation-free, independent with implied non-edges. When applying $3$-sum, if the $K_3$ in the intersection of $G_1$ and $G_2$ is not a part of a $K_4$ in either $G_1$ or $G_2$, then the output graph $G$ also nucleation-free, independent with implied non-edges.

\item[Type II.] Let $G_1$ be a nucleation-free, independent with implied non-edges. Apply Henneberg-I construction by adding a new vertex $v$ and three edges $(v,w_1)$, $(v,w_2)$, $(v,w_3)$, such that $w_1$, $w_2$ and $w_3$ is not part of a $K_4$ in $G_1$. Then the output graph $G$ is a nucleation-free, independent with implied non-edges.

\item[Type III.] Let $G_1$ be a nucleation-free, independent. Apply Henneberg-II construction as follows: choose a vertex set $W=\{w_1, w_2$, $w_3, w_4\}$ $G_1$ such that (1) there is at least $1$ edge induced by $W$ and (2) $W$ does not induce a $K_4$, then we delete $1$ edge between the vertices of $W$, and add a new vertex $v$ and four edges $(v,w_1)$, $(v,w_2)$, $(v,w_3)$, $(v,w_4)$. Then the output graph $G$ is nucleation-free and independent.

\item[Type IV.] If $G_1$ is nucleation-free, independent, then applying vertex split on 0, 1, or 2 edges on $G_1$ outputs a nucleation-free and independent graph $G$. 
\end{description}

\end{theorem}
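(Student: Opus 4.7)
The plan is to prove the four Types one at a time, and within each Type to verify three properties in order: independence, nucleation-freeness, and (where applicable) preservation of implied non-edges. Independence in every case follows from an existing theorem: for Type I, $1$-, $2$-, and $3$-sums of independent graphs are independent (the latter requiring the stated $K_4$-avoidance condition) \cite{BelkConnelly07, Sitharam:2010:CGC}; for Type II, Henneberg-I preserves independence \cite{bib:TayWhiteley85}; for Type III, Theorem \ref{thm:Henneberg}; for Type IV, \cite{WhiteleyVertexSplitting1990}. Thus the substantive work is in establishing nucleation-freeness and in tracking implied non-edges through the operations.

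For preservation of implied non-edges (Types I and II only), I would argue locally and uniformly: if $\{a,b\}$ is implied in $G_i$, its row in the generic rigidity matrix is a linear combination of the edge-rows of $G_i$; in $G$, all edges of $G_i$ are still present (the $k$-sum only identifies vertices of the shared $K_k$, and Henneberg-I only adds vertices and edges), and any new vertex $v$ contributes columns that are zero at the coordinates of $a$ and $b$. Hence the same linear dependence persists in $G$, so $\{a,b\}$ remains implied. For $3$-sums we additionally observe that both endpoints of an implied non-edge cannot both lie in the identified $K_3$ (they would already be joined by an edge).

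Nucleation-freeness is the crux. In each Type I would argue by contradiction: assume $G$ contains a rigid induced subgraph $N$ on some $U\subseteq V(G)$ with $|U|\ge 5$. For Type I, partition $U=U_1\sqcup U_2$ along the sum; if $U\subseteq V(G_i)$ for some $i$, then $N$ is already a nucleation in $G_i$, contradicting the hypothesis. Otherwise $N$ is a $k$-sum (for $k\in\{1,2,3\}$) of the induced subgraphs on the $U_i$, glued along at most $3$ shared vertices. Two flexible pieces glued along a single vertex ($k=1$) or a single edge ($k=2$) cannot form a rigid body; and for $k=3$ the hypothesis that the glued $K_3$ is not inside a $K_4$ in either side is exactly what prevents one of the induced subgraphs from secretly being rigid via a $K_4$. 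For Types II, III, IV, any hypothetical nucleation $N$ must contain the newly added vertex $v$ (otherwise $N$ lives inside $G_1$ and contradicts the hypothesis). A simple degree argument forces $v$ to have degree exactly $3$ in $N$ (Type II), exactly $4$ in $N$ (Type III), or a prescribed split-degree (Type IV); then the inverse of the respective construction operation (inverse Henneberg-I, inverse Henneberg-II by re-adding the deleted edge, or vertex-merge) applied to $N$ produces a rigid subgraph $N'\subseteq G_1$ with $|N'|=|U|-1\ge 4$. If $|N'|\ge 5$, then $N'$ is a nucleation in $G_1$, contradicting the hypothesis; the remaining case $|N'|=4$ forces $N'=K_4$, which is ruled out precisely by the $K_4$-avoidance conditions imposed on $\{w_1,w_2,w_3\}$ (Type II), $\{w_1,w_2,w_3,w_4\}$ (Type III), and the neighborhood structure of $u$ in the split (Type IV).

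The main obstacle I anticipate is the Type~I $3$-sum case and the Type~III Henneberg-II case, where a single edge configuration around the glue (respectively around the deleted edge) can cooperate with existing flexible structure to create a rigid nucleation that does not descend to a nucleation in $G_1$. This is exactly why the $K_4$-avoidance conditions appear in the statement, and the careful bookkeeping is in showing that the inverse operation on $N$ really produces a genuine ($\ge 5$-vertex) nucleation in $G_1$ whenever the $K_4$-avoidance hypothesis holds. The Type~IV vertex-split case requires additionally splitting into subcases according to how the neighbors of $u$ are distributed between $u$ and $v$ in $N$, but each subcase reduces to the same inverse-operation argument once the $K_4$-avoidance has been established.
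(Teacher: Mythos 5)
Your overall architecture matches the paper's: independence from the cited constructions, nucleation-freeness by assuming a rigid induced subgraph $N$ and applying an inverse operation, and persistence of implied non-edges by tracking the linear dependence. Type~I and Type~II track the paper closely. However, there is a concrete error in your Type~III argument that the paper's proof (brief as it is) does not make.

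You write that ``a simple degree argument forces $v$ to have degree $\ldots$ exactly $4$ in $N$ (Type III).'' This is false. For a rigid graph, every vertex must have degree at least $3$, and $v$ has degree $4$ in $G$, so $v$ has degree $3$ \emph{or} $4$ in the induced subgraph $N$; nothing forces $4$. The degree-$3$ case happens whenever $N$ contains only three of the $w_i$'s. In that case the correct inverse operation is Henneberg-I (not Henneberg-II, and no re-insertion of the deleted edge $e$ is needed), giving a rigid $N' = N \setminus v \subseteq G_1 - e \subseteq G_1$. If $|N|=5$ then $N'$ is a $K_4$ on $\{w_1,w_2,w_3,z\}$ with $z\ne w_4$, and this is \emph{not} excluded by the stated hypothesis ``$W=\{w_1,w_2,w_3,w_4\}$ does not induce a $K_4$.'' Your argument simply never sees this case because of the incorrect degree claim. (The paper's proof asserts ``for (1) to happen, we need $W$ to induce a $K_4$,'' which overlooks the same configuration; but your proposal adds an explicitly wrong step on top of it, so this is a genuine gap in your reasoning, not merely a shared imprecision.) Contrast with Type~II, where the hypothesis is the stronger ``$w_1,w_2,w_3$ is not part of a $K_4$,'' which does cover the analogous $\{w_1,w_2,w_3,z\}$ configuration.

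A smaller point: you invoke ``$K_4$-avoidance conditions imposed on $\ldots$ the neighborhood structure of $u$ in the split (Type IV),'' but the theorem statement imposes no such condition in Type~IV. Also, your vertex-merge step for Type~IV silently assumes that the inverse of a vertex split on $0$, $1$, or $2$ edges sends rigid graphs to rigid graphs; you should at least cite or justify that, since the cited references give only the forward direction (preservation of independence under the split).
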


\begin{proof}(Of Theorem \ref{thm:sums})
We prove the three types of output graphs for Scheme \ref{scheme:standard} as the following:

{\bf Type I.} If $G_1$ and $G_2$ are both independent, then (1) it follows by inspecting the motion space of $G$ (right null space of the generic rigidity matrix of $G$) as a combination of the motion spaces of $G_1$ and $G_2$ that the 1-sum and 2-sum of $G_1$ and $G_2$ are both independent; and (2) it follows from the properties of abstract rigidity matroid \cite{graver:servatius:rigidityBook:1993} (especially axiom $C6$) that the 3-sum of $G_1$ and $G_2$ is independent. Moreover, it is easy to see that the implied non-edges in $G_1$ and $G_2$ remain implied in the resulting graphs. For 1-sum and 2-sum, the resulting graph cannot contain any nucleation by simply counting the necessary number of edges for a graph to be rigid. If a 3-sum creates a nucleation, then there must be two subgraphs $G_1^\prime$ of $G_1$ and $G_2^\prime$ of $G_2$ that are rigid and both properly contain the $K_3$ shared by $G_1$ and $G_2$. The only rigid subgraphs available in nucleation-free, independent graphs are trivial ones and in this case, only $K_4$. Thus if the shared $K_3$ is not a part of a $K_4$ in $G_1$ or not a part of a $K_4$ in $G_2$, the resulting graph is nucleation-free.

\medskip
{\bf Type II.} If $G_1$ is nucleation-free, independent with implied non-edges, then after applying Henneberg-I construction, the output graph $G$ is independent since Henneberg-I construction preserves independence. To show $G$ is nucleation-free, we only need to show that the added vertex $v$ is not part of any nucleation. Suppose otherwise, then $v$ is either in (1) a $K_5$ with 1 edge missing or (2) a nucleation with at least $6$ vertices. For (1) to happen, we need $w_1$, $w_2$ and $w_3$ to be part of a $K_4$, which is false. If (2) is true, then we know that in $G_1$, there was a nucleation with at least $5$ vertices, contradiction.

\medskip
{\bf Type III.} If $G_1$ is nucleation-free, independent, then after applying Henneberg-II construction, the output graph $G$ is independent since Henneberg-II construction preserves independence. To show $G$ is nucleation-free, we only need to show that the added vertex $v$ is not part of any nucleation. Suppose otherwise, then $v$ is either in (1) a $K_5$ with 1 edge missing or (2) a nucleation with at least $6$ vertices. For (1) to happen, we need $W$ to induce a $K_4$, which is false. If (2) is true, then we know that in $G_1$, there was a nucleation with at least $5$ vertices, contradiction.

\medskip
{\bf Type IV.} If $G_1$ is nucleation-free, independent, then after applying vertex split on 0, 1, or 2 edges on $G_1$, we have an independent graph $G$. This graph is also nucleation-free since the added vertex connects only to vertices that were incident to a common vertex of $G_1$.
\end{proof} 

\medskip\noindent
{\bf Remark:} The proof of Theorem \ref{thm:sums} (Type I and II) follows from the rank-sandwich proof technique but the rank upper bound ingredient is not needed since implied non-edges are inherited from the constituent graphs. With Theorem \ref{thm:sums}, we can inductively apply $1$-sums and $2$-sums and the resulting graphs are always nucleation-free. But for $3$-sums, if both $K_3$'s we identify were part of a $K_4$ in the original graphs, the resulting graph has a nucleation. Moreover, if we do $3$-sums on nucleation-free rigid graphs, then we can still obtain independent graphs, but those graphs are not nucleation-free.

\section{Roof-addition: General Inductive Construction for nucleation-free, independent Graph with Implied Non-edges}\label{sec:indep}

In this section, we introduce a powerful, new general inductive construction scheme, called {\em roof-addition}, for inductively constructing a variety
of arbitrarily large nucleation-free, independent graphs with implied non-edges. These graphs cannot be constructed using the schemes so far. However, we show that the schemes in the previous section preserve the
starting graph properties needed to apply the roof-addition scheme of this
section. Thus the new, roof-addition scheme can be freely combined with
the schemes in the previous section.

First, Theorem \ref{thm:inductive} shows that a new inductive construction gives independent graphs. The other two properties (nucleation-free and presence of implied non-edges) impose further requirements on the input graphs for the roof-addition Scheme \ref{scheme:roof}. The proof of Theorem \ref{thm:secondInductiveApproach} uses the rank-sandwich technique to show that the roof-addition construction on the appropriate starting graphs result in nucleation-free graphs with implied non-edges. 

\subsection{Inductive construction for independent graphs}
In this section, we introduce an inductive construction for independent graphs.

\begin{center}
\begin{figure}[!htbp]
\includegraphics[width=0.5\textwidth]{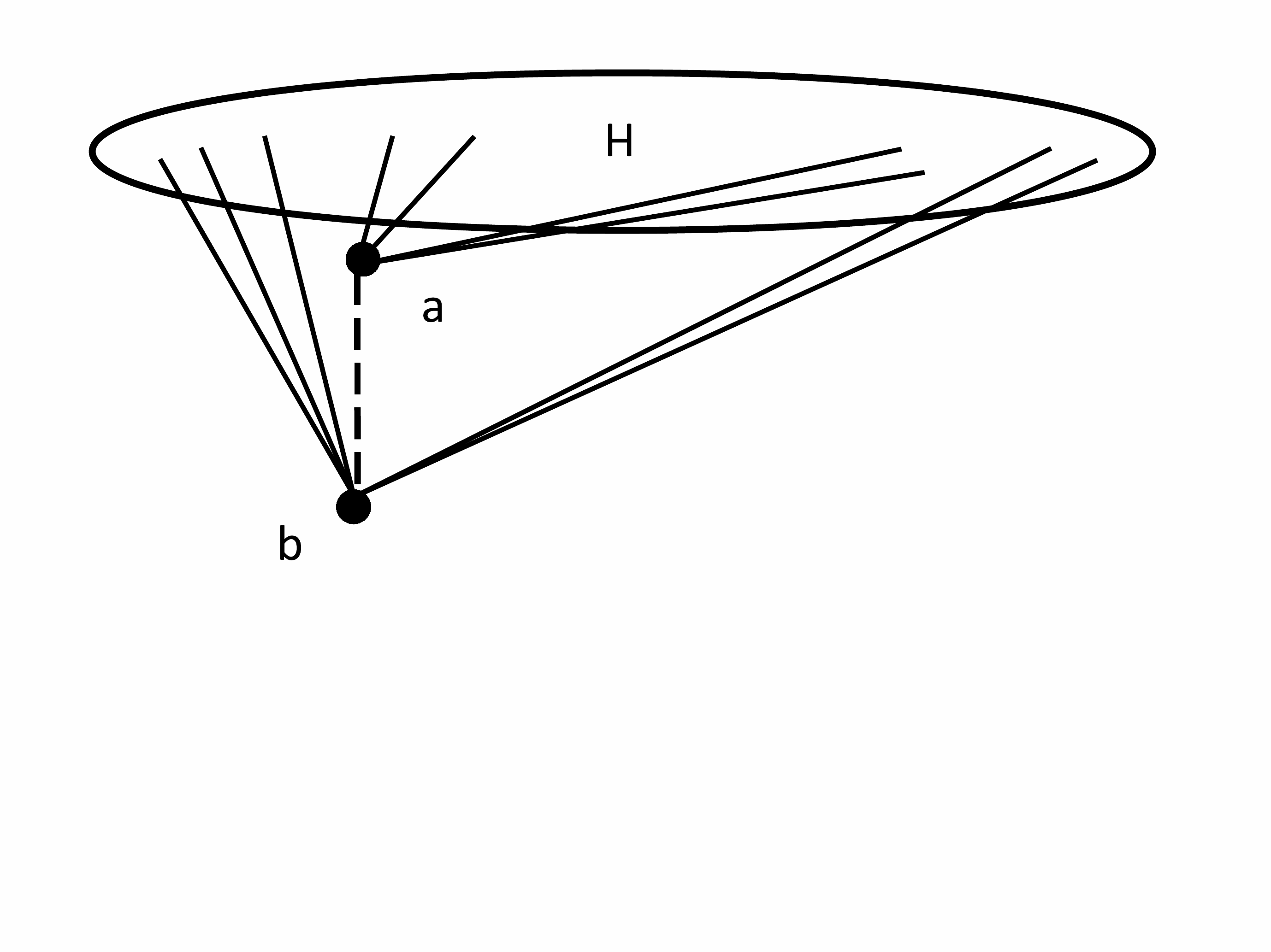}\includegraphics[width=0.5\textwidth]{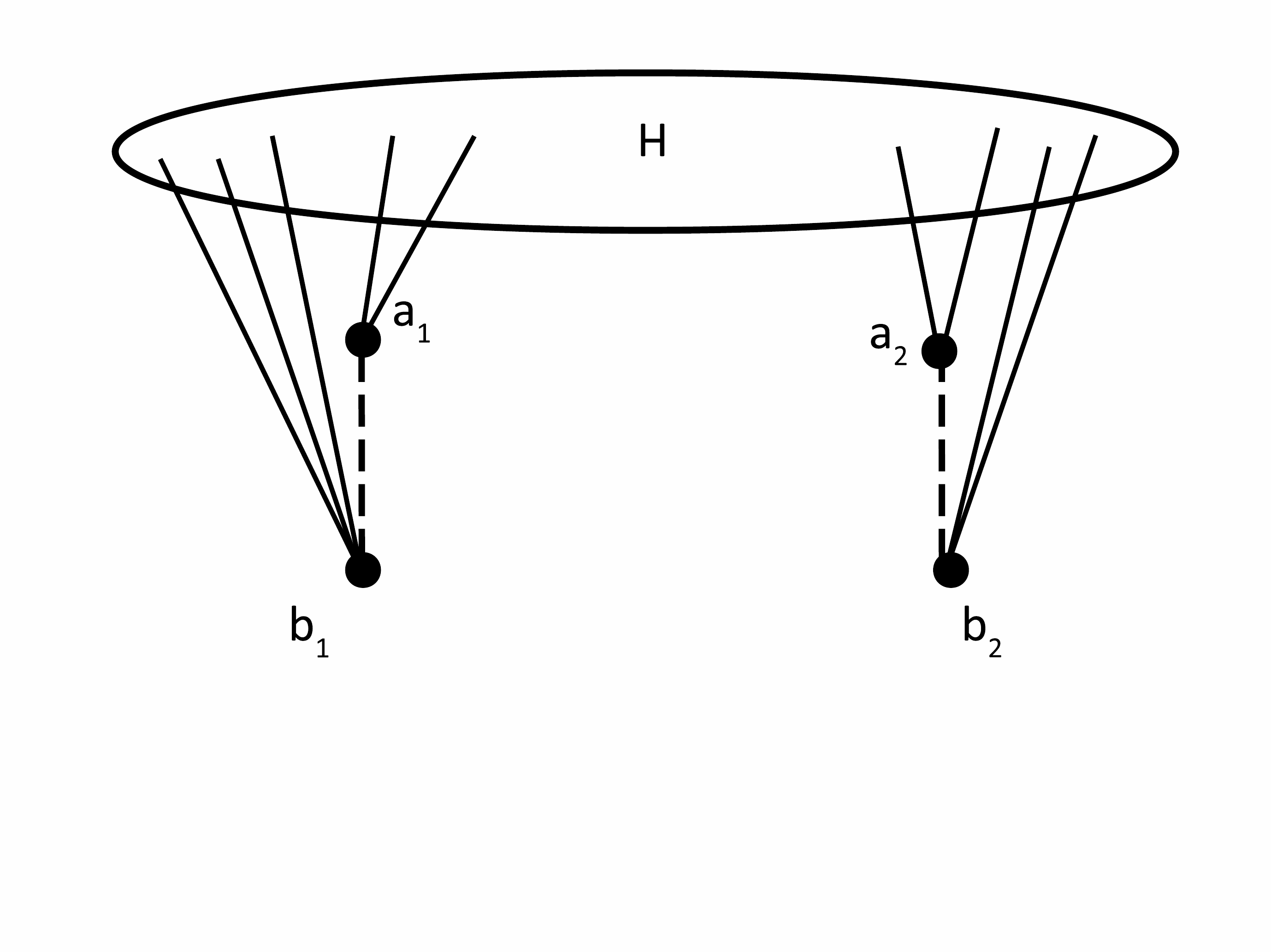}
\caption{A schematic showing how to apply graph cutting of $G$ on $\{a, b\}$}\label{fig:cutting}
\end{figure}
\end{center}

\begin{scheme}[Roof-addition]\label{scheme:roof}\hfill

{\bf Input graph:} Graph $H$ with at least one non-edge.

\medskip
{\bf Output graph:}
A new graph $G$ obtained in $2$ steps.
\begin{description}
\item[Step 1] ({\em graph cutting} along non-edge $\{a, b\}$). Split $a$ into two vertices $a_1$ and $a_2$ and split $b$ into two vertices $b_1$ and $b_2$. Distribute edges of $H$ incident to $a$ by assigning them to $a_1$ and $a_2$ in an arbitrary manner. Distribute edges of $H$ incident to $b$ by assigning them to $b_1$ and $b_2$ in an arbitrary manner. Fig. ~\ref{fig:cutting} shows the procedure of graph cutting.
\item[Step 2]({\em roof pasting}). Take two roofs $R_1$ and $R_2$ sharing a hinge non-edge $\{u, v\}$. The other hinge non-edge of $R_1$ is identified with the vertices $a_1$ and $b_1$, and the other hinge non-edge of $R_2$ is identified with the vertices $a_2$ and $b_2$. Denote the non-hinge vertices of $R_1$ and $R_2$ as $c$ and $c^\prime$ respectively. 
\end{description}
Fig. ~\ref{fig:inductive} shows how to apply roof-addition scheme on a given graph $H$.
\end{scheme}

\begin{center}
\begin{figure}[!htbp]
\includegraphics[width=0.5\textwidth]{roof-additionBeforeCut-eps-converted-to.pdf}\includegraphics[width=0.5\textwidth]{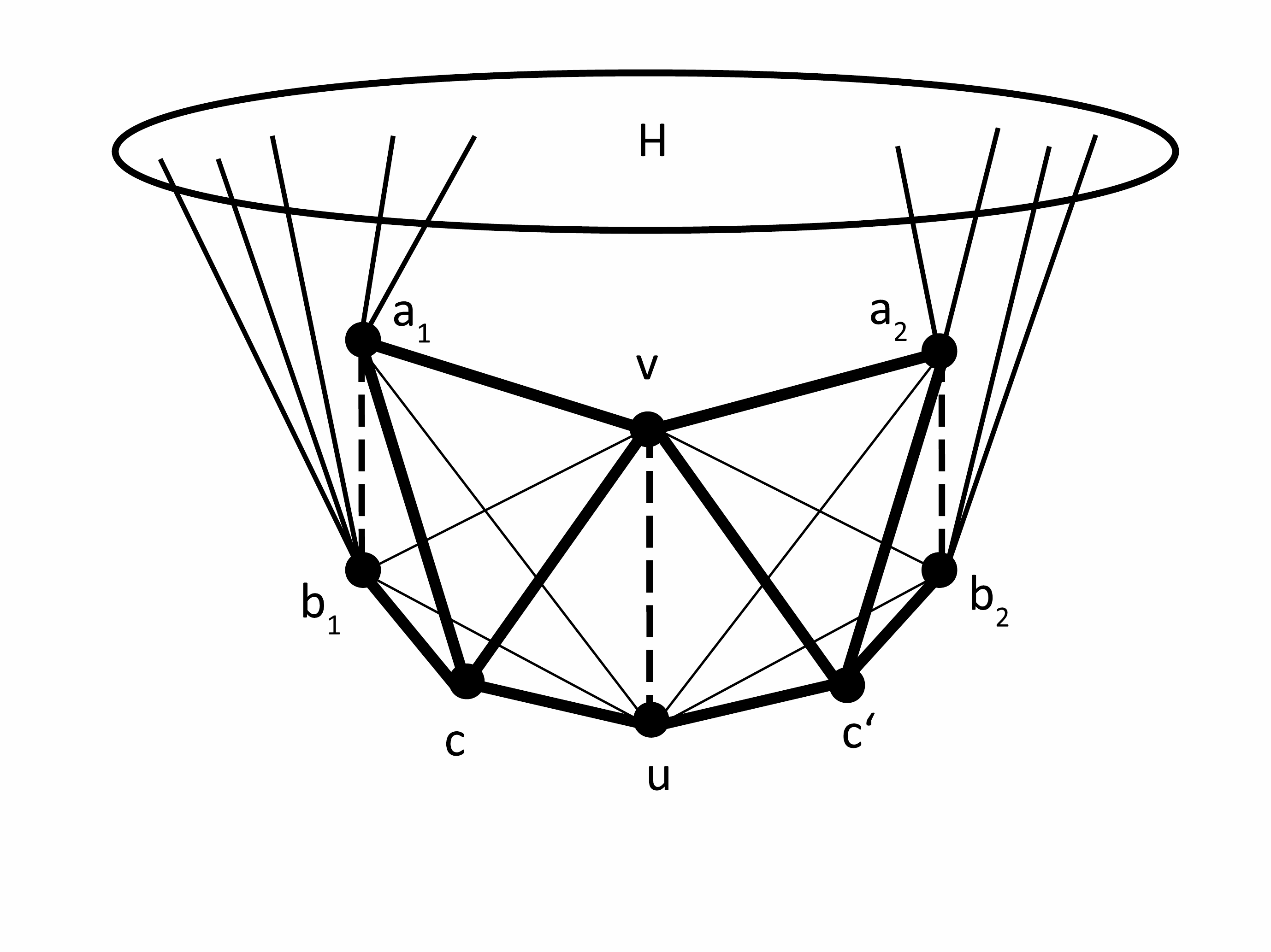}
\caption{A schematic showing how to inductively construct independent graphs by roof-addition. First we identify a non-edge pair $\{a, b\}$, cutting it by splitting $a$ and $b$ and distribute incident edges among each split vertex and its counterpart in an arbitrary fashion. Then we regard two new non-edges as two hinges and add two roofs between them.}\label{fig:inductive}
\end{figure}
\end{center}

Next we show that the roof-addition scheme can be applied to inductively construct independent graphs.
\begin{theorem}[Roof-addition gives independent graphs]\label{thm:inductive}
If $H$ is independent, then the roof-addition scheme (Scheme \ref{scheme:roof}) outputs an independent graph $G$.
\end{theorem}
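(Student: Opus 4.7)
My plan is to verify Theorem~\ref{thm:inductive} by decomposing the roof-addition operation into a sequence of moves, each known to preserve independence in $3D$, and completing the last stage with a direct stress analysis.

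\emph{Graph cutting preserves independence.} Splitting $a$ into $a_1, a_2$ (and $b$ into $b_1, b_2$) and distributing the incident edges, without adding any edge between the split copies, preserves independence. To see this, embed $H'$ at a specialized configuration with $p_{a_1} = p_{a_2} = p_a$ and $p_{b_1} = p_{b_2} = p_b$. Summing the three columns of $a_1$ with those of $a_2$ in $R(H')$ (and analogously for $b$) recovers the rigidity matrix $R(H)$. Column-summing is a linear projection on the column space and cannot increase rank, so
\[
\mathrm{rank}\bigl(R(H')\bigr) \;\ge\; \mathrm{rank}\bigl(R(H)\bigr) \;=\; |E(H)| \;=\; |E(H')|.
\]
Since rank is bounded above by the number of rows, equality is forced, and $R(H')$ has full row rank at the specialized embedding; by semi-continuity of rank, the same holds at a generic embedding, so $H'$ is independent.

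\emph{Attaching the first roof $R_1$.} Starting from $H'$, I attach $R_1$ by three moves. First, adjoin $c$ with two edges $a_1 c$ and $b_1 c$; this degree-$2$ attachment introduces two new rows whose entries in $c$'s three fresh columns are the generically non-collinear vectors $p_c - p_{a_1}$ and $p_c - p_{b_1}$, and are zero in all previous columns, so these rows are independent of each other and of all prior rows. Next, adjoin $u$ with three edges $a_1 u, b_1 u, c u$, a Henneberg-I construction that preserves independence in $3D$ by Tay--Whiteley. Finally, adjoin $v$ with three edges $a_1 v, b_1 v, c v$, a second Henneberg-I. After these moves, $H' + R_1$ is independent and contains all edges of $R_1$.

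\emph{Attaching the second roof $R_2$.} This is the heart of the proof. Attaching $R_2$ introduces the single new vertex $c'$ and eight new edges $\{a_2 c', b_2 c', u c', v c', a_2 u, a_2 v, b_2 u, b_2 v\}$. A naive Henneberg-II to introduce $c'$ with four edges to $\{a_2, b_2, u, v\}$ fails because these four vertices admit no edge among themselves in $H' + R_1$: by construction $\{a, b\}$ was a non-edge of $H$ (so all four pairs among $\{a_1, b_1, a_2, b_2\}$ are non-edges), while $\{u, v\}$ is a roof hinge and $u, v$ have no edges to $a_2, b_2$ at this stage. My plan is to (i) first adjoin $c'$ with the two edges $a_2 c'$ and $b_2 c'$ as another degree-$2$ attachment, and (ii) show that the remaining six edges $\{a_2 u, a_2 v, b_2 u, b_2 v, u c', v c'\}$ together raise the rank by six. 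For step (ii), I expect to argue via an equilibrium-stress analysis on $G$: any stress $\omega$ must satisfy equilibrium at every vertex. At the degree-$4$ vertices $c$ and $c'$, the four incident stresses are each determined up to scalars $\lambda_c, \lambda_{c'}$ via the unique linear dependence among four generic vectors in $\mathbb{R}^3$. Equilibrium at the degree-$6$ vertices $u$ and $v$ then imposes relations on the cross-edge stresses. Finally, equilibrium at $a_1, b_1, a_2, b_2$ propagates these stresses as loads on $H'$, and independence of $H'$ forces the $H'$-portion of the stress to vanish, which cascades back to force $\lambda_c = \lambda_{c'} = 0$ and hence $\omega \equiv 0$.

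The hard part will be this force-propagation step: rigorously showing that the particular load pattern imposed by $R_1 \cup R_2$ at the four attachment vertices cannot be absorbed by a non-trivial stress on $H'$. An alternative route is to first prove that the standalone structure $R_1 \cup R_2$ is independent---which follows from the body-hinge identification $R_1 \cup R_2 + \{u v\} = B_1 \cup B_2$, the $2$-sum of two bananas $B_1, B_2 = K_5 - e$ along the edge $\{u v\}$ (independent by the Tay--White--Whiteley body-hinge theorem), together with the fact that removing an edge preserves independence---and then verify that identifying the four pairwise non-adjacent attachment vertices between $H'$ and $R_1 \cup R_2$ does not decrease the rank of the rigidity matrix, using the genericity of $H'$ and the non-impliedness of the attachment non-edges in each piece.
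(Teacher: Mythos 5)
Your decomposition handles the first roof correctly (graph cutting, one degree-$2$ vertex, two Henneberg-I extensions), but the second roof is where the proof actually lives and neither of the two routes you sketch for it is complete or correct.

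On the primary route, you reduce to showing that after a degree-$2$ attachment of $c'$ the remaining six edges $\{a_2u,\ a_2v,\ b_2u,\ b_2v,\ uc',\ vc'\}$ raise the rank by six, and you say you ``expect to argue via an equilibrium-stress analysis'' with force propagation back to $H'$. That is precisely what the paper does, but the ingredient that makes it tractable — and which you do not supply — is a carefully chosen \emph{non-generic} framework of $G$: $a_1$ coincident with $a_2$, $b_1$ coincident with $b_2$, the four hinge points arranged as a square in the $xy$-plane, and $c,c'$ on the perpendicular axis through the square's center. The reflection symmetry of this configuration is what lets one show (Claims \ref{clm:stressC}--\ref{clm:nonzero} of the paper) that any self-stress on $G$ restricts to a non-zero self-stress on the glued-back generic framework of $H$. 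Without a special position, the stress propagation through the degree-$6$ vertices $u,v$ and then through $a_1,b_1,a_2,b_2$ has no obvious structure, and the ``cascade'' you describe does not follow; stating that ``equilibrium at $a_1,b_1,a_2,b_2$ propagates these stresses as loads on $H'$'' is the statement to be proved, not an argument. (Also note that the paper's argument shows independence at a special framework and then appeals to semi-continuity of rank, the opposite direction from where you invoke it.)

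On the alternative route, the claim that identifying the four pairwise non-adjacent attachment vertices between two independent graphs preserves rank ``using the genericity of $H'$ and the non-impliedness of the attachment non-edges in each piece'' is false as stated. The double-banana is a $2$-sum of two copies of $K_5 - e$ along the missing edge: each piece is independent, the two identified vertices are non-adjacent in each piece, and yet the result is dependent. Non-impliedness of the shared non-edges in each piece individually is not a sufficient condition for the gluing to stay independent — indeed, proving that it does in this specific situation is exactly the content of Theorem \ref{thm:inductive}, so this route is circular. You would need a genuinely new gluing lemma, and the paper supplies one only in the special-position stress form described above.
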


Before giving the proof of Theorem \ref{thm:inductive}, we first notice that roof-addition is different from existing inductive methods such as Henneberg construction and vertex split that generate independent graphs.
\begin{obs}\label{obs:different}
Given graph $H$, then the graph $G$ generated by roof-addition on $H$, cannot be generated by any combination of Henneberg constructions and/or vertex split on $H$ with edge removal in the end.
\end{obs}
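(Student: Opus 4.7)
The plan is a combined counting and structural argument. First, I would tally the additions precisely: roof-addition adds exactly $6$ new vertices ($a_2, b_2, u, v, c, c'$) and exactly $16$ new edges, since graph-cutting only redistributes the edges incident to $a$ and $b$ and each of the two roofs contributes $8$ edges. Each of Henneberg-I, Henneberg-II, and vertex split on at most $2$ edges adds exactly one vertex, so any sequence producing $G$ from $H$ must consist of exactly $6$ operations. Since each such operation adds at most $3$ net edges (Henneberg-I adds $3$; Henneberg-II adds $4$ and removes $1$; vertex split on $i \le 2$ edges adds $i+1 \le 3$), the operations together contribute at most $18$ edges, leaving a budget of at most $2$ edges to be removed at the end.

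Second, I would derive a contradiction from the local structure around the roof-top vertex $c$ (and symmetrically $c'$). In $G$, $c$ has degree exactly $4$ with neighborhood $\{a_1, b_1, u, v\}$, and the induced subgraph on these four vertices is the $4$-cycle containing the edges $(a_1, u), (a_1, v), (b_1, u), (b_1, v)$ with non-edges $\{a_1, b_1\}$ and $\{u, v\}$. None of the four edges of this $4$-cycle, and neither of its two non-edges, is ever created by an operation that does not simultaneously introduce $c$ itself, since Henneberg and vertex split only add edges incident to the newly inserted vertex. Hence Henneberg-I is excluded by degree (it yields $c$ with only $3$ edges and no later operation adds any edge incident to $c$); Henneberg-II creating $c$ must delete one edge among $\{a_1, b_1, u, v\}$, but deleting any of the four $4$-cycle edges forces a missing edge in $G$ that cannot be reinstated, while $\{a_1, b_1\}$ and $\{u, v\}$ are not edges in any intermediate graph under the depleted removal budget; and vertex split would force $c$ to be adjacent to its split parent, yet no candidate parent has all of $\{a_1, b_1, u, v\}$ simultaneously in its neighborhood at the time of insertion. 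The argument for $c'$ is symmetric.

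The main obstacle is handling the interaction between orderings of the $6$ operations and the $2$-edge removal budget. In particular, the redistribution effected by graph-cutting must also be paid for: Henneberg operations never transfer an existing edge to a new vertex, so any non-trivial redistribution of $a$'s and $b$'s neighbors requires vertex splits on $a$ and $b$, which in turn introduce the spurious edges $(a_1, a_2)$ and $(b_1, b_2)$ that consume the entire removal budget. Once the budget is exhausted, every attempt to introduce $c$ or $c'$ with a temporary fourth neighbor to be detached later either inflates the removal tally beyond $2$ or leaves a spurious edge surviving in $G$. Making this case analysis airtight across all possible operation orderings is the principal technical hurdle.
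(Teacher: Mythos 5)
Your opening counting lemma is correct and is a genuinely useful observation the paper does not make explicit: $G$ has $|V(H)|+6$ vertices and $|E(H)|+16$ edges, each of Henneberg-I, Henneberg-II, and vertex split on at most two edges adds exactly one vertex and at most three net edges, so any derivation uses exactly six operations with at most two terminal removals. The structural step at $c$, however, contains a genuine error. The claim that Henneberg-I creating $c$ is ``excluded by degree'' because ``no later operation adds any edge incident to $c$'' is false: a subsequent Henneberg-II introducing $u$ (or $v$) can list $c$ among its four anchors, and a vertex split on a current neighbor of $c$ can raise $c$'s degree by choosing $c$ as one of the $i$ extra endpoints. Likewise, the assertion that $\{u,v\}$ and $\{a_1,b_1\}$ are never edges in any intermediate graph ``under the depleted removal budget'' presupposes how the budget is allocated; if $u$ is created before $v$, the operation creating $v$ can attach $v$ to $u$, making $(u,v)$ a live edge that could later be the one deleted by a Henneberg-II at $c$ or by the terminal removal. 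The vertex-split exclusion for $c$ is similarly asserted, not proved. So the local argument does not close the case analysis, and the ``principal technical hurdle'' you flag at the end is not a loose end but the substance of the proof.

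For comparison, the paper's argument proceeds by operation type (Henneberg-I cannot redistribute incidences; a lone Henneberg-II must delete an edge between two existing vertices that cannot be restored; a lone vertex split introduces a spurious $(a_1,a_2)$ edge whose removal then fails to erase the other unwanted edge) and then, for the mixed case, asserts without detail that $c$ and $c'$ cannot be added. Both write-ups are informal in the combined case, but your six-operation / two-removal bound would be a welcome quantitative supplement to the paper's qualitative case split if the local claims at $c$ and $c'$ were actually established.
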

\begin{proof}
Let $\{a, b\}$ be the non-edge pair of $H$ on which the two roofs are added to obtain $G$. We know the edges on $a$ and $b$ are redistributed in order to obtain $G$.

First, applying only Henneberg-I constructions on $H$ does not change the incidence of the original edges of $H$, thus this cannot generate $G$.

If we only apply Henneberg-II construction on $H$, since $\{a, b\}$ is a non-edge of $H$,
we need to select two other vertices $w, x$ $\neq$ $a, b$ on which an incident edge is removed. Applying Henneberg-II constructions can never add back this edge $(w, x)$ and thus the resulting graph will be different from the output graph $G$ of Scheme \ref{scheme:roof}, which has the edge $(w, x)$.

If we only apply vertex split on $H$, there are two cases: (1) we apply a vertex split on a vertex $w$ $\neq$ $a, b$. This will leave the newly added vertex $w^\prime$ multiple paths to $V(H)\setminus\{a, b\}$ without passing $a$ or $b$, where $V(H)$ denotes the vertex set of $H$. This generally is not the same as the output graph of Scheme \ref{scheme:roof}. (2) We apply a vertex split on $a$ or $b$. This means one neighbor $t$ of $a$ will be connected to the newly added vertex $x$. Since vertex split will also add an edge between $a_1$ and $a_2$, in order the make the resulting graph the same as the required output graph, we need to apply edge removal on $(a_1, a_2)$. However, removing $(a_1, a_2)$ means the edge $(w, x)$ will not be removed, thus the resulting graph is different from the output graph of Scheme \ref{scheme:roof}.

When combining the above three constructions, it is obvious that the first step is a vertex split. The case (1) above is still generally different from our Scheme \ref{scheme:roof} due to the reason mentioned above. For case (2), even if we first apply vertex split of $a$ and $b$ on zero edges and manage to use Henneberg-II construction to add $u$ and $v$, the remaining two vertices $c$ and $c^\prime$ cannot be added using any combination of Henneberg constructions and vertex split.
\end{proof}

\medskip\noindent
Note however that known inductive methods, such as vertex split and/or Henneberg constructions, may be able to obtain $G$ with a different independent input graph $H^\prime$. For example, although a ring of roofs $\R$ cannot be obtained using Henneberg-II constructions from a ring of $k-2$ roofs, $\R$ can be obtained using Henneberg-II constructions from a ring of $k$ tetrahedra.

In the remainder of this section, we give the proof of Theorem \ref{thm:inductive}. First, we need the following definition:
\begin{definition}
Let $G=(V, E)$ be a graph with $n$ vertices $\{v_1, \ldots, v_n\}$. Any set of scalars $s_{i,j}=s_{j,i}$ defined for each edge $(v_i, v_j)$ in $E$ is called a {\em stress} for $G$. Moreover, we say $\mathbf{s}=(\ldots, s_{i,j}, \ldots)$ is a {\em self-stress vector} for framework $G(\p)$ if for any vertex point $\textbf{p}_i$ of $G(\p)$, the following {\em stress balance vector equation} at $\textbf{p}_i$ holds:
\begin{equation*}
\sum_j s_{i,j} (\hbox{\textbf{p}}_i -\hbox{\textbf{p}}_j) =0
\end{equation*}
I.e., $\mathbf{s}R(\p)=0$, where $R(\p)$ is the rigidity matrix of $G(\p)$. Each $s_{i,j}$ is called a {\em scalar self-stress} associated with the bar $ij$.
\end{definition}

Note that the stress balance equation at $\textbf{p}_i$ (boldface represents the coordinate position of a vertex) is used to refer to a system of $d$ scalar equations, one for each coordinate of the point $\textbf{p}_i$. We sometimes restrict ourselves to equations corresponding to some specified subset or subspace of the coordinate basis at $\textbf{p}_i$. We refer to this as a specific {\em projection of the stress balance equation} at $\textbf{p}_j$. In addition, we sometimes restrict ourselves to {\em terms $s_{i,j} (\textbf{p}_i -\textbf{p}_j)$ of a specific subset} of the stress balance equation at $\textbf{p}_i$, corresponding to the bars between $\p_i$ and a subset of points, $\textbf{p}_j$.

\begin{proof}
(of Theorem \ref{thm:inductive}) 

We will construct a specific framework of $G$ from any given generic framework $H(\p)$ of $H$ as specified below and shown in Fig.~\ref{fig:ringStress}. 

\begin{itemize}
\item we superpose $\a_2$ on top of $\a_1$ and $\b_2$ on top of $\b_1$. We assume without loss that the hinges are parallel to the $y$ axis. We call the other two hinge points $\A$ and $\B$.
\item the four hinge points  $\A$, $\B$, $\a_1$, $\b_1$ lie on the same plane, without loss, the $xy$-plane, and $\A$, $\B$, $\a_1$, $\b_1$ form a square.
\item the remaining two points $\c$ and $\c^\prime$ lie on a line perpendicular to and passing through the center of the square formed by $\A$, $\B$, $\a_1$, $\b_1$. Notice that this line is parallel to the $z$ axis. 
\end{itemize}

Assume that the new framework is dependent, i.e., there exists a non-zero self-stress vector on edges of the framework such that for each point $\p_i$, the stress balance equation at $\p_i$ holds. We will show that by simply restricting this self-stress vector to the edges of $H$, we get a non-zero self-stress vector on a generic framework $H(\p)$ of $H$ that is obtained by gluing together or identifying the points $\a_1$ and $\a_2$, and similarly the points $\b_1$ and $\b_2$.
But we knew $H$ is independent. Thus we draw a contradiction, thereby proving the theorem.

\begin{center}
\begin{figure}[!htbp]
\centering
\scalebox{0.5}[0.5]{
\includegraphics[width=\textwidth]{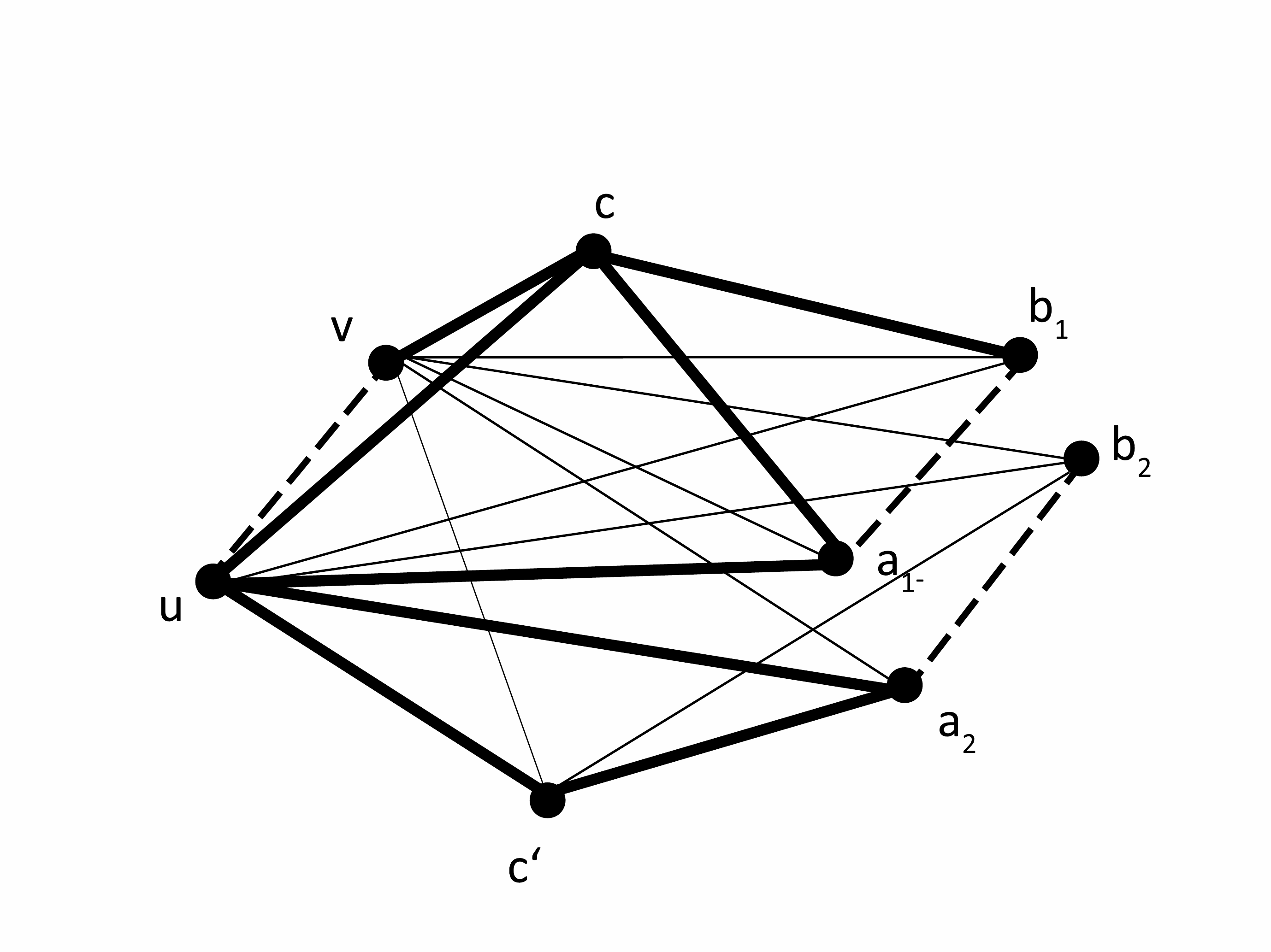}}
\caption{A schematic showing the frameworks, i.e., including the positioning of the vertices for the last two roofs for the proof of Theorem \ref{thm:inductive}. The two roofs are viewed different from a different perspective than Fig.~\ref{fig:inductive}. Note that the two points $\a_1$ and $\a_2$ represent different vertices that are coincident and similarly $\b_1$ and $\b_2$ represent different vertices that are coincident.}\label{fig:ringStress}

\end{figure}
\end{center}

\noindent
A closer look at the stress balance equations gives us the following claims.

\begin{clm}\label{clm:stressC}
The projection of $s_{\c,\A}(\c-\A)$ on the plane $xoy$ is equal to the projection of $s_{\c,\b_1}(\c-\b_1)$ on the plane $xoy$. The same happens between $s_{\c^\prime,\B}(\c^\prime-\B)$ and $s_{\c^\prime,\a_2}(\c^\prime-\a_2)$. The remaining two stresses on $\c$ have the same magnitude and so do the remaining stresses on $\c^\prime$.
\end{clm}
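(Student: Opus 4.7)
The plan is to derive the claim directly from the stress balance equations at the two apex vertices $\c$ and $\c'$, exploiting the explicitly symmetric placement of the vertices given just before the claim. Each apex has exactly four neighbors in $G$ (the four base vertices of its roof, since the hinges are the two non-edges of the $K_5$), so the stress balance at $\c$ is
\[
s_{\c,\A}(\c - \A) + s_{\c,\B}(\c - \B) + s_{\c,\a_1}(\c - \a_1) + s_{\c,\b_1}(\c - \b_1) = \mathbf{0},
\]
and the analogous equation holds at $\c'$ with $\a_1, \b_1$ replaced by the superposed points $\a_2, \b_2$.

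First I would project the balance equation at $\c$ onto the $xy$-plane. Because $\A, \B, \a_1, \b_1$ are the corners of a square centered at the origin with hinges parallel to the $y$-axis, and $\c = (0,0,h)$ lies on the $z$-axis through the center, the vectors $(\c-\A)_{xy}$ and $(\c-\b_1)_{xy}$ lie along one diagonal of the square while $(\c-\B)_{xy}$ and $(\c-\a_1)_{xy}$ lie along the other. The two diagonal directions are linearly independent in the $xy$-plane, so the projected balance equation decomposes into two independent scalar equations, one per diagonal. Each such equation pairs the two stresses lying along one diagonal and forces the terms $s_{\c,\A}(\c-\A)$ and $s_{\c,\b_1}(\c-\b_1)$ to have identical $xy$-projections (and similarly for $s_{\c,\B}(\c-\B)$ and $s_{\c,\a_1}(\c-\a_1)$), which gives the first two assertions.

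For the assertion about the remaining two stresses on $\c$, I would project the same balance equation onto the $z$-axis. Since every base vertex has $z$-coordinate $0$ while $\c$ has $z$-coordinate $h \ne 0$, this reduces to the single scalar equation $h\,(s_{\c,\A}+s_{\c,\B}+s_{\c,\a_1}+s_{\c,\b_1}) = 0$. Combined with the two diagonal relations derived above, this collapses the four stresses at $\c$ into a single magnitude, forcing the remaining two stresses (i.e., those along the second diagonal) to have the same magnitude as the first. The argument at $\c'$ is structurally identical because $\a_2$ and $\b_2$ are superposed on $\a_1$ and $\b_1$, so $\c'$ also sits on the $z$-axis over the same square base; the only superficial difference in the claim, namely that the specified pairing at $\c'$ is $(\c',\B)$ with $(\c',\a_2)$ rather than $(\c',\A)$ with $(\c',\b_2)$, is automatic from the labeling.

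The main obstacle I anticipate is bookkeeping rather than mathematical difficulty: matching the labels $\A,\B,\a_1,\b_1,\a_2,\b_2$ to the correct corners of the square given the hinge orientation, and consistently tracking signs when interpreting "equal projections" along an antipodal diagonal. Once the coordinates are pinned down (for instance $\A = (1,1,0)$, $\B = (1,-1,0)$, $\a_1 = (-1,1,0)$, $\b_1 = (-1,-1,0)$, $\c = (0,0,h)$ with $h > 0$, and the analogous placement of $\c'$), the entire claim reduces to inspecting the $3 \times 4$ linear system obtained from the stress balance and reading off the two diagonal and one axial relations among its unknowns.
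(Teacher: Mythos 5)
Your proof is correct and follows essentially the same route as the paper's: both project the stress balance at the apex onto the $xy$-plane and decompose along the two perpendicular diagonal directions of the square base, so each diagonal pair of projected terms must sum to zero. The one place you deviate is the final assertion, about the remaining two stresses on $\c$ having the same magnitude. The paper extracts this directly from the second diagonal relation, since $(\c-\B)_{xy}$ and $(\c-\a_1)_{xy}$ point in opposite directions with the same length (both run from the center of the square to a corner), so $s_{\c,\B}(\c-\B)_{xy}+s_{\c,\a_1}(\c-\a_1)_{xy}=0$ already gives $|s_{\c,\B}|=|s_{\c,\a_1}|$; no $z$-information is needed. Your additional $z$-axis projection, $h(s_{\c,\A}+s_{\c,\B}+s_{\c,\a_1}+s_{\c,\b_1})=0$, is not wrong and in fact proves the stronger fact that all four stresses at $\c$ have a common magnitude with alternating signs, but for this claim it is an unnecessary detour. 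Finally, note that your parenthetical ``which gives the first two assertions'' is slightly mislabeled (the claim's second assertion is the pairing at $\c'$, not the second diagonal pairing at $\c$), but you do handle $\c'$ correctly afterward by symmetry, so the content is complete.
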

\begin{proof}
Consider all stress equations terms at $\c$. All those four terms add up to zero. In particular, if we project the terms on the plane $xoy$, then their projections lie on two lines that are perpendicular to each other. Thus sum of the projections on each line should add up to zero, thus proving the claim. See Fig.~\ref{fig:ringStressC}.
\end{proof} 

\begin{center}
\begin{figure}[!htbp]
\centering
\scalebox{0.5}[0.5]{
\includegraphics[width=\textwidth]{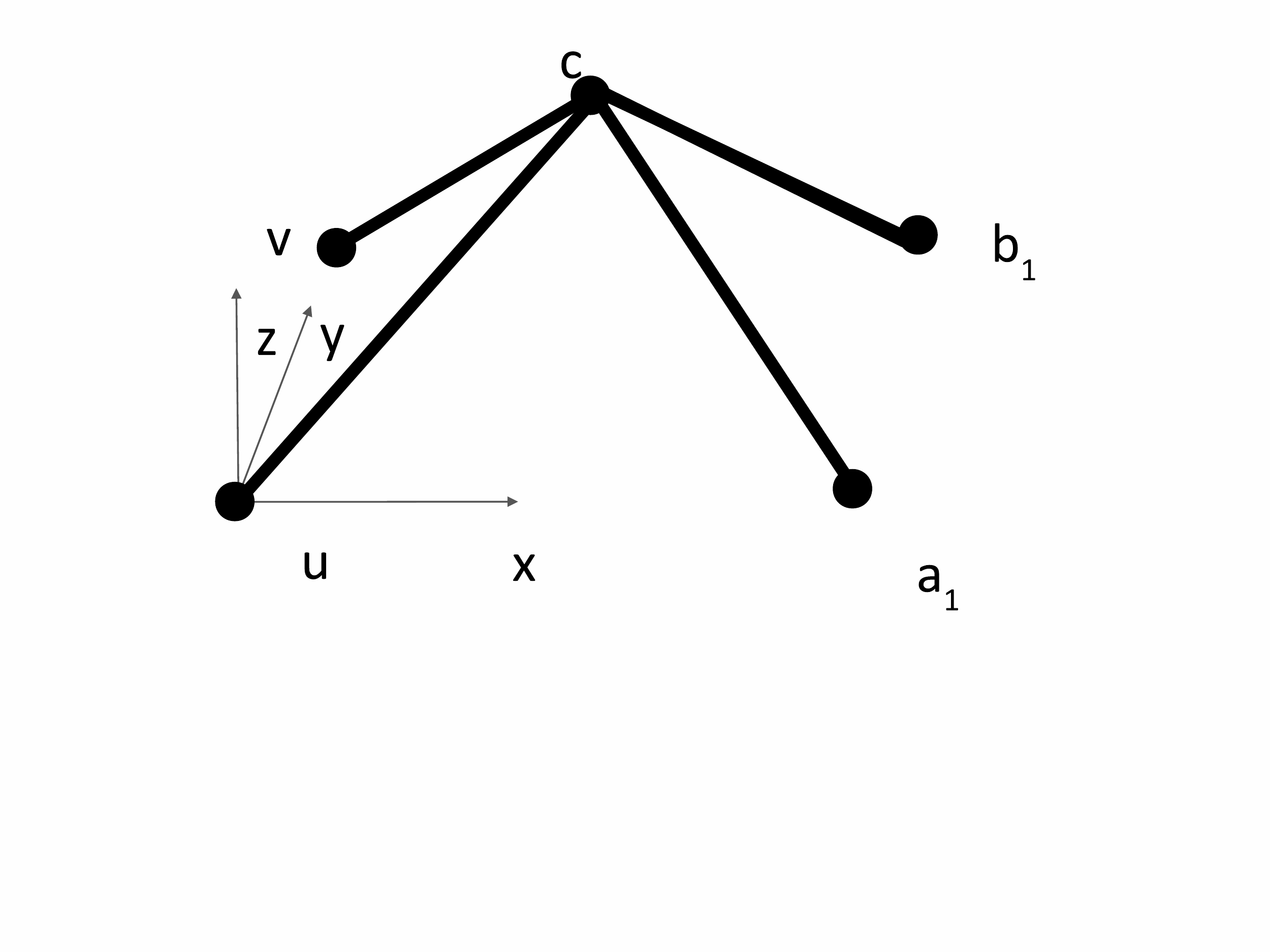}}
\caption{A figure showing the symmetry of the five edges at $\c$.}\label{fig:ringStressC}

\end{figure}
\end{center}

\begin{clm}\label{clm:stressA}
Consider the terms of the stress balance equation at $\a_1$ and $\a_2$ corresponding to the bars in the two added roofs. All these terms together add up to zero. Similarly, the analogous terms at $\b_1$ and $\b_2$ add up to zero.
\end{clm}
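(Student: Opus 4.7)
The plan is to prove Claim~\ref{clm:stressA} by direct coordinate computation, using the full stress balance equations at $\A$ and $\B$ --- which involve only roof bars, since $\A$ and $\B$ are not vertices of $H$ --- together with the constraints on stresses incident to $\c$ and $\c^\prime$ already obtained in Claim~\ref{clm:stressC}. I would fix explicit coordinates consistent with the setup in Fig.~\ref{fig:ringStress}: take $\A = (0,1,0)$, $\B = (0,-1,0)$, $\a_1 = \a_2 = (2,1,0)$, $\b_1 = \b_2 = (2,-1,0)$, and the apex vertices at $\c = (1,0,h_1)$, $\c^\prime = (1,0,h_2)$ with $h_1 \neq h_2$ both nonzero. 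Combining the three scalar stress balance equations at $\c$ with Claim~\ref{clm:stressC}, one sees that all four stresses incident to $\c$ are determined by a single scalar $\alpha := s_{\c,\A}$ (explicitly $s_{\c,\B} = -\alpha$, $s_{\c,\a_1} = -\alpha$, $s_{\c,\b_1} = \alpha$); symmetrically the four stresses incident to $\c^\prime$ are determined by a single scalar $\beta$.

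Next, I would read off the stress balance equation at $\A$. Its $z$-component immediately yields $h_1\alpha + h_2\beta = 0$, which is precisely the scalar one needs to vanish in order to kill the $z$-coordinate of the sum to be computed, since the $z$-coordinate of the combined roof-bar contribution at $\a_1,\a_2$ is just $h_1\alpha + h_2\beta$. Comparing the $x$- and $y$-components of the same equation at $\A$ then produces $s_{\A,\a_1} + s_{\A,\a_2} = 0$ and expresses $s_{\A,\b_1} + s_{\A,\b_2}$ as a scalar multiple of $\alpha + \beta$. The analogous computation at $\B$ gives $s_{\B,\b_1} + s_{\B,\b_2} = 0$ and $s_{\B,\a_1} + s_{\B,\a_2} = (\alpha+\beta)/2$.

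Finally, one adds the three roof-bar terms at $\a_1$ to the three roof-bar terms at $\a_2$ component by component. The $x$- and $y$-coordinates of the sum reduce to linear combinations of $\alpha+\beta$, $s_{\A,\a_1}+s_{\A,\a_2}$, and $s_{\B,\a_1}+s_{\B,\a_2}$ that were just computed, and they cancel; the $z$-coordinate is $h_1\alpha + h_2\beta$, which is zero by the previous paragraph. This proves the first half of the claim. The assertion about $\b_1,\b_2$ then follows from the reflection symmetry $y \mapsto -y$ of the chosen configuration, which fixes $\c$ and $\c^\prime$ while swapping $\A \leftrightarrow \B$ and $\a_i \leftrightarrow \b_i$. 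The computations are routine linear algebra in dimension three; the only real care is in keeping the sign conventions of Claim~\ref{clm:stressC} aligned with the chosen coordinates, so that the scalar equation produced by the $z$-component of the balance at $\A$ matches exactly the $z$-coordinate of the quantity we wish to annihilate.
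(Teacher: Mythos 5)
Your proposal is correct, and it takes a genuinely different route from the paper. The paper splits the six roof-bar terms at $\a_1$ and $\a_2$ into the group in Equation~\eqref{eqn:fourterm} and the pair in Equation~\eqref{eqn:twoterm}, and argues for the first group by exhibiting a reflection (through the plane containing the lines $\c\c^\prime$ and $\A\a_2$) that carries those terms onto four terms of the stress balance at $\B$; since the latter sum to zero, so do the former. Your approach instead fixes explicit coordinates, uses the $z$-component of the balance at $\c$ together with Claim~\ref{clm:stressC} to collapse the four stresses on $\c$ to a single parameter $\alpha$ (and similarly $\beta$ for $\c^\prime$), and then reads off the stress balance at $\A$ and $\B$ directly: its $z$-component gives $h_1\alpha+h_2\beta=0$, its $x$- and $y$-components give $s_{\A,\a_1}+s_{\A,\a_2}=0$, $s_{\B,\b_1}+s_{\B,\b_2}=0$, and the value of $s_{\A,\b_1}+s_{\A,\b_2}$ and $s_{\B,\a_1}+s_{\B,\a_2}$ in terms of $\alpha+\beta$. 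I checked these scalars against your coordinates and they are right, and they do annihilate the sum of the six terms. The coordinate computation is less slick than the reflection argument but has two concrete advantages: it treats Equations~\eqref{eqn:fourterm} and~\eqref{eqn:twoterm} uniformly rather than leaving the two-term cancellation to the reader, and every step is a linear identity that can be verified mechanically, whereas the paper's reflection step requires the reader to see how $\mathbf{s}^\prime_{\B}R^\prime_{\B}L$ lines up with the six target terms. The cost is that your argument is tied to the specific configuration, while the reflection argument makes the role of the geometric symmetry more transparent; but since the configuration is already special (the proof of Theorem~\ref{thm:inductive} is allowed to choose it), this is not a loss of generality.
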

\begin{proof}
We will show the part about $\a_1$ and $\a_2$ first.

There are three terms for the stress balance equation at $\a_1$ restricted to the added graph, and three terms for the stress balance equation at $\a_2$ restricted to the $(n+2)^{nd}$ roof as well. We will show that four of these six terms add up to zero i.e.,
\begin{eqnarray}
&&s_{\a_2,\c^\prime} ({\a_2-\c^\prime})   +   s_{\a_1,\c} (\a_1-\c) \notag \\& +&s_{\a_1,\B} (\a_1-\B)   +   s_{\a_2,\B} (\a_2-\B) \label{eqn:fourterm}
\\ &=&0 \notag
\end{eqnarray}
and the remaining two terms also add up to zero,

\begin{equation}\label{eqn:twoterm}
s_{\a_2,\A} (\a_2-\A)   +   s_{\a_1,\A} (\a_1-\A)   =0 .
\end{equation}

To show \eqref{eqn:fourterm}, we need to inspect the six stress equation terms at $\B$, especially the four terms correspond to edges that are incident to $\c,\c^\prime$, $\a_1$ and $\a_2$.
Reflection symmetry of the octahedron and Claim \ref{clm:stressC} together show that the six terms in
equation \eqref{eqn:fourterm}- i.e., three terms in the stress balance equation at $\a_2$ and three terms in the stress balance equation at $\a_1$ -  are obtained as the reflection of four corresponding terms of the stress balance equation at $\B$.
Thus if we can show that the latter $4$ terms add up to zero,
then the former $6$ terms add up to zero as well, thereby showing Equation \eqref{eqn:fourterm}.

\begin{center}
\begin{figure}[!htbp]
\centering
\scalebox{0.5}[0.5]{
\includegraphics[width=\textwidth]{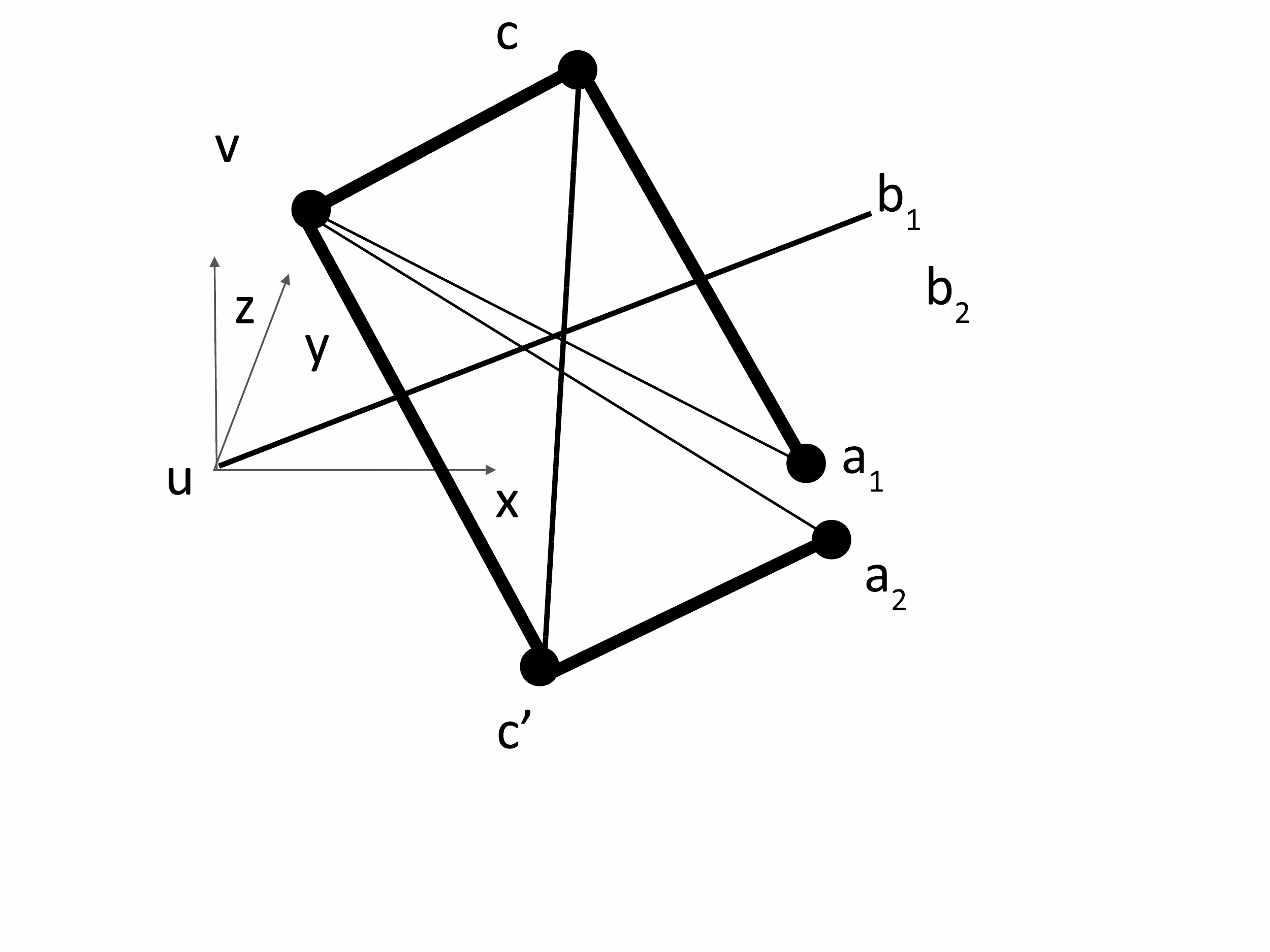}
}
\caption{A figure showing the four of the six terms that add up to zero at $\B$ and their reflection through the plane passing $\c\c^\prime$ and $\A\a_2$ (the two dashed line) to the four terms in \eqref{eqn:fourterm}.}\label{fig:ringStressReflection}

\end{figure}
\end{center}

More precisely, denote by $\mathbf{s}^\prime_{\B}$ the stress vector restricted to these specific $4$
edges at $\B$. Denote by $R^\prime_{\B}$ the corresponding $4$ by $3$
submatrix of the rigidity matrix (whose row vectors specify the corresponding $4$ bars at $\B$). Their product $\mathbf{s}^\prime_{\B}R^\prime_{\B}$ is a vector whose entries are the $4$ terms at $\B$.
Consider the plane containing the two lines $\c\c^\prime$ and $\A\b_2$, i.e., the dashed lines shown in Figure \ref{fig:ringStressReflection}. Denote by $L$ the reflection across this plane. Now notice that the entries of the product $\mathbf{s}^\prime_{\B}R^\prime_{\B}L$ are precisely the $4$ terms in Equation \eqref{eqn:fourterm}.

The proof about $\b_1$ and $\b_2$ is essentially the same as above, the only difference is that in \eqref{eqn:fourterm}, we have only four terms instead of six. We can still use the reflection symmetry and obtain the exact same result.
\end{proof}

\medskip\noindent
Thus we know the stress balance equation terms at $\a_1$ restricted to $H$ and $\a_2$ restricted to $H$ add up to zero. The same result can be obtained at $\b_1$ and $\b_2$. So if we remove the two added roofs and use the same stress balance equation on each edge, we know that the stress balance equation terms add up to zero on every point of the generic framework of $H$. Thus the restricted self-stress vector of the new framework is a self-stress vector on a generic framework of $H$. Next we will show this self-stress vector is non-zero.

\begin{clm}\label{clm:nonzero}
Suppose $\mathbf{s}$ is a non-zero self-stress vector for $G$ (which has the two added roofs). Denote by $\mathbf{s}^\prime$ self-stress vector $\mathbf{s}$ restricted to $H$, after we remove the last two roofs and glue $\a_2$ with $\a_1$ and $\b_2$ with $\b_1$. Then $\mathbf{s}^\prime$ is a non-zero self-stress vector for $H(\p)$.
 \end{clm}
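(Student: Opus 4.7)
The plan is to argue by contrapositive: assume the restriction $\mathbf{s}^\prime$ is identically zero on the edges of $H$, and show that this forces $\mathbf{s}$ to vanish on all $16$ edges of the two added roofs as well, contradicting the non-triviality of $\mathbf{s}$. Claim~\ref{clm:stressA} has already established that $\mathbf{s}^\prime$ satisfies every stress balance equation of $H(\p)$ after the gluing of $\a_1$ with $\a_2$ and $\b_1$ with $\b_2$, so non-triviality is all that remains to verify.

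Under the assumption $\mathbf{s}^\prime=0$, every stress on an $H$-edge vanishes; in particular, at each of the four hinge points $\a_1,\b_1,\a_2,\b_2$, the stress balance equation of $G$ reduces to the three terms coming from the roof edges. Consider $\a_1$: its three incident roof edges are $(\a_1,\A)$, $(\a_1,\B)$, $(\a_1,\c)$. Because $\{\A,\B,\a_1,\b_1\}$ lie in the $xy$-plane while $\c$ is strictly above the center of the square, $\a_1-\c$ is the only one of the three direction vectors with a non-zero $z$-component. Hence the $z$-projection of the balance equation at $\a_1$ immediately forces $s_{\a_1,\c}=0$. The remaining vectors $\a_1-\A$ and $\a_1-\B$ point along two consecutive sides of the square and are therefore linearly independent in the $xy$-plane, so the in-plane projection yields $s_{\a_1,\A}=s_{\a_1,\B}=0$. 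Repeating this argument verbatim at $\b_1$, $\a_2$, and $\b_2$ kills the stresses on all $12$ roof edges incident to a hinge vertex.

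The only remaining roof edges are $(\A,\c)$, $(\B,\c)$, $(\A,\c^\prime)$, $(\B,\c^\prime)$. With $s_{\c,\a_1}=s_{\c,\b_1}=0$ already in hand, the stress balance at $\c$ becomes $s_{\c,\A}(\c-\A)+s_{\c,\B}(\c-\B)=0$; since $\A$, $\B$, $\c$ are not collinear, the two direction vectors are linearly independent and so $s_{\c,\A}=s_{\c,\B}=0$. The symmetric argument at $\c^\prime$ disposes of the remaining two edges. All $16$ roof-edge stresses are therefore zero, giving $\mathbf{s}=0$ and the desired contradiction.

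The only place requiring careful attention is the transition from ``generic framework of $H$'' to the specially positioned, partly coincident framework of $G$ used throughout this proof: one must confirm that even under this degenerate placement, each individual vertex of the two added roofs still has enough genuinely linearly independent incident roof-edge directions to kill its local stresses one vertex at a time. This is precisely guaranteed by the transversality of the line $\c\c^\prime$ to the $xy$-plane containing the square, together with the fact that no three of $\{\A,\B,\a_1,\b_1\}$ are collinear; no subtler rigidity-theoretic input is needed.
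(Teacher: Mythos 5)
Your proof is correct and reaches the same conclusion as the paper, but by the contrapositive rather than by the paper's forward case analysis, and the reorganization is genuinely cleaner. The paper starts from ``some roof edge carries a non-zero stress'' and runs a nested case split (cases i--iv, with case iv itself branching three ways) showing that each possibility forces a non-zero stress on an $H$-edge, invoking Claim~\ref{clm:stressC} to relate $s_{\c,\A}$ to $s_{\c,\b_1}$ along the way. You instead assume $\mathbf{s}^\prime=0$ and kill all $16$ roof-edge stresses in one linear pass: the $z$-projection at each hinge vertex kills the edge to $\c$ or $\c^\prime$; the in-plane projection then kills the two remaining hinge-incident roof edges because their directions are linearly independent in the square's plane; finally the balance equations at $\c$ and $\c^\prime$ dispose of the last four edges. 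This avoids the nested branching entirely and never needs to invoke Claim~\ref{clm:stressC}. The geometric inputs are identical (transversality of $\c\c^\prime$ to the plane of the square, and non-collinearity inside that plane), so nothing new is used, but the flow of implication is shorter and there is no bookkeeping of which sub-case has been discharged. One small inaccuracy in your wording: $\a_1-\A$ and $\a_1-\B$ are not in general two \emph{consecutive sides} of the square --- with the hinges $\a_1\b_1$ and $\A\B$ taken parallel as in the paper's set-up, one of those vectors is a side and the other a diagonal --- but they are still linearly independent, which is all your argument actually uses.
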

 \begin{proof}
We argue the existence of non-zero stress in the original framework for all the following cases:
\begin{enumerate}[i]
\item\label{case:cb} If there is a non-zero stress at $(\c, \b_1)$ (resp. $(\c, \a_1)$, $(\c^\prime, \b_2)$, $(\c^\prime, \a_2)$), then the stress equation terms of the two added roofs will always have a $z$ projection at $\b_1$ (resp. $\a_1$, $\b_2$, $\a_2$). So the original framework $H(\p)$ must have a non-zero stress to cancel out this $z$ projection. When there is no edge in $H$ that is incident to $\b_1$ (resp. $\a_1$, $\b_2$, $\a_2$), we know the stress at $(\c, \b_1)$ (resp. $(\c, \a_1)$, $(\c^\prime, \b_2)$, $(\c^\prime, \a_2)$) has to be zero.

\item\label{case:cA} If there is a non-zero stress at $(\c, \A)$ (resp. $(\c, \B)$, $(\c^\prime, \A)$, $(\c^\prime, \B)$), then from Claim \ref{clm:stressC}, we know there is a non-zero stress at $(\c, \b_1)$ (resp. $(\c, \a_1)$, $(\c^\prime, \b_2)$, $(\c^\prime, \a_2)$), which is equal to Case \ref{case:cb}.

\item\label{case:Ab} If there is a non-zero stress at $(\A, \b_1)$ (resp. $(\B, \a_1)$, $(\A, \b_2)$, $(\B, \a_2)$), then either (1) there is a non-zero stress at $(\c, \b_1)$ (resp. $(\c, \a_1)$, $(\c^\prime, \b_2)$, $(\c^\prime, \a_2)$), which is Case \ref{case:cb}, or (2) the stress at $(\c, \b_1)$ (resp. $(\c, \a_1)$, $(\c^\prime, \b_2)$, $(\c^\prime, \a_2)$) is zero. When (2) happens, the stress equation terms of the two added roofs will always have a $y$ projection at $\b_1$ (resp. $\a_1$, $\b_2$, $\a_2$). So the original framework must have a non-zero stress to cancel out this $y$ projection. Or if there is no edge in $H$ that is incident to $\b_1$ (resp. $\a_1$, $\b_2$, $\a_2$), then the stress at $(\c, \b_1)$ (resp. $(\c, \a_1)$, $(\c^\prime, \b_2)$, $(\c^\prime, \a_2)$) has to be zero. 

\item\label{case:Bb} If there is a non-zero stress at $(\B, \b_1)$ (resp. $(\A, \a_1)$, $(\B, \b_2)$, $(\A, \a_2)$), then there are 3 further cases. (1) There is a non-zero stress at $(\c, \b_1)$ (resp. $(\c, \a_1)$, $(\c^\prime, \b_2)$, $(\c^\prime, \a_2)$), which is Case \ref{case:cb}. (2) The stress at $(\c, \b_1)$ (resp. $(\c, \a_1)$, $(\c^\prime, \b_2)$, $(\c^\prime, \a_2)$) is zero, there is a non-zero stress at $(\A, \b_1)$ (resp. $(\B, \a_1)$, $(\A, \b_2)$, $(\B, \a_2)$). This is just Case \ref{case:Ab}. (3) The stresses at $(\c, \b_1)$ (resp. $(\c, \a_1)$, $(\c^\prime, \b_2)$, $(\c^\prime, \a_2)$) and $(\A, \b_1)$ (resp. $(\B, \a_1)$, $(\A, \b_2)$, $(\B, \a_2)$) are both zero. Then the stress equation terms of the two added roofs will always have a $x$ projection at $\b_1$ (resp. $\a_1$, $\b_2$, $\a_2$). So the original framework $H(\p)$ must have a non-zero stress to cancel out this $x$ projection. Or if there is no edge in $H$ that is incident to $\b_1$ (resp. $\a_1$, $\b_2$, $\a_2$), then the stress at $(\c, \b_1)$ (resp. $(\c, \a_1)$, $(\c^\prime, \b_2)$, $(\c^\prime, \a_2)$) has to be zero. 
\end{enumerate}
\end{proof} 
 
 \medskip\noindent
 Returning to the proof of Theorem \ref{thm:inductive}, using Claims \ref{clm:stressC}, \ref{clm:stressA} and \ref{clm:nonzero}, we have found a non-zero self-stress for the original framework of $H$. I.e., the generic framework of $H$ is dependent. Contradiction.
 \end{proof} 

\subsection{Roof-addition gives nucleation-free graphs with implied non-edges}
In this section, we show that the roof-addition scheme gives nucleation-free graphs with implied non-edges in the following theorem.

\begin{theorem}[Roof-addition gives nucleation-free, independent graphs with implied non-edges]\label{thm:secondInductiveApproach}
Let $H$ be an independent graph satisfying the following: there is a 2-thin cover $\mathcal{X}$ $=\{X_1, X_2, \ldots, X_n\}$ of $H$ such that (1) the shared part $(V, S(\mathcal{X}))$ is independent; and  (2) $rank(H)$ $=$ $\sum_{X_i\in \mathcal{X}}$ $rank(H^\star[X_i])$ $-$ $\sum_{\{u, v\}\in S(\mathcal{X})}$ $(d(u, v) - 1)$, where $H^\star$ $=$ $H\cup S(\mathcal{X})$. Let $\{a, b\}$ $\in$ $S(\mathcal{X})$ be a non-edge of $H$ for which the graph cutting operation of Scheme \ref{scheme:roof} is applied to $H$ along $\{a, b\}$ in the following manner. (1) For any covering subgraph $H[X_j]$ of $H$ that has at least one shared non-edge, if one of its edges is assigned to $a_1$ or $b_1$ (resp. $a_2$ or $b_2$), then all of its edges incident to $a$ or $b$ are assigned to $a_1$ or $b_1$ (resp. $a_2$ or $b_2$). (2) After graph cutting, the graph is nucleation-free. Note that such a $\X$ may not exist. When it exists, we call $H$ a {\em starting graph for Theorem \ref{thm:secondInductiveApproach}}. Applying graph cutting on a starting graph $H$ in the above manner results in an output graph $G$ that is nucleation-free, independent with implied non-edges.
\end{theorem}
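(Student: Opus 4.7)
\medskip\noindent
\textbf{Proof proposal.} The plan is to handle the three conclusions (independence, nucleation-freeness, and presence of implied non-edges) separately, with the implied non-edges part carrying the bulk of the work.

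\medskip\noindent
\emph{Step 1: Independence.} This is immediate: $H$ is independent by hypothesis, so Theorem~\ref{thm:inductive} applied to $H$ with the chosen non-edge $\{a,b\}$ and edge distribution yields that $G$ is independent. In particular $\mathrm{rank}(G) = |E(G)| = |E(H)| + 16$, since the two added roofs each contribute $8$ edges and share only the non-edge $\{u,v\}$.

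\medskip\noindent
\emph{Step 2: Nucleation-freeness.} By hypothesis (2) of the theorem, the graph obtained right after the graph cutting step (but before roof pasting) is nucleation-free. I then need to rule out nucleations created in $G$ by the two pasted roofs. Any such nucleation $S$ must contain at least one of the new vertices $u,v,c,c'$ (otherwise it would already be a nucleation in the cut graph). Focus on $c$ (the argument is identical for $c'$, and symmetric for $u,v$): in $G$, the only neighbors of $c$ are $u,v,a_1,b_1$. If $|S|\le 4$, then $S$ is a $K_4$ containing $c$, forcing three of $\{u,v,a_1,b_1\}$ to be pairwise adjacent; but $\{u,v\}$ and $\{a_1,b_1\}$ are non-edges, so any three of these four contain a non-edge pair, a contradiction. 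If $|S|\ge 5$, then $S$ is rigid in $3D$ with $|E(S)|\ge 3|S|-6$, which forces every vertex of $S$ (in particular $c$) to have degree $\ge 3$ in $S$, hence at least three of $\{u,v,a_1,b_1\}$ lie in $S$. Iterating the same argument at $c'$ confines $S$ to $\{c,c',u,v,a_1,b_1,a_2,b_2\}$, and inspecting the subgraph induced there (two roofs glued along a hinge non-edge, with $\{a_1,b_1\},\{a_2,b_2\},\{u,v\}$ all non-edges) shows that no induced subgraph on $\ge 5$ vertices meets Maxwell's $3|S|-6$ count.

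\medskip\noindent
\emph{Step 3: Implied non-edges via the rank-sandwich technique.} I construct an explicit $2$-thin cover $\mathcal{X}'$ of $G$ from $\mathcal{X}$, and then apply Theorem~\ref{IPindep}. For each $X_i\in\mathcal{X}$, define $X'_i$ by renaming $a$ to $a_1$ (or $a_2$) and $b$ to $b_1$ (or $b_2$) according to the side to which the edges of $H[X_i]$ incident to $a,b$ were assigned; hypothesis~(1) of the theorem (all edges of a covering subgraph with a shared non-edge go to one side) guarantees this is well defined. Then set $\mathcal{X}'=\{X'_1,\ldots,X'_n,V(R_1),V(R_2)\}$. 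Since $u,v,c,c'$ are new vertices lying only in $V(R_1)\cup V(R_2)$, and $V(R_1)\cap V(R_2)=\{u,v\}$, the cover remains $2$-thin. The new shared part $S(\mathcal{X}')$ consists of the pairs of $S(\mathcal{X})$ (with $\{a,b\}$ replaced by $\{a_1,b_1\}$ and $\{a_2,b_2\}$) together with $\{u,v\}$; this edge set is a disjoint union of the independent edge set $(V,S(\mathcal{X}))$ (after the split, still matching on the split vertices) with edges incident to the new vertices $u,v$, so it remains independent in the $3D$ rigidity matroid. Applying Theorem~\ref{IPindep} gives
\begin{equation*}
\mathrm{rank}(G^\star)\;\le\;\sum_{X'_i\in\mathcal{X}'}\mathrm{rank}(G^\star[X'_i])\;-\!\!\sum_{\{u,v\}\in S(\mathcal{X}')}\!\!(d(u,v)-1).
\end{equation*}
Each $G^\star[X'_i]$ is isomorphic to $H^\star[X_i]$, and each $G^\star[V(R_j)]$ is a $K_5$ (the two hinge non-edges of $R_j$ are both in $S(\mathcal{X}')$), contributing rank $9$ apiece. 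Writing $k=d(a,b)$ and $k_1,k_2$ for the number of original covering subgraphs routed to each side, the contribution of $\{a,b\}$ of value $k-1$ in the old sum is replaced by $(k_1+1-1)+(k_2+1-1)+(2-1)=k+1$, a net increase of $2$. Combining with the $+18$ from the two roofs yields
\begin{equation*}
\mathrm{rank}(G^\star)\;\le\;\mathrm{rank}(H)+16\;=\;|E(H)|+16\;=\;|E(G)|\;=\;\mathrm{rank}(G),
\end{equation*}
where the equality $\mathrm{rank}(H)=\sum\mathrm{rank}(H^\star[X_i])-\sum(d(u,v)-1)$ is the second hypothesis on the starting graph. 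Hence $\mathrm{rank}(G^\star)=\mathrm{rank}(G)$, proving that all pairs in $S(\mathcal{X}')$ are implied non-edges of $G$; in particular $\{u,v\},\{a_1,b_1\},\{a_2,b_2\}$ are implied.

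\medskip\noindent
The main obstacle I anticipate is Step 3, specifically verifying $2$-thinness and independence of the new shared part in complete generality. The subtlety is that hypothesis~(1) only constrains edge-routing inside covering subgraphs \emph{with} a shared non-edge; for covering subgraphs containing $a,b$ but no shared non-edge, a careless distribution could force $X'_i$ to contain both $a_1$ and $a_2$, breaking $2$-thinness against $V(R_1)$. The cleanest way to address this, which I would formalize in the write-up, is to observe that such covering subgraphs can be refined or re-routed without affecting the rank bound for $H$, so that $\mathcal{X}'$ as constructed above remains $2$-thin.
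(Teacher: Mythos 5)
Your plan — prove independence via Theorem~\ref{thm:inductive}, use hypothesis (2) for nucleation-freeness, and close with a rank-sandwich argument on a modified $2$-thin cover — matches the paper's strategy, and your final arithmetic ($+18$ from the two $K_5$'s, $+2$ extra correction from the new shared pairs, giving $\mathrm{rank}(G^\star)\le\mathrm{rank}(H)+16=|E(G)|$) is correct. But the one point you flag as a ``subtlety'' to be ``formalized in the write-up'' is in fact the technical core of the proof, and what you propose — ``such covering subgraphs can be refined or re-routed without affecting the rank bound for $H$'' — is not an observation but precisely the claim that needs to be proved. Your construction of $\mathcal{X}'$ by renaming each $X_i$ is undefined whenever a covering subgraph contains $a$ or $b$, has no shared non-edge, and has its incident edges distributed across both sides of the cut; hypothesis (1) does not constrain the cutting there. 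Such a subgraph cannot be carried into $\mathcal{X}'$ as a single block, and it is not obvious that breaking it into smaller pieces leaves the rank-inequality hypothesis intact for $G$ (you also change what counts as ``shared,'' potentially adding new shared edges, and you must re-verify that the refined cover remains a cover, i.e.\ pairwise incomparable, and that its shared part is still independent).

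The paper's proof addresses exactly this by splitting $\mathcal{X}$ into three classes — trivial covering subgraphs $\mathcal{X}_0$, non-trivial ones preserved by the cut $\mathcal{X}_1$, and non-trivial ones torn apart by the cut $\mathcal{X}_2$ — observing that members of $\mathcal{X}_2$ share only edges (not non-edges) with the rest of the cover, that each is independent because $H$ is, and then explicitly re-deriving the rank identity after demoting $\mathcal{X}_2$ to a set of trivial edge-blocks. That accounting (the paper's Equations~(\ref{eqn:goal1}) and~(\ref{eqn:goal2})) is not a formality: it is where one checks that the contribution of the torn-apart subgraphs plus the lost shared-edge corrections exactly matches the count of edges outside $\mathcal{X}_1$. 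Without it, your cover $\mathcal{X}'$ is not well-defined and the appeal to Theorem~\ref{IPindep} does not go through. A secondary, smaller issue: your nucleation-freeness argument treats $u,v$ as ``symmetric'' to $c,c'$, but $u$ and $v$ have richer neighborhoods (each is adjacent to vertices of both roofs), so the case $u\in S$, $c,c'\notin S$ needs its own argument (e.g.\ reverse Henneberg-I when $\deg_S(u)=3$, and a separate count when $\deg_S(u)=4$); since the paper leaves nucleation-freeness as ``clear,'' this is less pressing, but your claimed symmetry is not literally correct.
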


\begin{proof}(of Theorem \ref{thm:secondInductiveApproach})
The proof uses the rank-sandwich proof technique. First, the independence of $G$ is guaranteed by Theorem \ref{thm:inductive} and it is also clear that $G$ is nucleation-free. 

We will use the rank upper bound argument in Section \ref{ringupper1} to show the existence of implied non-edges in $G$.

In fact, we will show that whenever a graph cutting preserves all covering subgraphs that contain a shared non-edge (it is clear that the graph cutting in Theorem \ref{thm:secondInductiveApproach} is one such cutting), then the rank upper bound argument works. When we say a covering subgraph is preserved, we mean after graph cutting, its corresponding subgraph does not contain both $a_1$ and $a_2$ (or $b_1$ and $b_2$). We can find a cover $\X^\prime$ of $G$ by modifying $\X$. We use $V(G)$ (resp. $V(H)$) to denote the vertex set of graph $G$ (resp. $H$) and $E(G)$ (resp. $E(H)$) to denote the edge set of graph $G$ (resp. $H$). 

We will start from the condition $|E(H)|$ $=$ $rank(H)$ $=$ $\sum_{X_i\in \mathcal{X}}$ $rank(H^\star[X_i])$ $-$ $\sum_{\{u, v\}\in S(\mathcal{X})}$ $(d(u, v) - 1)$ and then obtain a rank IE count on $G^\star$(i.e., $G\cup S(\X^\prime)$) to show that rank$(G^\star)$ is equal to $|E(G)|$. First, we notice that there are three types of covering subgraphs of $H$ in $\X$:
\begin{enumerate}[(i)]
\item trivial covering subgraphs, i.e., edges. We denote this set as $\X_0$.
\item non-trivial covering subgraphs $H[X_i]$ that are preserved in $\X^\prime$ and we denote those covering subgraphs as the set $\X_1$.

\item non-trivial covering subgraphs $H[X_j]$ that are not preserved in $\X^\prime$ and we denote those covering subgraphs as the set $\X_2$. I.e., these covering subgraphs fall apart so that their edges become trivial covering subgraphs in $\X^\prime$, except those edges that are present in some covering subgraph of $\X_1$. We note that these covering subgraphs only have {\em edges} shared by other covering subgraphs in $\X$. 
\end{enumerate}

Let $d_{\X_i}(u, v)$ ($i=1$ or $2$) be the number of sets $X_i$ in $\mathcal{X}_i$ such that $\{u, v\} \subseteq X_i$. Let $S(\X_i)$ ($i=1$ or $2$) be the
set of all pairs of vertices $\{u, v\}$ such that $X_j \cap X_k = \{u, v\}$ for some $X_j\in \X_i$ and $X_k \in \X_i$. Let $L:=S(\X)$ $\setminus$ $S(\X_1)$. Then we can rewrite the rank condition on $H$ as follows:
\begin{eqnarray*}
|E(H)|=\text{rank}(H)& = & \sum_{X_i\in \mathcal{X}} rank(H^\star[X_i]) - \sum_{\{u, v\}\in
S(\mathcal{X})} (d(u, v) - 1)\\
&=& \sum_{X_i\in \mathcal{X}_1} rank(H^\star[X_i]) - \sum_{\{u, v\}\in
S(\mathcal{X}_1)} (d_{\X_1}(u, v) - 1) \\
&& - \sum_{\{u, v\}\in L} (d_{\X_1}(u, v) +d_{\X_2}(u, v) - 1) \\
&&+ \sum_{X_i\in \mathcal{X}_2} rank(H^\star[X_i])
\\&& +\sum_{X_i\in \mathcal{X}_0} 1 
\end{eqnarray*}

It is clear that the only non-trivial covering subgraphs besides the two added roofs in $\X^\prime$ are in $\X_1$. Since in the rank IE count, the contribution of trivial covering subgraphs is equal to their number of edges, we will show:

\begin{eqnarray}
 \sum_{X_i\in \mathcal{X}_2} rank(H^\star[X_i]) +   \sum_{X_i\in \mathcal{X}_0} 1 && \nonumber\\
  -\sum_{\{u, v\}\in L}(d_{\X_1}(u, v) +d_{\X_2}(u, v) - 1) &= &  \sum_{e\in E(H)\setminus E(\X_1)} 1, \label{eqn:goal1}
\end{eqnarray}
where $E(\X_1)$ denotes the set of edges induced by vertex sets of $\X_1$.

First, since $H$ is independent, we know every covering subgraph in $\X_2$ is independent. Thus every edge in $\X_2$ that is not in $L$ contributes $1$ to the to the left hand side of Equation (\ref{eqn:goal1}).

For $\{u, v\}\in L$, we first note that every $\{u, v\}$ is an edge, since the covering subgraphs in $\X_2$ do not contain shared non-edges in $S(\X)$. Another key observation is that for every $\{u, v\}\in L$, (1) $d_{\X_1}(u, v)$ $=$ $1$ when $\{u, v\}$ is in some covering subgraph of $\X_1$, since otherwise $\{u, v\}\in
S(\mathcal{X}_1)$; and (2) $d_{\X_1}(u, v)$ $=$ $0$ when $\{u, v\}$ is not in any covering subgraph of $\X_1$.

When (1) happens, $\{u, v\}$ is an edge in some covering subgraph of $\X_1$ and it is clear that $d_{\X_2}(u, v)$ is equal to the number of $X_i$'s that are in $\X_2$ and contain $\{u, v\}$. I.e., these edges contribute zero to the left hand side of Equation (\ref{eqn:goal1}).

When (2) happens, $\{u, v\}$ is not an edge in any covering subgraph of $\X_1$ and each $\{u, v\}$ contributes $1$ to the left hand side of Equation (\ref{eqn:goal1}), since $d_{\X_1}(u, v)=0$ and every covering subgraph in $\X_2$ is independent. Thus the total contribution of the left hand side of Equation (\ref{eqn:goal1}) is equal to the number of edges of $\X_2$ that do not appear in $\X_1$. Hence we have Equation (\ref{eqn:goal1}) and more importantly, the following:
\begin{eqnarray}
|E(H)|=\text{rank}(H)&=& \sum_{X_i\in \mathcal{X}_1} rank(H^\star[X_i]) - \sum_{\{u, v\}\in
S(\mathcal{X}_1)} (d_{\X_1}(u, v) - 1) \nonumber \\
&&+ \sum_{e\in E(H)\setminus E(\X_1)} 1 \label{eqn:goal2}
\end{eqnarray}

To apply rank upper bound argument on $G$, we need to show that rank$(G)$ $=$ $\sum_{X_i\in \mathcal{X}^\prime}$ $rank(G^\star[X_i])$ $-$ $\sum_{\{u, v\}\in S(\mathcal{X}^\prime)}$ $(d(u, v) - 1)$. Since $G$ is independent, all we need is to show that $|E(G)|$ $=$ $\sum_{X_i\in \mathcal{X}^\prime}$ $rank(G^\star[X_i])$ $-$ $\sum_{\{u, v\}\in S(\mathcal{X}^\prime)}$ $(d(u, v) - 1)$. 

Next we turn to the covering subgraphs of $\X^\prime$. There are two types.
\begin{enumerate}[(i)]
\item trivial covering subgraphs, i.e., edges. We denote this set as $\X_0^\prime$ and we know $|\X_0^\prime|$ $=$ $\sum_{e\in E(H)\setminus E(\X_1)} 1$.
\item non-trivial covering subgraphs $H[X_i]$ $\in$ $\X_1$ that are preserved as $G[X_i]$ in $\X^\prime$ and two added roofs $R_1$ and $R_2$.
\end{enumerate}

It is clear that the number of edges of $G$ can be written as follows:
\begin{eqnarray*}
|E(G)| &=& |E(H)| + \text{rank}(G[R_1]) + \text{rank}(G[R_2])\\
&=&|E(H)| + \text{rank}(G^\star[R_1]) + \text{rank}(G^\star[R_2])-2
\end{eqnarray*}
Plugging in Equation (\ref{eqn:goal2}), we have:
\begin{eqnarray*}
|E(G)| &=&  \text{rank}(G^\star[R_1]) + \text{rank}(G^\star[R_2])-2\\
& +& \sum_{X_i\in \mathcal{X}_1} rank(H^\star[X_i]) - \sum_{\{u, v\}\in
S(\mathcal{X}_1)} (d_{\X_1}(u, v) - 1)+ \sum_{e\in \X_0^\prime} 1 
\end{eqnarray*}
The final step is to build a relationship between $\sum_{X_i\in \mathcal{X}_1} rank(H^\star[X_i])$ $-$ $\sum_{\{u, v\}\in S(\mathcal{X}_1)}$ $(d_{\X_1}(u, v) - 1)$ and $\sum_{X_i\in \mathcal{X}_1} rank(G^\star[X_i])$ $-$ $\sum_{\{u, v\}\in
S(\mathcal{X}^\prime)}$ $(d_{\X^\prime}(u, v) - 1)$. We notice that (1) every covering subgraph in $\X_1$ is preserved in $\X^\prime$, so $\sum_{X_i\in \mathcal{X}_1}$ $ rank(H^\star[X_i])$ $=$ $\sum_{X_i\in \mathcal{X}_1}$ $rank(G^\star[X_i])$; and (2) $\{a, b\}$ is the only shared non-edge that is changed. More precisely, $\{a, b\}$ is split into $\{a_1, b_1\}$ and $\{a_2, b_2\}$. Together with the fact that $\{a_1, b_1\}$ and $\{a_2, b_2\}$ are both shared by the two added roofs, we know $d_{\X_1}(a, b)$ $=$ $d_{\X^\prime}(a_1, b_1)-1$ $+$ $d_{\X^\prime}(a_2, b_2)-1$ $=$ $d_{\X^\prime}(a_1, b_1)$ $+$ $d_{\X^\prime}(a_2, b_2)$ $-2$. Let $\{w, t\}$ be the non-edge shared by $R_1$ and $R_2$. Hence we have:
\begin{eqnarray*}
|E(G)| &=& \text{rank}(G^\star[R_1])+\text{rank}(G^\star[R_2]) -2 \\
&&+ \sum_{X_i\in \mathcal{X}_1} \text{rank}(G^\star[X_i]) - \sum_{\{u, v\}\in
S(\mathcal{X}_1)\setminus \{a, b\}} (d_{\X_1}(u, v) - 1)\\
&& -(d_{\X_1}(a, b) -1)\\
 && + \sum_{e\in \X_0^\prime} 1\\
&=& \text{rank}(G^\star[R_1])+\text{rank}(G^\star[R_2]) -2\\
&&+ \sum_{X_i\in \mathcal{X}_1} \text{rank}(G^\star[X_i]) - \sum_{\{u, v\}\in
S(\mathcal{X}_1)\setminus \{a, b\}} (d_{\X_1}(u, v) - 1)\\
&&-(d_{\X^\prime}(a_1, b_1) + d_{\X^\prime}(a_2, b_2) -2 -1) \\
&& + \sum_{e\in \X_0^\prime} 1 \\
&=&  \text{rank}(G^\star[R_1])+\text{rank}(G^\star[R_2]) -1\\
&&+\sum_{X_i\in \mathcal{X}_1} \text{rank}(G^\star[X_i]) - \sum_{\{u, v\}\in
S(\mathcal{X}_1)\setminus \{a, b\}} (d_{\X_1}(u, v) - 1) \\
&& -(d_{\X^\prime}(a_1, b_1)-1) -(d_{\X^\prime}(a_2, b_2) -1) \\
&& + \sum_{e\in \X_0^\prime} 1 \\
&=&  \text{rank}(G^\star[R_1])+\text{rank}(G^\star[R_2]) - (d(w, t)-1)\\
&&+\sum_{X_i\in \mathcal{X}_1} \text{rank}(G^\star[X_i]) - \sum_{\{u, v\}\in
S(\mathcal{X}_1)\setminus \{a, b\}} (d_{\X_1}(u, v) - 1) \\
&& -(d_{\X^\prime}(a_1, b_1)-1) -(d_{\X^\prime}(a_2, b_2) -1) \\
&& + \sum_{e\in \X_0^\prime} 1 \\
&=& \sum_{X_i\in \mathcal{X}^\prime} \text{rank}(G^\star[X_i]) - \sum_{\{u, v\}\in
S(\mathcal{X}^\prime)} (d_{\X^\prime}(u, v) - 1). 
\end{eqnarray*}
Noticing that $(V(H), S(\X))$ is independent, we know that $(V(G), S(\X^\prime))$ is independent, since the only differences between $S(\X)$ and $S(\X^\prime)$ are the split of $\{a, b\}$ and the addition of the shared non-edge of $R_1$ and $R_2$, which is disjoint from the rest of $S(\X^\prime)$. Thus we can apply Theorem \ref{IPindep} to obtain that rank$(G^\star)$ $\leq$ $\sum_{X_i\in \mathcal{X}^\prime} rank(G^\star[X_i])$ $-$ $\sum_{\{u, v\}\in
S(\mathcal{X}^\prime)}$ $(d_{\X^\prime}(u, v) - 1)$ $=$ $|E(G)|$ $=$ rank$(G)$. Since $G\subseteq G^\star$ and $G$ is independent, we know rank$(G^\star)$ $\ge$ rank$(G)$ and hence rank$(G^\star)$ $=$ rank$(G)$. It follows that every non-edge in $S(\mathcal{X}^\prime)$ is implied. These include (1) all non-edges in $S(\X)$ except $\{a, b\}$; (2) $\{a_1, b_1\}$ and $\{a_2, b_2\}$; and (3) the non-edge $\{w, t\}$ shared by the two added roofs $R_1$ and $R_2$.
\end{proof} 



\subsection{Existence and generation of starting graphs for Theorem \ref{thm:secondInductiveApproach}}\label{sec:indepgeneral}
In this section, we give several concrete example starting graphs for Theorem \ref{thm:secondInductiveApproach}. We show the existence of several fixed-size base starting graphs that cannot be generated by the Schemes \ref{schme:specialmotion}-\ref{scheme:roof}. Then, we show that the schemes given so far can be inductively and freely combined to generate starting graphs, and
thereby generate a large variety of arbitrarily large, nucleation-free, independent graphs with implied non-edges. Note that for fixed-size base starting graphs, their independence is verified by symbolically computing the
generic rank of a rigidity matrix of indeterminates for frameworks
corresponding to those example starting graphs. Additionally, we are not relying on
special positions of the joints(vertices) and we are showing that the rank
is maximal, i.e., equal to the number of edges, so in any case, we do not
have to worry about numerical rounding issues distorting the rank
computation since such distortions can only decrease the rank.

\begin{obs}\label{obs:other}
The following are graphs that are nucleation-free with implied non-edges, but are not encompassed by the schemes of this paper.
\begin{enumerate}
\item Ring of modified octahedral graphs with size 7, 8, 9, or 10. A modified octahedral graph is drawn in Fig. ~\ref{fig:octaForstarting}.
\begin{center}
\begin{figure}
\begin{center}
\includegraphics[width=.9\textwidth, height=.6\textwidth]{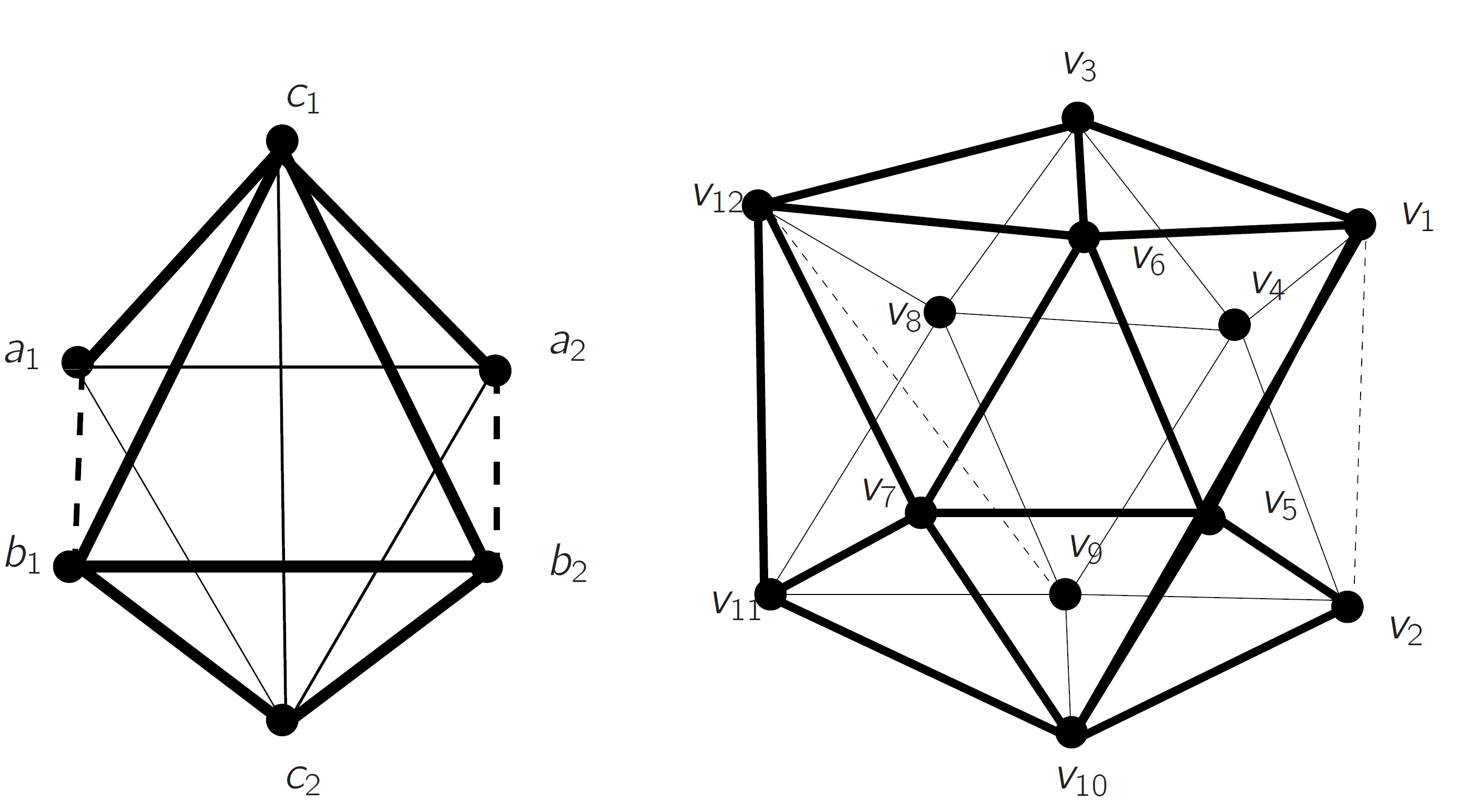}
\end{center}
\caption{On the left is another modified octahedral graph. This modified octahedral graph, when used as the building block of a ring graph, results in a ring that cannot be obtained using Henneberg constructions or all other known inductive methods from a ring of roofs. On the right is a modified icosahedral graph used to form a ring that is independent with implied non-edges. The dashed lines are the two hinge non-edges. }
\label{fig:octaForstarting}\label{fig:icosahedraon}
\end{figure}
\end{center}

\item Ring of modified icosahedral graphs with size 7, 8, 9, or 10. A modified icosahedral graph is drawn in Fig. ~\ref{fig:icosahedraon}.


\item Two body-sharing rings with icosahedra. Take two rings $R_1$ and $R_2$ and combine them in such a way that they share two or more covering subgraphs and one shared covering subgraph has two degrees-of-freedom, while the other shared covering subgraphs have 1 dof each. See Fig.~\ref{fig:ring_k12} for a simple example. We call this type of class a {\em body-sharing ring}.

\begin{center}
\begin{figure}
\begin{center}
\includegraphics[width=.5\textwidth]{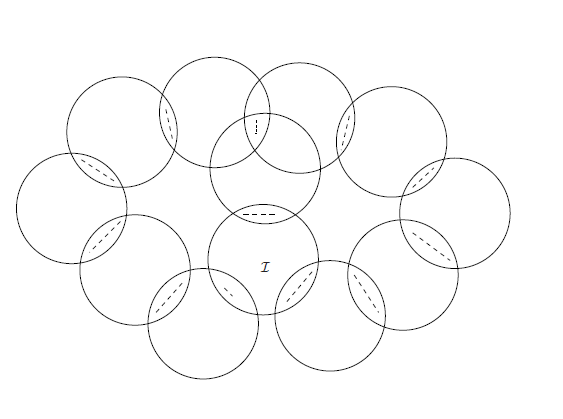}
\end{center}
\caption{Two rings of icosahedra share two icosahedra as in Fig. ~ \ref{fig:icosahedraon}. Each ring consists of $7$ icosahedra, represented by a circle in the figure. We drop two edges from $\I$ (namely $(v_1, v_2)$ and $(v_3, v_4)$ as in Fig. ~\ref{fig:icosahedraon}) and one edge ($(v_1, v_2)$) from all other icosahedra. We choose a non-edge $(v_9, v_{12})$ as another hinge non-edge for all icosahedra. I.e., $\I$ has 3 hinge non-edges, where one is a non-edge of icosahedra and the other two are dropped edges, while all other icosahedra have two hinge non-edges, where one is a non-edge of an icosahedron and the other is a dropped edge. The full rank of the graph is verified by symbolically computing the generic rank of a rigidity matrix of indeterminates for frameworks corresponding to the graph (for example, using Maple).}
\label{fig:ring_k12}
\end{figure}
\end{center}
\item Four body-sharing rings. See example in Fig. ~\ref{fig:otherScheme}.

\begin{figure}
\begin{center}
\includegraphics[width=.6\textwidth]{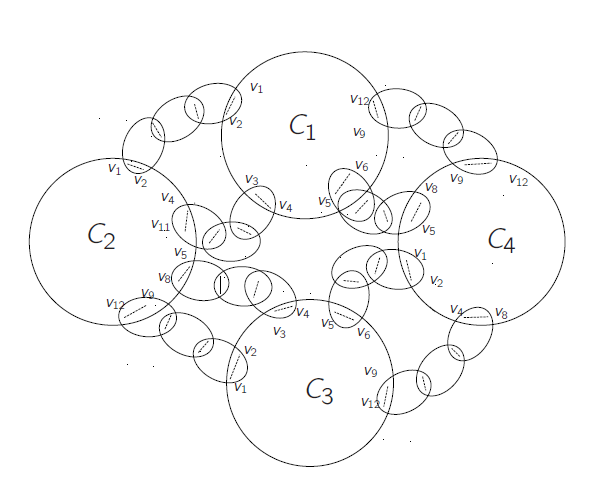}
\end{center}
\caption{Four body-sharing rings. The graph has four larger icosahedral graphs ($C_1$, $C_2$, $C_3$ and $C_4$ as in Fig.\ref{fig:icosahedraon}) represented with larger circles and $24$ smaller roofs represented with smaller ellipses. The dashed lines in the graph represent hinge non-edges. Note that $C_1$ and $C_3$ are modified icosahedral graphs whose edges $(v_3, v_4)$ and $(v_5, v_6)$ are dropped and used as hinge non-edges. They each have $27$ edges and thus have $3$ degrees-of-freedom. $C_2$ and $C_4$ are modified icosahedral graphs whose non-edges $(v_4, v_{11})$ and $(v_5, v_8)$ are used as hinge non-edges. They each have $29$ edges and thus have $1$ degree-of-freedom. The hinge non-edges of roofs are mentioned as earlier. The graph is clearly nucleation-free and it has $104$ vertices and $304$ edges. The full rank of the graph is verified by symbolically computing the generic rank of a rigidity matrix of indeterminates for frameworks corresponding to the graph (for example, using Maple). A 2-thin cover argument gives us that the dashed lines are all implied non-edges.  }\label{fig:otherScheme}
\end{figure}

\end{enumerate}
\end{obs}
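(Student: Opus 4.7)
The plan is to verify each of the four example families by combining structural inspection, a computational rank check, and the 2-thin cover argument from Section \ref{ringupper1}. For every listed graph I would check three things in sequence: (i) nucleation-freeness; (ii) independence (i.e., that the rank equals the edge count); and (iii) the existence of implied non-edges along the hinge non-edges. The final point — that these graphs are not encompassed by Schemes \ref{schme:specialmotion}–\ref{scheme:roof} — I would handle by a short case analysis at the end.

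First I would establish nucleation-freeness by direct inspection of each building block. Each modified octahedral graph (a $K_6$ with three edges missing) and each modified icosahedral graph fails to contain a $K_5$-minus-an-edge or larger rigid induced subgraph, since the missing edges are chosen precisely to break any such nucleus; and since adjacent building blocks share only a non-edge pair (the hinge), no nucleation can bridge two blocks. For the body-sharing rings (items 3 and 4), the shared covering subgraphs are chosen to retain positive degrees-of-freedom, so again no rigid induced subgraph forms either within a block or across the shared interface.

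Next I would establish independence. For the fixed-size base graphs in items 1, 2, and 3 (and for the size-104 example in item 4), independence is verified by symbolically computing the generic rank of a rigidity matrix populated with indeterminate joint coordinates, as already noted in the paper. Because I am only checking that the rank attains the number of edges (a maximality statement), numerical distortions cannot falsely certify independence. With independence in hand, I then apply the 2-thin cover argument as in the proof of Lemma \ref{lem:step2first}: take $\mathcal{X}$ to be the collection of vertex sets of the building blocks (octahedra, icosahedra, or roofs, together with the shared bodies for items 3 and 4). In each case the pairwise intersections have size at most two, the shared part is a disjoint union of edges and therefore independent, and after adding all hinge non-edges each block becomes rigid ($K_6$, icosahedron, or $K_5$). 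Theorem \ref{IPindep} then yields a rank upper bound $\mathrm{rank}(G^\star)\le\sum \mathrm{rank}(G^\star[X_i])-\sum(d(a,b)-1)$ that numerically matches $|E(G)|$. Sandwiched between independence below and this cover bound above, the rank does not change after adding the hinge non-edges, so all of them are implied.

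The final — and arguably the most delicate — step is to certify that none of these graphs can be produced by Schemes \ref{schme:specialmotion}–\ref{scheme:roof}. The Flex-sign Ring scheme requires building blocks realizable as expansive-contractive/expansive-expansive single-vertex origamis; modified octahedra, icosahedra, and the body-shared configurations are not of this combinatorial type. The Henneberg Extender Ring scheme produces only the specific, unique isomorphism class of $K_6$-minus-3-edges noted in the remark after Theorem \ref{thm:HennGeneral}, so the alternative modified octahedron in Fig.~\ref{fig:octaForstarting} is excluded by a direct non-isomorphism check, and the icosahedral blocks are excluded because each added Henneberg vertex has degree exactly $4$. The Standard-scheme cannot produce the body-sharing rings (items 3 and 4) because $1$-, $2$-, and $3$-sums, Henneberg steps, and vertex splits all preserve a tree-of-blocks structure on the block-cut decomposition, whereas body-sharing creates a cycle in that decomposition. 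Finally, the Roof-addition scheme always introduces exactly two new roofs glued at a fresh hinge, a local signature absent from the listed graphs (no degree-$4$ vertex of the required adjacency pattern is present). The main obstacle I anticipate is making this last case analysis tight for item 4, where the richer combinatorics of four interlocking rings requires carefully ruling out every sequence of scheme applications rather than just a single one; I would handle this by an invariant argument based on the block-cut tree of the graph, which is preserved in a controlled manner by every prior scheme but violated by the construction of item 4.
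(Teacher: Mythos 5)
Your core argument matches the paper's exactly: the paper also proves this Observation by the rank-sandwich technique, certifying independence via a symbolic rank computation with indeterminate coordinates (Maple), noting that numerical distortion can only decrease rank, and treating the rank upper bound (2-thin cover argument) as ``clear.'' So far so good.

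Where you diverge is in ambition, not correctness of the shared part: the paper's actual proof paragraph addresses only the implied-non-edge claim and silently delegates nucleation-freeness and non-encompassment to inspection and to the figure captions. You try to prove both. The nucleation-freeness check by inspecting each building block and the hinge interfaces is fine, and is what the authors implicitly take for granted. But your non-encompassment case analysis for items 3 and 4 leans on a block-cut tree invariant that does not actually separate the cases. The standard block-cut tree records cut vertices and biconnected components; 2-sums and 3-sums of 2-connected graphs, Henneberg moves, and vertex splits on 2-connected graphs all output 2-connected graphs, so the result is a single block and the block-cut tree is trivially a point in every case. A body-sharing ring is also 2-connected, hence also a single block. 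The invariant therefore cannot distinguish them. What you probably intend is a ``tree of bodies'' in some body-hinge or covering-subgraph decomposition, but that is a different object, and you would need a separate argument that the paper's operations preserve acyclicity of that decomposition, which is neither obvious nor claimed anywhere in the paper. Until that is nailed down, the claim that the Standard-scheme cannot produce items 3 and 4 is not established. Your remaining exclusions (Flex-sign requires origami-type realizability, Henneberg Extender only yields one isomorphism type of $K_6$ minus a matching from a ring of roofs, roof-addition introduces a specific local pattern of two roofs) are plausible and in the spirit of the paper's remarks, though the paper itself never turns them into a proof.
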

\begin{proof}
We use the rank-sandwich proof technique. The rank upper bound argument for all examples is clear. The independence of those graphs is verified by symbolically computing the
generic rank of a  rigidity matrix of indeterminates for frameworks
corresponding to the example graphs (for example, using Maple). As mentioned earlier, we do not have to worry about numerical rounding issues distorting the rank
computation since such distortions can only decrease the rank.
\end{proof} 

\medskip\noindent
To extend Observation \ref{obs:other} to arbitrary sized rings, again, the rank upper bound is straightforward. To show independence, we would require proving a generalized version of roof-addition that permits adding general polyhedra.

\medskip\noindent
Next, we show how we can inductively apply Theorem \ref{thm:sums} and Theorem \ref{thm:secondInductiveApproach} to generate arbitrarily large nucleation-free, independent graphs with implied non-edges.

\begin{theorem}\label{thm:starting}
If a graph $H$ satisfies the starting graph condition for Theorem \ref{thm:secondInductiveApproach}, then applying $k$-sum or Henneberg-I constructions as in Theorem \ref{thm:sums} on $H$ gives a starting graph for Theorem \ref{thm:secondInductiveApproach}. If we apply roof-addition on $H$ according to Theorem \ref{thm:secondInductiveApproach}, we will get an output graph $G$ that again satisfies the starting graph condition for Theorem \ref{thm:secondInductiveApproach}.
\end{theorem}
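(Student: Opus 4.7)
The plan is to verify, for each of the three operations named in the theorem (roof-addition, $k$-sum, and Henneberg-I), that every ingredient of the starting graph condition for Theorem \ref{thm:secondInductiveApproach} is inherited by the output graph. The ingredients are: independence of the graph, the existence of a 2-thin cover $\mathcal{X}$ with independent shared part $(V,S(\mathcal{X}))$, the tight rank inclusion-exclusion equality, the existence of a shared non-edge $\{a,b\} \in S(\mathcal{X})$ for which a covering-preserving graph cutting can be performed, and nucleation-freeness of the result of that cutting.

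For roof-addition, the required cover $\mathcal{X}^\prime$ of $G$ is essentially already exhibited inside the proof of Theorem \ref{thm:secondInductiveApproach}: one retains the preserved non-trivial covering subgraphs of $\mathcal{X}$, adds the two new roofs $R_1, R_2$, and converts the non-preserved subgraphs into trivial (edge) covering subgraphs. The tight rank IE equality for $\mathcal{X}^\prime$ is exactly what that proof derives. Independence of $(V(G), S(\mathcal{X}^\prime))$ follows from independence of $(V(H), S(\mathcal{X}))$ because $S(\mathcal{X}^\prime)$ arises from $S(\mathcal{X})$ by splitting $\{a,b\}$ into $\{a_1,b_1\}, \{a_2,b_2\}$ and adjoining the pair $\{w,t\}$ shared by $R_1, R_2$, and these three new pairs form a disjoint matching grafted onto the existing independent shared part. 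For the next iteration, any of the three new shared non-edges can serve as the cutting non-edge; the covering-preserving edge distribution and the fact that each roof itself is nucleation-free confine the local modification, so a future cut can be chosen to be nucleation-free.

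For $k$-sum, the natural cover is $\mathcal{X} := \mathcal{X}_1 \cup \mathcal{X}_2$ on the amalgamated vertex set. 2-thinness is immediate for $k \le 2$, and for $k=3$ it follows from the hypothesis of Theorem \ref{thm:sums} (Type I) that the shared $K_3$ is not part of any $K_4$ in either $H_1$ or $H_2$. The shared part decomposes as the disjoint union $S(\mathcal{X}_1) \sqcup S(\mathcal{X}_2)$ and remains independent. The rank IE equality is obtained by summing the two equalities, since the identified $K_k$ contributes identically to both sides. Either of the non-edges $\{a_i,b_i\} \in S(\mathcal{X}_i)$ serves as the cutting non-edge in the $k$-sum, and nucleation-freeness after cutting is inherited from that side, since the other side is untouched by the cut. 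For Henneberg-I, one enlarges $\mathcal{X}_H$ by adjoining the three new edges $(v,w_i)$ as trivial covering subgraphs, leaving 2-thinness and the shared part unchanged; the rank IE equation grows by three on each side. The same $\{a,b\}$ works for the next cut, and Theorem \ref{thm:sums} (Type II) guarantees nucleation-freeness provided the $w_i$'s are not jointly contained in a $K_4$.

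The main obstacle will be the $k=3$ case, where one must additionally argue that after the next cut no rigid induced subgraph arises that uses the identified $K_3$, and that the non-$K_4$ precondition propagates through iterated applications. For roof-addition, the analogous obstacle is confirming that a cut non-edge and edge distribution in $G$ can always be chosen so that all covering subgraphs containing shared non-edges are preserved and the result is nucleation-free; this is a combinatorial case analysis that leans directly on the nucleation-freeness already established by Theorem \ref{thm:secondInductiveApproach}.
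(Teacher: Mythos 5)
Your overall plan --- check that each ingredient of the starting-graph condition (independence, a 2-thin cover with independent shared part, the tight rank IE equality, and the availability of a covering-preserving, nucleation-free cut) survives each operation --- is the same as the paper's, and your treatments of roof-addition and Henneberg-I are essentially what the paper does. (The paper handles Henneberg-I with a two-case split, extending a covering subgraph when the three attachment vertices lie in one, versus adjoining three trivial edge covers otherwise; you always do the latter, which also works.)

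The $k$-sum part has a genuine gap. You take $\mathcal{X} := \mathcal{X}_1 \cup \mathcal{X}_2$, which presupposes that the second summand $G_1$ comes equipped with a cover $\mathcal{X}_2$ satisfying the rank IE equality --- i.e.\ that $G_1$ is itself a starting graph. But under Theorem \ref{thm:sums} (Type I), $G_1$ is only assumed nucleation-free, independent, with implied non-edges. The paper avoids this by leaving the cover $\mathcal{X}$ of $H$ intact and adjoining the edges of $G_1$ outside the identified $K_k$ as trivial covering subgraphs; a trivial edge cover of an independent graph always satisfies the rank IE equality with empty shared part, so no extra hypothesis on $G_1$ is needed. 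This also cures your $k=3$ 2-thinness argument, which does not hold as you state it: a covering subgraph $X_i \in \mathcal{X}_1$ can contain all three vertices of the identified $K_3$ without that $K_3$ lying in a $K_4$, so the ``no $K_4$'' hypothesis does not rule out $|X_i \cap X_j| = 3$ for some $X_j \in \mathcal{X}_2$. Using only trivial edge covers from $G_1$ makes 2-thinness automatic (each added edge meets any $X_i$ in at most two vertices), and the $K_k$ edges are already covered by $\mathcal{X}$ and must not be re-added, since that would violate pairwise incomparability of the cover.
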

\begin{proof} First, for $K$-sums on a starting graph $H$ and a nucleation-free, independent graph $G_1$, then resulting graph $G$ is independent and nucleation-free from Theorem \ref{thm:sums}. We can take the cover $\X$ of $H$ and extend it to a cover of $G$ by adding all pairs of vertices of $G_1$ to $\X$. It is easy to check that if $\X$ satisfies the rank bound condition on $H$, the new cover satisfies the rank bound condition on $G$.

Second, for Henneberg-I constructions on starting graph $H$, there are two cases. (1) The newly added vertex $u$ has all three neighbors inside one covering subgraph $C_i$. In this case, we can extend $C_i$ by adding $u$ to it and maintain all other covering subgraphs in the cover. The rank bound condition certainly holds in this case. (2) The three neighbors of $u$ are in different covering subgraphs. In this case, we can again take the cover $\X$ of $H$ and extend it to a cover of $G$ by adding three covering subgraphs that are three edges incident to $u$. It is easy to check that if $\X$ satisfies the rank bound condition on $H$, the new cover satisfies the rank bound condition on $G$.

Third, from Theorem \ref{thm:inductive}, we know apply roof-addition on starting graph $H$ maintains independence. It is obvious that nucleation-freeness is also maintained. Moreover, it is easy to check that applying roof-addition according to Theorem \ref{thm:secondInductiveApproach} maintains all other properties required for starting graphs.

Last, for all above schemes, it is clear that the cutting and distribution specified in Theorem \ref{thm:secondInductiveApproach} can be applied on any implied non-edge.
\end{proof}

While Theorem \ref{thm:starting} gives us a sufficient condition for starting graphs for Theorem \ref{thm:secondInductiveApproach}, it is not necessary. 

In this section, we showed an inductive method to construct independent graphs in Theorem \ref{thm:inductive} and illustrated how we can inductively apply Theorem \ref{thm:sums} and Theorem \ref{thm:secondInductiveApproach} to generate arbitrarily large nucleation-free, independent graphs with implied non-edges. Next we will turn to a construction scheme for nucleation-free dependent graphs, which uses the implied non-edges in the above constructed nucleation-free, independent graphs.

\section{Dependent Graphs with No Nucleus and Further Consequences}\label{sec:dependent}

In this section, we introduce a construction scheme (Scheme \ref{scheme:depenedent}) for nucleation-free dependent graphs. An illustration using rings of roofs is given in Fig.~\ref{fig:doubleRing}.

\begin{figure}[!htbp]
\centering
\scalebox{0.4}[0.35]{
\begin{tabular}{c}
\includegraphics[width=1.2\textwidth]{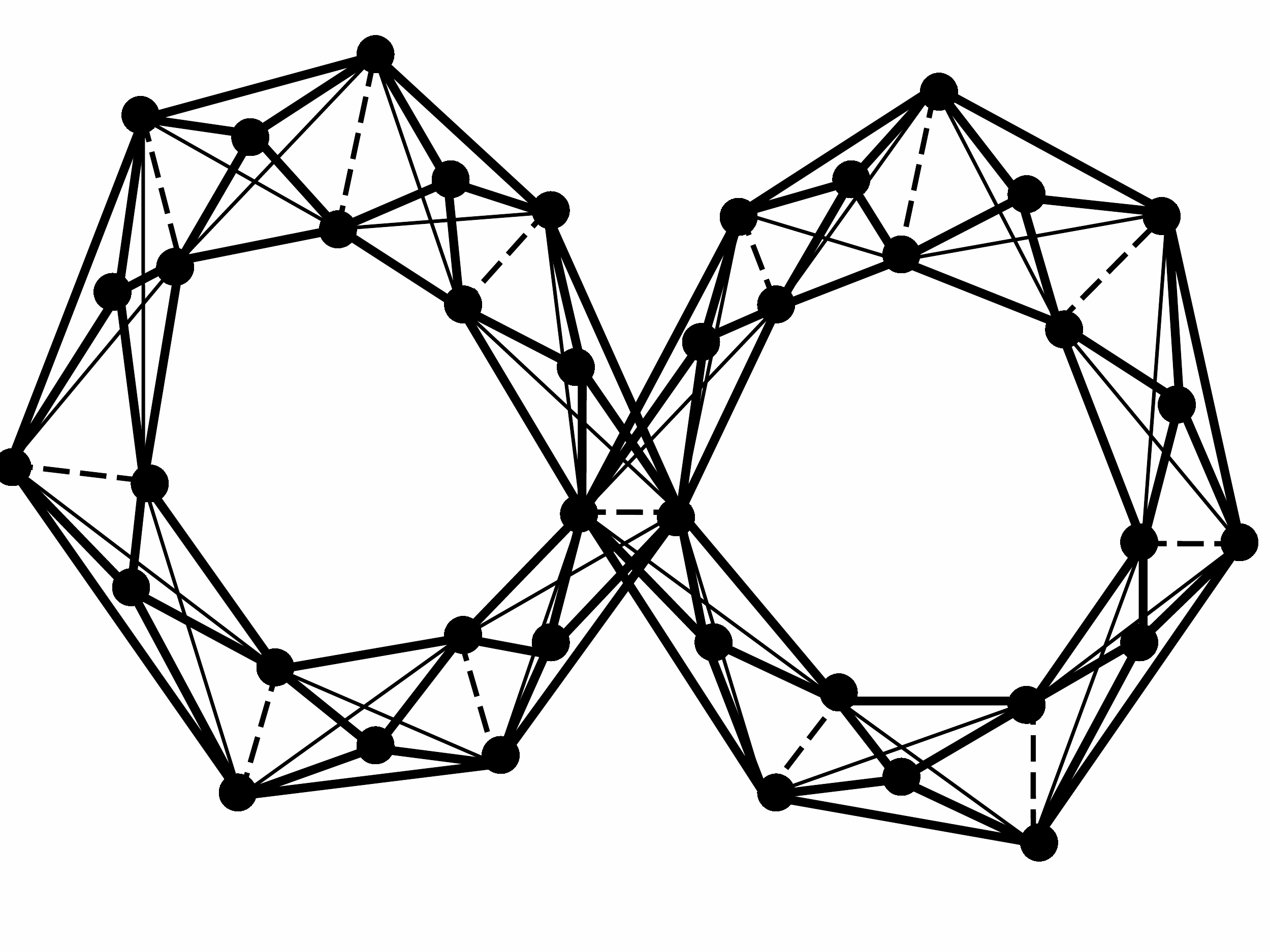}
\includegraphics[width=1.2\textwidth]{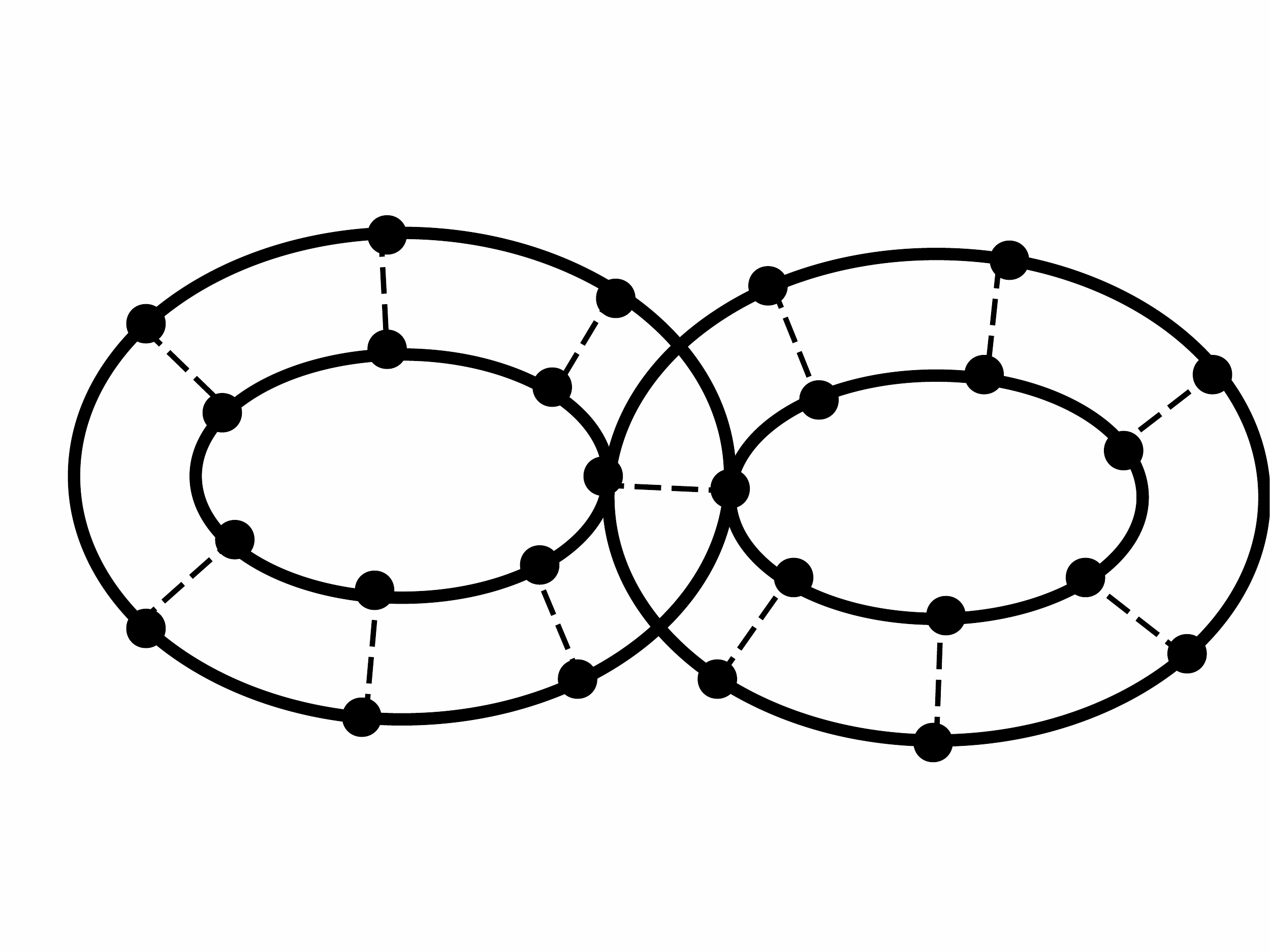}
\end{tabular}}
\caption{A double-ring of $14$ roofs consisting of two rings of $7$ roofs. The two rings share a common non-edge. The figure on the left shows the double-ring of roofs and the figure on the right shows the schematic of any pair of nucleation-free, independent graph with a shared implied non-edge. Since each of the two parts implies the shared hinge non-edge, as in the double-banana example in Fig.~\ref{fig:doubleBanana}, the shared hinge non-edge is double-implied and hence the composite of the two parts is dependent.}\label{fig:doubleRing}
\end{figure}

\begin{scheme}[Graph-combination]\label{scheme:depenedent}\hfill

{\bf Input graphs:} $G_1$ with implied non-edge $\{a_1, b_1\}$ and $G_2$ with implied non-edge $\{a_2, b_2\}$

\medskip
{\bf Output graph:} Graph $G$ after joining $G_1$ and $G_2$ by identifying $a_1$ with $a_2$ and $b_1$ with $b_2$.
\end{scheme}

\begin{theorem}\label{thm:joinIndep}
For graph-combination scheme (Scheme \ref{scheme:depenedent}), if $G_1$ and $G_2$ are both nucleation-free, then $G$ is nucleation-free dependent. If, in addition, $G_1\cup$ $\{a_1, b_1\}$ and $G_2$ $\cup$ $\{a_2, b_2\}$ are both rigidity circuits, then the output graph $G$ is a rigidity circuit.
\end{theorem}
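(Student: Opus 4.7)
The statement splits into three claims, which I would prove in order: (a) $G$ is nucleation-free, (b) $G$ is dependent, and (c) under the circuit hypothesis, $G$ is itself a rigidity circuit. For (a), suppose $N\subseteq G$ were a rigid induced subgraph on more than four vertices. Since $V(G_1)\cap V(G_2)=\{a,b\}$, either $V(N)\subseteq V(G_i)$ for some $i$, which contradicts nucleation-freeness of $G_i$, or $V(N)$ meets both sides, in which case $V(N)\cap\{a,b\}$ is a vertex cut of $N$ of size at most two. A graph with such a small separator is generically flexible in $\mathbb{R}^{3}$ (a hinge rotation around the two cut vertices, or a rotation around the single cut vertex), contradicting the assumed rigidity of $N$.

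For (b) I would just exhibit a nontrivial row relation. Because $\{a,b\}$ is implied by both $G_1$ and $G_2$, there exist coefficients (not all zero in either family) with $r_{ab}=-\sum_{e\in E(G_1)}\alpha_e r_e=-\sum_{e\in E(G_2)}\gamma_e r_e$; subtracting gives $\sum_{e\in E(G_1)}\alpha_e r_e-\sum_{e\in E(G_2)}\gamma_e r_e=0$, a nontrivial linear relation among rows of the rigidity matrix of $G$.

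The main effort is (c). Given any row relation $\sum_{e\in E(G_1)}\lambda_e r_e+\sum_{e\in E(G_2)}\mu_e r_e=0$, set $v_1:=\sum_{e\in E(G_1)}\lambda_e r_e$ and $v_2:=\sum_{e\in E(G_2)}\mu_e r_e$, so $v_1=-v_2$; since the column supports of $v_1$ and $v_2$ overlap only at $a,b$, both vectors lie in the $6$-dimensional subspace $V_{ab}$ of vectors supported only on the columns of $a$ and $b$. My approach hinges on the following lemma: \emph{at a generic framework with $\{a,b\}$ implied by $G_i$, the intersection $\mathrm{rowspan}(R_i)\cap V_{ab}$ is exactly one-dimensional and equals $\mathrm{span}(r_{ab})$.} I would prove this by duality: the intersection is the annihilator of the image of the evaluation map $\pi:\ker R_i\to\mathbb{R}^{6}$ given by $m\mapsto(m_a,m_b)$. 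Because $\{a,b\}$ is implied, every infinitesimal motion of $G_i$ preserves $|p_a-p_b|$, forcing $\mathrm{image}(\pi)\subseteq H:=\{(x,y):(x-y)\cdot(p_a-p_b)=0\}$, a $5$-dimensional hyperplane; conversely, for generic $p_a\neq p_b$ the six trivial rigid motions of $\mathbb{R}^{3}$ already surject onto $H$, so $\mathrm{image}(\pi)=H$ and its annihilator is the one-dimensional line through $r_{ab}$.

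Once the lemma is in hand, $v_1=c\,r_{ab}$ and $v_2=-c\,r_{ab}$ for some scalar $c$. If $c=0$ the relation restricts to a dependence inside each $G_i$, which is impossible because each $G_i=C_i\setminus\{(a,b)\}$ is independent. If $c\neq 0$, then $\sum\lambda_e r_e=c\,r_{ab}$ is a dependence of the circuit $C_1$; circuit uniqueness makes $(\lambda_e)$ a nonzero scalar multiple of $(\alpha_e)$, in particular $\lambda_e\neq 0$ for every $e\in E(G_1)$, and the symmetric argument handles $\mu_e$. Hence every nontrivial row relation of $G$ is a scalar multiple of the one from (b) and has nonzero coefficient on every edge, so $G$ is a rigidity circuit. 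I expect the lemma above to be the main obstacle, since it is what rules out ``extra'' stress-like relations concentrated at the gluing pair $\{a,b\}$---exactly the configurations that would otherwise let $G$ contain a proper subcircuit.
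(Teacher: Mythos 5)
Your proof is correct, but it takes a genuinely different route from the paper's, so it is worth contrasting the two. The paper's argument for minimality is matroid-theoretic: drop an arbitrary edge $e_1$ (WLOG from $G_1$), then swap a second edge $e_2 \in E(G_2)$ for the implied non-edge $(a,b)$ — which leaves the row span unchanged because $C_2$ is a circuit — and observe that the resulting graph is a $2$-sum of $C_1 \setminus \{e_1\}$ with $C_2 \setminus \{e_2\}$, both independent; independence then follows from the paper's earlier result that $2$-sums of independent graphs are independent. Your argument instead computes the stress space of $G$ directly and shows it is one-dimensional with full support. The crux is your lemma that $\mathrm{rowspan}(R_i)\cap V_{ab}=\mathrm{span}(r_{ab})$; your dual computation — bounding $\pi(\ker R_i)$ between the restriction of the $6$-dimensional trivial-motion space (which already fills the $5$-dimensional hyperplane $H$) and $H$ itself — is sound, as is the case analysis on the scalar $c$ that uses independence of $G_i = C_i \setminus (a,b)$ and circuit-uniqueness of the stress on each $C_i$. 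Your approach has the advantage of being self-contained (it does not lean on the $2$-sum lemma, whose proof in the paper is terse) and of making transparent exactly why no ``extra'' dependence concentrated at the gluing pair can arise. You also supply an explicit argument for nucleation-freeness (a $\le 2$-vertex separator of $N$ forces a hinge flex), which the paper's proof leaves implicit. One small side remark: the paper's opening sentence assumes each $G_i$ ``has $1$ dof,'' which is not part of the theorem's hypotheses and is not needed; your version of the dependence argument avoids that.
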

\begin{proof}(of Theorem \ref{thm:joinIndep}). We know both $G_1$ and $G_2$ imply the shared non-edge and they each have $1$ dof. After glueing $G_1$ and $G_2$, there must be a dependence in the output graph $G$. 

If in addition $G_1\cup$ $\{a_1, b_1\}$ and $G_2$ $\cup$ $\{a_2, b_2\}$ are both rigidity circuits, to show the output graph $G$ is a rigidity circuit, we can drop any edge from $G$ to obtain $G^\prime$ and show that $G^\prime$ is independent. Without loss of generality, we drop an edge $e_1$ from $G_1$ to obtain $G_1^\prime$. Since $G_1\cup$ $\{a_1, b_1\}$ is a rigidity circuit, we know $G_1^\prime$ $\cup$ $\{a_1, b_1\}$ is independent.
Since $G_2$ $\cup$ $\{a_2, b_2\}$ is a rigidity circuit, if we choose an arbitrary edge $e_2$ on $G_2$, then (1) $G_2$ $\setminus$ $\{e_2\}$ $\cup$ $\{a_2, b_2\}$ is independent; and (2) the linear span of the edges in $G_2$ is equal to the linear span of the edges in $G_2$ $\setminus$ $\{e_2\}$ $\cup$ $\{a_2, b_2\}$. It follows that the linear span of $G^\prime$ is equal to the linear span of $G^{\prime\prime}$ $:=G^\prime$ $\setminus$ $\{e_2\}$ $\cup$ $\{a_2, b_2\}$. Moreover, we can easily see that $G^{\prime\prime}$ is a $2$-sum of $G_1^\prime$ $\cup$ $\{a_1, b_1\}$ and $G_2$ $\setminus$ $\{e_2\}$ $\cup$ $\{a_2, b_2\}$, both of which are independent. So we can use a similar argument as in the proof of Theorem \ref{thm:sums} to show that $G^{\prime\prime}$ is independent, which in turn means $G^\prime$ is independent.
Thus it follows that the output graph $G$ for Scheme \ref{scheme:depenedent} is minimally dependent.
\end{proof} 



\begin{figure}[!htbp]
\centering
\scalebox{0.4}[0.35]{
\begin{tabular}{c}
\includegraphics[width=1.2\textwidth]{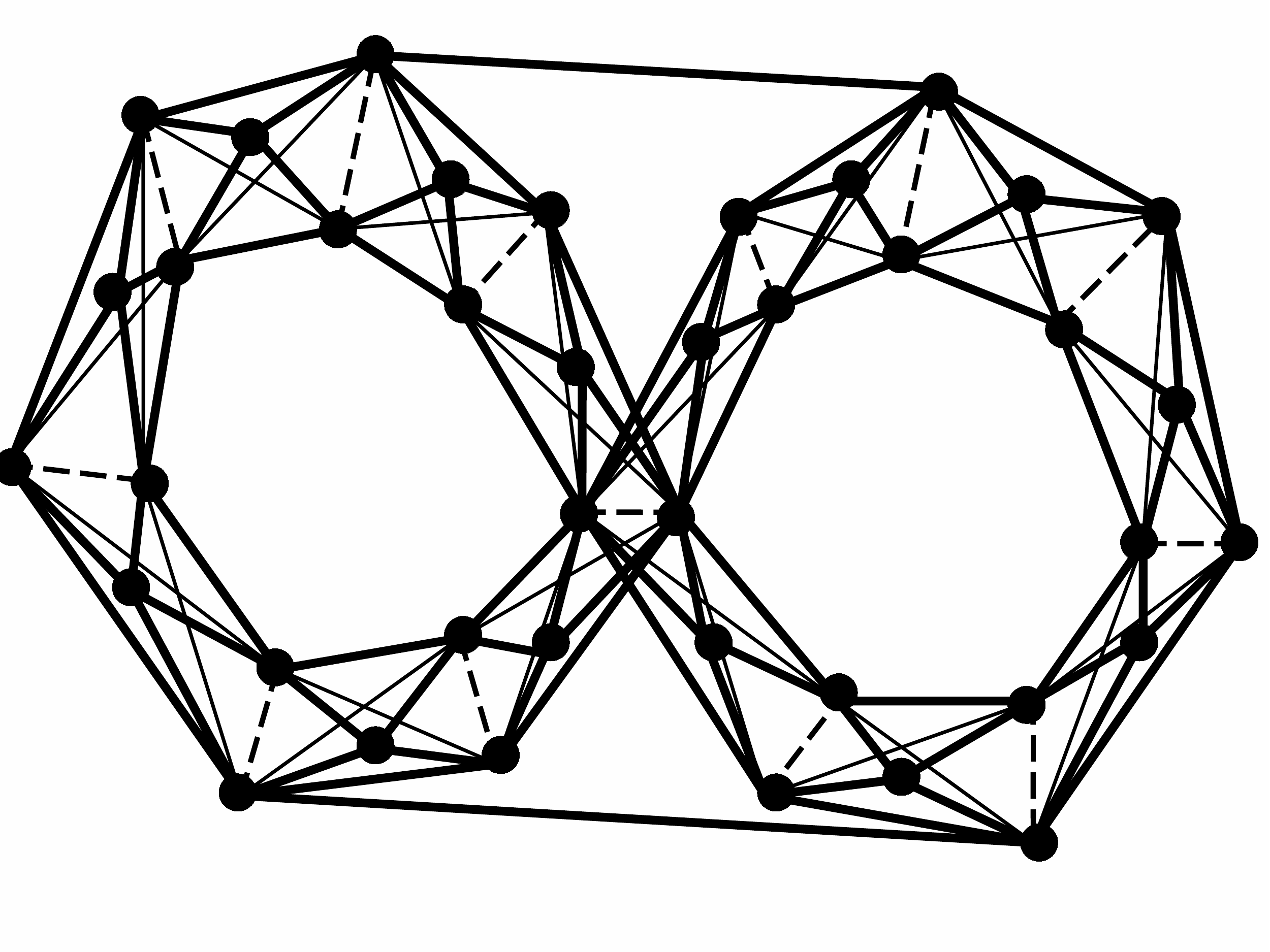}
\includegraphics[width=1.2\textwidth]{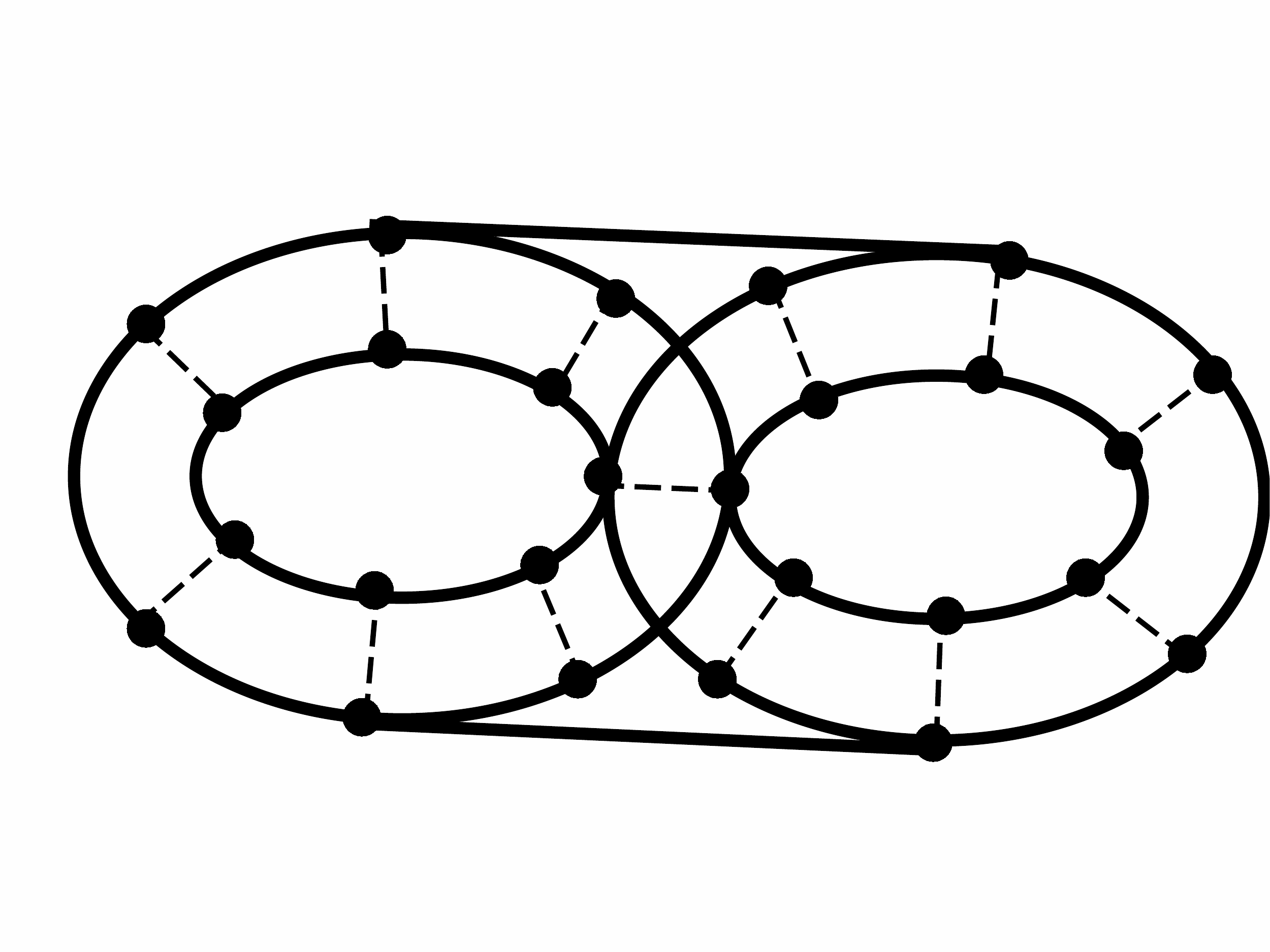}
\end{tabular}}
\caption{A braced double-ring of $14$ roofs: it consists of two rings of $7$ roofs and two extra bars. The two rings in the graph are connected via a common non-edge. The figure on the left shows the geometric structure of the braced double-ring and the figure on the right shows the schematic. The braced double ring has enough edges to be minimally rigid but is in fact flexible from Theorem \ref{thm:joinIndep}.}\label{fig:doubleRingBar}
\end{figure}

Sitharam and Zhou \cite{sitharam:zhou:tractableADG:2004} gave several examples where Maxwell's counting condition was insufficient for rigidity, all of which satisfy the nucleation property. Using a combinatorial notion capturing the recursive nature of nucleation (called {\it module-rigidity}), they propose an algorithm which is a tractable characterization of generic independence in a large class of such graphs by using the presence of rigid nuclei. 
It has been an open problem whether this algorithm can fail to detect 3D independence and rigidity, i.e. whether module-rigidity coincides or not with 3D rigidity. The following corollary settles the question (in the negative), i.e., there are graphs that are not rigid but are module-rigid.
\begin{corollary}\label{cor2}
The flexible braced double-ring in Fig.~\ref{fig:doubleRingBar} is {\em module-rigid} according to the definition in \cite{sitharam:zhou:tractableADG:2004}. Therefore, module-rigidity does not coincide with 3D rigidity.
\end{corollary}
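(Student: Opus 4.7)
The plan is to verify two properties of the braced double-ring $G$ of Fig.~\ref{fig:doubleRingBar}: that it is module-rigid in the sense of \cite{sitharam:zhou:tractableADG:2004}, and that (as already prepared by Theorem \ref{thm:joinIndep}) it is flexible in $3D$. The contradiction between these two is then the corollary.

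For module-rigidity, the key observation is that $G$ is nucleation-free: each constituent ring of $7$ roofs is nucleation-free by construction in Section \ref{sec:first}, and identifying the two shared vertices together with the two cross-ring braces produces no induced rigid subgraph on at least five vertices (any candidate rigid subgraph meeting both rings must pass through the narrow interface of two shared vertices plus the brace endpoints, leaving too few constraints). Hence the recursive Sitharam--Zhou algorithm collapses to the simple conjunction of Maxwell-tightness and $(3,6)$-sparsity. Counting is easy: each ring has $|V|=21$, $|E|=56$, so the unbraced double-ring $D$ has $|V(D)|=40$, $|E(D)|=112$, and adding the two braces gives $|E(G)|=114=3\cdot 40-6$, which is Maxwell-tight.

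For $(3,6)$-sparsity, I would decompose any induced subgraph as $V'=V_1'\cup V_2'$ with $V_i'\subseteq V_i$ (the vertex set of ring $i$) and let $b\in\{0,1,2\}$ denote the number of braces whose two endpoints both lie in $V'$. A direct computation gives
\[
3|V'|-6-|E(V')| \;=\; \mathrm{def}_1(V_1')+\mathrm{def}_2(V_2')+6-3|V_1'\cap V_2'|-b,
\]
where $\mathrm{def}_i(V_i'):=3|V_i'|-6-|E(V_i')|$. Since each ring of $7$ roofs is independent by Theorem \ref{r7}, the induced subgraph on $V_i'$ is independent, so $\mathrm{def}_i\ge 0$, and the full ring has $\mathrm{def}_i=1$. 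I would then case-split on $|V_1'\cap V_2'|\in\{0,1,2\}$: the cases $0$ and $1$ leave a slack of at least $6-3-b\ge 1$; the remaining case $|V_1'\cap V_2'|=2$ reduces to $\mathrm{def}_1+\mathrm{def}_2\ge b$, which I would verify using the positioning of the braces in Fig.~\ref{fig:doubleRingBar} (both braces join distinct rings rather than sit inside a single ring, so whenever $b=2$ both $V_1'=V_1$ and $V_2'=V_2$, giving $\mathrm{def}_1+\mathrm{def}_2=2=b$).

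For flexibility, Theorem \ref{thm:joinIndep} applied to the two nucleation-free rings meeting in an implied non-edge gives that $D$ is dependent, so $\mathrm{rank}(D)\le |E(D)|-1=111$; appending two braces raises the rank by at most $2$, so $\mathrm{rank}(G)\le 113<114=3|V(G)|-6$, and $G$ admits a nontrivial infinitesimal motion. Combining the two conclusions yields a graph certified rigid by the module-rigidity algorithm that is nevertheless flexible in $3D$, which is the desired separation. The main obstacle is the delicate sparsity case $|V_1'\cap V_2'|=2$ with $b=2$, where the inequality $\mathrm{def}_1+\mathrm{def}_2\ge b$ becomes an equality and relies crucially on the cross-ring placement of the two braces.
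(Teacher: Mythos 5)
Your overall route matches the paper's: argue that the braced double-ring is nucleation-free and Maxwell-tight, so the Sitharam--Zhou algorithm collapses to the $(3,6)$-count and declares it module-rigid, while Theorem \ref{thm:joinIndep} gives dependence of the underlying double-ring and hence flexibility of the braced graph. Your vertex/edge counts ($|V|=40$, $|E|=114=3\cdot 40-6$) are correct, and your flexibility bound $\mathrm{rank}(G)\le \mathrm{rank}(D)+2\le 113<114$ is the same argument the paper makes slightly less explicitly. The paper itself never actually verifies $(3,6)$-sparsity; it simply asserts the graph ``will be declared module-rigid,'' so you are attempting to be more rigorous than the paper here.

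However, the step you flag as the crux contains a genuine error: the claim ``whenever $b=2$ both $V_1'=V_1$ and $V_2'=V_2$'' does not follow from the braces being cross-ring. If $b=2$ and $|V_1'\cap V_2'|=2$, then $V'$ must contain $\{a,b\}$ together with all four brace endpoints, but nothing forces $V'$ to be the entire vertex set --- e.g.\ $V'$ could be exactly those six vertices, with $V_1'$ and $V_2'$ each of size four. What actually rescues the inequality $\mathrm{def}_1+\mathrm{def}_2\ge b$ is a different observation: since $\{a,b\}$ is a \emph{non-edge} in each ring, and each ring has no $(3,6)$-tight induced subgraph with $\{a,b\}$ as a non-edge on it (equivalently, ring $i$ together with the edge $\{a,b\}$ is still $(3,6)$-sparse), every $V_i'\supseteq\{a,b\}$ with $|V_i'|\ge 3$ satisfies $\mathrm{def}_i(V_i')\ge 1$. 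Hence $\mathrm{def}_1+\mathrm{def}_2\ge 2\ge b$ whenever $|V_1'\cap V_2'|=2$, independent of whether $V_i'$ is proper in $V_i$. This repairs your case analysis, but as written the proposal's justification of the critical case fails.
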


\begin{proof}(of Corollary \ref{cor2}). When a graph has no nucleation, Sitharam and Zhou's \cite{sitharam:zhou:tractableADG:2004}
algorithm reduces to $(3, 6)$-sparsity check. For a braced double-ring, the graph will be declared module-rigid. On the other hand, from Theorem \ref{thm:joinIndep}, we know a double-ring is dependent, thus the braced double-ring is also dependent. Together with the fact that the braced double-ring has minimum number of edges to be rigid, we know its rank (which is smaller than its size due to dependence) cannot be enough to be rigid.
%
\end{proof} 



\section{Conclusions and Open Problems}
\label{sec:conclusions}
In this paper we provided general inductive construction schemes for nucleation-free (independent) graphs with implied non-edges, nucleation-free dependent graphs and nucleation-free circuits.
Besides settling problems posed previously in the literature, this work extends the repertoire of useful examples that elucidate
the obstructions to obtaining combinatorial characterizations of 3D rigidity. We have provided two proof techniques for showing implied non-edges in nucleation-free graphs. 

\medskip\noindent
Table \ref{tab:warmup} sums up the constructions we have presented in this paper, and their associated proof techniques. 

\begin{table}\large

\resizebox{\columnwidth}{!}{
\begin{tabular}{|l|l|l|l|l|l|}
\hline
\multicolumn{2}{|c|}{\backslashbox[3cm]{B}{A} } & \begin{tabular}[c]{@{}c@{}}Flex-sign \\ Ring \end{tabular}   & \begin{tabular}[c]{@{}c@{}}Henneberg extender \\ Ring\end{tabular} & \begin{tabular}[c]{@{}c@{}}Standard- \\ scheme\end{tabular} & Roof-addition\\ \hline
\multicolumn{2}{|c|}{Flex-sign} & Theorem \ref{thm:specialmotion} & $-$ & $-$ &$-$ \\ \hline
\multirow{2}{*}{\begin{tabular}[c]{@{}c@{}}Rank-\\sandwich\end{tabular} } & (i) Independence & $-$ & (i) Theorem \ref{thm:Henneberg} & (i) Theorem \ref{thm:sums} & (i) Theorem \ref{thm:inductive}\\ 
& (ii) Rank upper-bound & $-$& (ii) C & (ii) Not necessary  & (ii) C: Theorem \ref{thm:secondInductiveApproach} \\ \hline
  \end{tabular}
}
\caption{Construction schemes for nucleation-free, independent graphs with implied non-edges. Here, A denotes construction schemes,  B denotes the proof techniques and C denotes 2-thin cover/body-hinge argument.}\label{tab:warmup}
\end{table}

One open problem to extend the application of the first proof technique is to find other graphs that satisfy the expansion/contraction property. Another interesting open problem  to extend the application of our rank-sandwich proof technique is to find other construction schemes for independent graphs and other techniques besides 2-thin cover argument/body-hinge argument to prove rank upper bounds.

To complete our understanding of nucleation-free graphs with implied non-edges, the next step is to study examples extending those in Observation \ref{obs:different} that cannot be obtained by any of our construction schemes. 

Another interesting open problem is to extend our inductive construction for independent graphs to an inductive construction for {\em isostatic} graphs (independent and minimally rigid). In order to do that, we need to add two more edges in the roof pasting step. One possible way is to add two more edges $(c^\prime, a_1)$, $(c, b_2)$ (or $(c^\prime, b_1)$, $(c, a_2))$. We note that our current method to show independence in Theorem \ref{thm:inductive} would fail, since Claim \ref{clm:nonzero} fails in that there is a non-zero stress on the added part. However, if we can show that if there is a generic circuit in the new graph $G$, then there is one $\{(w_1, t_1), \ldots (w_n, t_n)\}$ that remains a dependence for the non-generic position $\p$ used in Theorem \ref{thm:inductive}, i.e., there exists non-zero stresses $\{s_1, \ldots, s_n\}$ s.t. $\sum_i s_i (\p(w_i)-\p(t_i))$ $=$ $0$, then we can simply use Claim \ref{clm:stressA} to complete our proof.

\begin{appendix}
\section{Issues in Tay's paper \cite{taybar:1993}}\label{sec:Taycounter}
As mentioned previously in Section \ref{sec:introduction}, nucleation-free rigidity circuits with implied non-edges have been conjectured and written down by many (\cite{taybar:1993}, \cite{JacksonJordanrank:2006}). However, to the best of our knowledge, we are the first to give proofs. In particular, in \cite{taybar:1993}, Tay claimed a class of flexible rigidity circuits without any nuclei. One of his examples, {\em $n$-butterflies}, in which he claimed existence of implied non-edges, is the same as our warm-up example graphs, {\em ring of roofs}. In \cite{taybar:1993}, Tay presented a proof of the independence of ring of roofs. His proof is based on the Proposition 4.6 in the paper, which shows the existence of implied non-edges in the ring. However, the argument he made in the proof is imprecise at best. In Proposition 4.6, he first took a chain of graphs $G_1, G_2, \ldots G_n$ which is known to be a 3D rigidity circuit as a whole, and then he closed the chain and subsequently removed the two joining edges $(p, q)$ and $(r, s)$. He stated, if the stresses ($\lambda_{pq}$ and $\lambda_{rs}$) on those two edges cancel out, ``one can keep $\lambda_{pq}$ fixed and change the value of $\lambda_{rs}$ by changing the position of either $a_1$ or $b_1$'' ($a_1$, $b_1$ are the two vertices shared by the first two subgraphs $G_1$ and $G_2$ of the chain).

However, he did not mention how to change the position of $a_1$ or $b_1$. Thus one can find some example that a small change of the position would not affect the stresses on $pq$ or $rs$. For example, in Fig.~\ref{fig:counterTay1}, we have a chain of graphs where $G_1$ happens to be the union of the rest of the subgraphs in the chain. The whole graph is realized in a position where it is symmetric along the line $a_1b_1$. If we move $a_1$ or $b_1$ along the current line $a_1b_1$, the stresses on $pq$ and $rs$ should always cancel out.
\begin{figure}[!htbp]
\centering
\scalebox{0.4}[0.35]{
\includegraphics{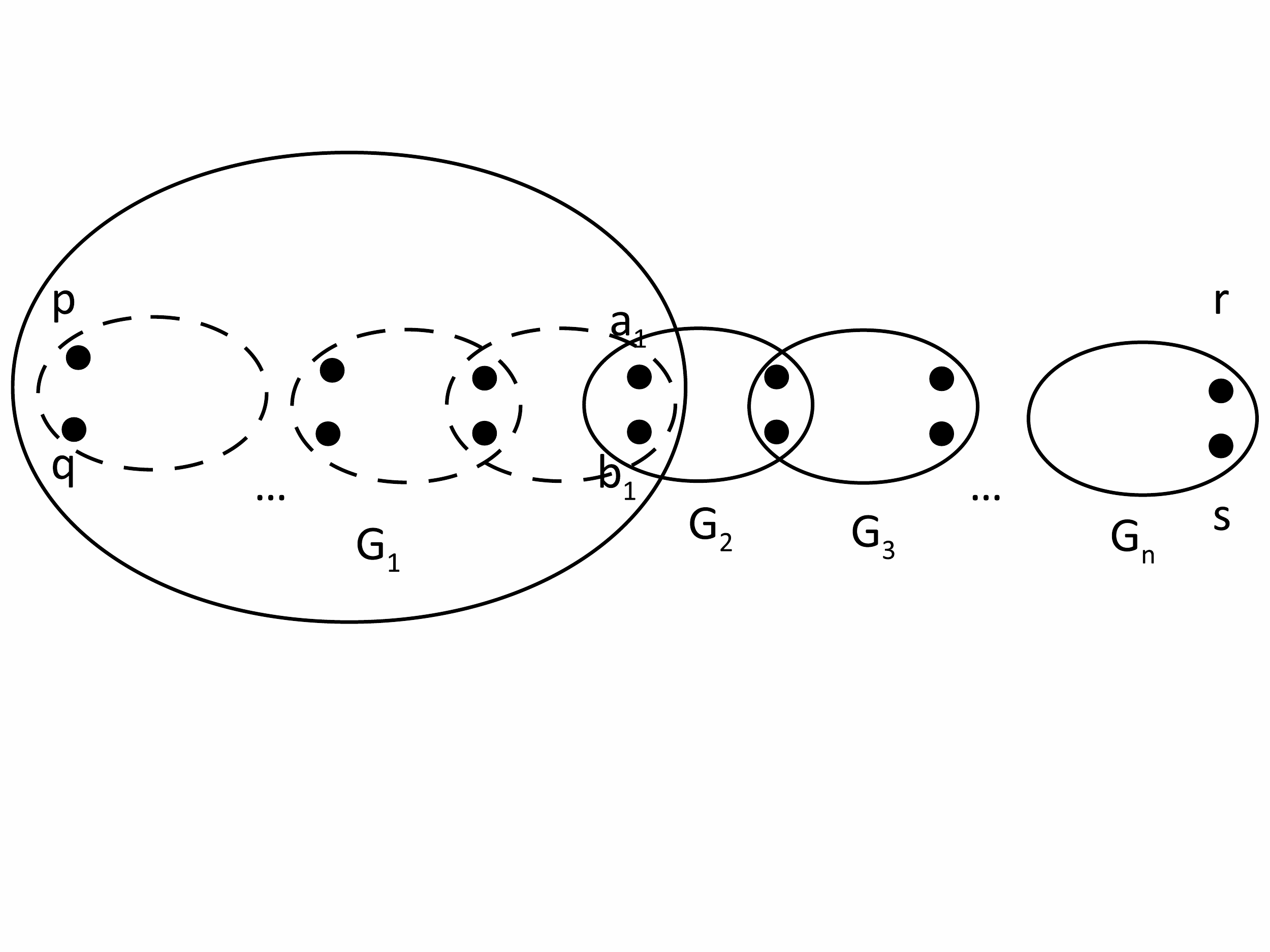}}
\caption{A counterexample to Tay's proof: the first subgraph is the union of the rest and we put them in such a position that the chain is symmetric along $a_1b_1$.}\label{fig:counterTay1}
\end{figure}

Another potential hole is that he did not specify the starting realization of the graph. So it is possible that in some realization, not necessarily generic, of such a chain graph, no matter how one alters the position of $a_1$ or $b_1$, the stresses of $pq$ and $rs$ remains opposite of each one in magnitude. For example, in Fig.~\ref{fig:counterTay2}, $a_1$ and $b_1$ are not adjacent to $p$ or $q$. Then we put vertices that are adjacent to $p$ or $q$ at the same position (indicated as $v$ in Fig.~\ref{fig:counterTay2}). Then we put $(p,v)$ in such a position that the angle between $pv$ and $pq$ is $45$ degrees. Similarly make the angle between $qv$ and $pq$ to be $45$ degrees. Then it is easy to obtain that the stresses on $(p, q)$ is zero. Do the same for $(r, s)$. No matter how we move $a_1$ or $b_1$, the stress, $\lambda_{pq}$, is always zero, and so is $\lambda_{rs}$.

\begin{figure}[!htbp]
\centering
\scalebox{0.4}[0.35]{
\includegraphics{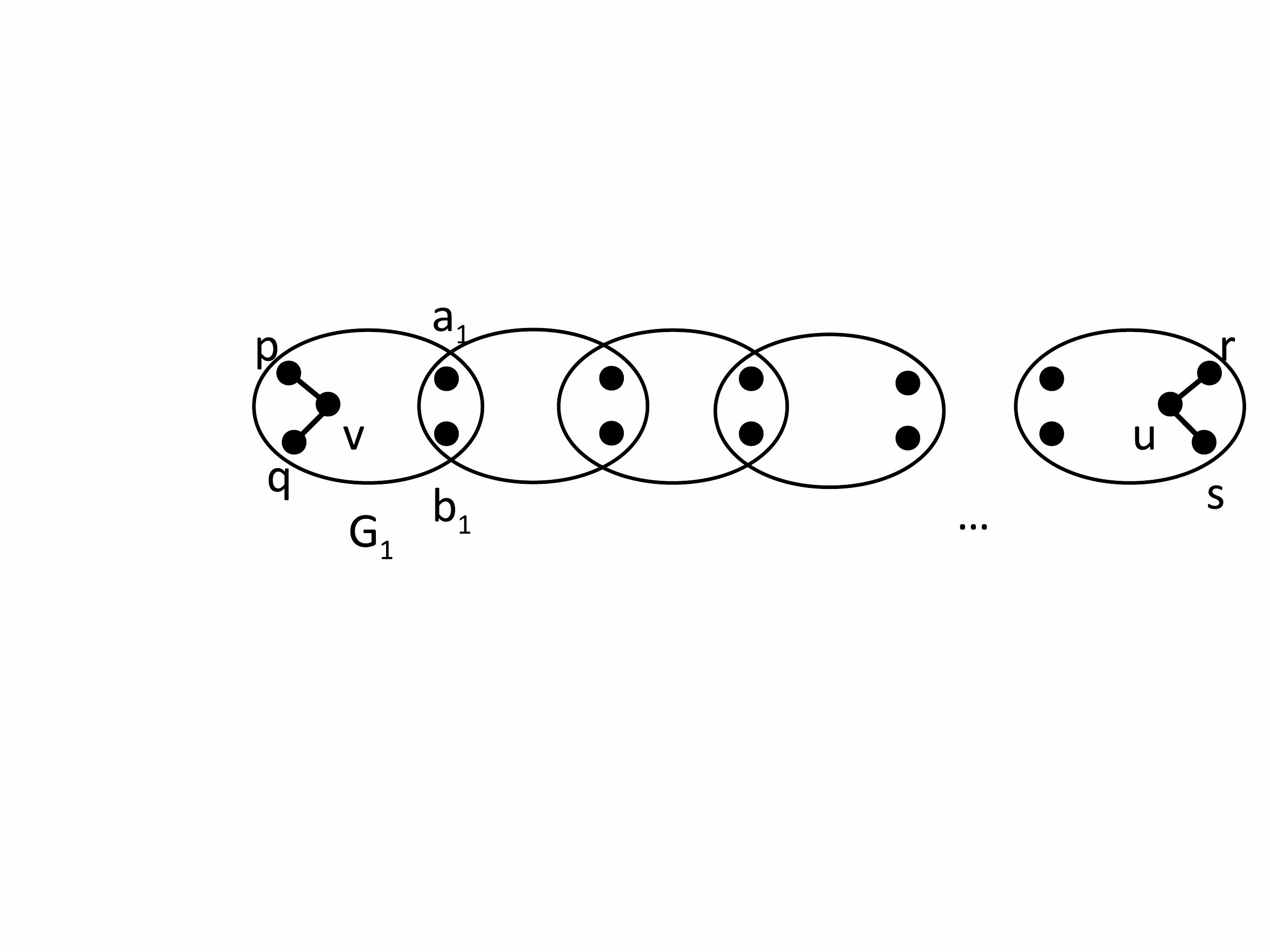}}
\caption{Another counterexample to Tay's proof: we put all vertices adjacent to $p$ or $q$ at the same position as $v$ and make sure $pq, pv,qv$ forms a isosceles right triangle. It is not hard to show that the stress $\lambda_{pq}$ is always zero. We do the same for the neighbors of $r$ and $s$ thus the stresses $\lambda{pq}$ and $\lambda{rs}$ always cancel out.}\label{fig:counterTay2}
\end{figure}

\end{appendix}

\bibliographystyle{plain}
\bibliography{biblio}

\end{document}